\newtheorem{remark}{Remark}
\newtheorem{theorem}{Theorem}
\newtheorem{lemma}{Lemma} 
\newtheorem{corollary}{Corollary}
\newtheorem{definition}{Definition}
\def \cZ {\mathcal{Z}}
\def \bP {\mathbb{P}}
\def \bE {\mathbb{E}}
\def \cM {\mathcal{M}}
\begin{document}

\title{Maximum Likelihood Estimation of Functionals of Discrete Distributions}

\author{Jiantao~Jiao,~\IEEEmembership{Student Member,~IEEE},~Kartik~Venkat,~\IEEEmembership{Student Member,~IEEE},~Yanjun~Han,~\IEEEmembership{Student Member,~IEEE}, and Tsachy~Weissman,~\IEEEmembership{Fellow,~IEEE}

\thanks{Manuscript received Month 00, 0000; revised Month 00, 0000; accepted Month 00, 0000. Date of current version Month 00, 0000. This work was supported in part by the Center for Science of Information (CSoI) under grant agreement CCF-0939370. Materials of this paper were presented in part at the 2015 International Symposium on Information Theory, Hong Kong, China. Personal use of this material is permitted. However, permission to use this material for any other purposes must be obtained from the IEEE by sending a request to pubs-permissions@ieee.org. }
\thanks{Jiantao Jiao, Yanjun Han, and Tsachy Weissman are with the Department of Electrical Engineering, Stanford University, CA, USA. Email: \{jiantao,yjhan, tsachy\}@stanford.edu.} 
\thanks{Kartik Venkat was with the Department of Electrical Engineering, Stanford University. He is currently with PDT Partners. Email: kvenkat@alumni.stanford.edu. }
}

\date{\today}

\vspace{-10pt}

\maketitle

\begin{abstract}
We consider the problem of estimating functionals of discrete distributions, and focus on tight (up to universal multiplicative constants for each specific functional) nonasymptotic analysis of the worst case squared error risk of widely used estimators. We apply concentration inequalities to analyze the random fluctuation of these estimators around their expectations, and the theory of approximation using positive linear operators to analyze the deviation of their expectations from the true functional, namely their \emph{bias}.

We explicitly characterize the worst case squared error risk incurred by the Maximum Likelihood Estimator (MLE) in estimating the Shannon entropy $H(P) = \sum_{i = 1}^S -p_i \ln p_i$, and the power sum $F_\alpha(P) = \sum_{i = 1}^S p_i^\alpha,\alpha>0$, up to universal multiplicative constants for each fixed functional, for any alphabet size $S\leq \infty$ and sample size $n$ for which the risk may vanish. As a corollary, for Shannon entropy estimation, we show that it is necessary and sufficient to have $n  \gg S$ observations for the MLE to be consistent. In addition, we establish that it is necessary and sufficient to consider $n \gg S^{1/\alpha}$ samples for the MLE to consistently estimate $F_\alpha(P), 0<\alpha<1$. The minimax rate-optimal estimators for both problems require $S/\ln S$ and $S^{1/\alpha}/\ln S$ samples, which implies that the MLE has a strictly sub-optimal sample complexity. When $1<\alpha<3/2$, we show that the worst-case squared error rate of convergence for the MLE is $n^{-2(\alpha-1)}$ for infinite alphabet size, while the minimax squared error rate is $(n\ln n)^{-2(\alpha-1)}$. When $\alpha\geq 3/2$, the MLE achieves the minimax optimal rate $n^{-1}$ regardless of the alphabet size.

As an application of the general theory, we analyze the Dirichlet prior smoothing techniques for Shannon entropy estimation. In this context, one approach is to plug-in the Dirichlet prior smoothed distribution into the entropy functional, while the other one is to calculate the Bayes estimator for entropy under the Dirichlet prior for squared error, which is the conditional expectation. We show that in general such estimators do \emph{not} improve over the maximum likelihood estimator. No matter how we tune the parameters in the Dirichlet prior, this approach cannot achieve the minimax rates in entropy estimation. The performance of the minimax rate-optimal estimator with $n$ samples is essentially \emph{at least} as good as that of Dirichlet smoothed entropy estimators with $n\ln n$ samples. 
\end{abstract}

\begin{IEEEkeywords}
entropy estimation, maximum likelihood estimator, Dirichlet prior smoothing, approximation theory, high dimensional statistics, R\'enyi entropy, approximation using positive linear operators
\end{IEEEkeywords}

\section{Introduction}

Entropy and related information measures arise in information theory, statistics, machine learning, biology, neuroscience, image processing, linguistics, secrecy, ecology, physics, and finance, among other fields. Numerous inferential tasks rely on data driven procedures to estimate these quantities~(see, e.g. \cite{Olsen--Meyer--Bontempi2009impact,Pluim--Maintz--Viergever2003mutual,Viola--Wells1997alignment,Batina--Gierlichs--Prouff--Rivain--Standaert--Veyrat2011mutual,Hill1973diversity,Franchini--Its--Korepin2008Renyi}). We focus on two concrete and well-motivated examples of information measures, namely the Shannon entropy~\cite{Shannon1948}
\begin{equation}\label{eqn.entropy}
H(P) \triangleq \sum_{i = 1}^S - p_i \ln p_i,
\end{equation}
and the power sum $F_\alpha(P), \alpha > 0$:
\begin{equation}
F_\alpha (P) \triangleq \sum_{i = 1}^S p_i^\alpha, \alpha>0.
\end{equation}
The power sum $F_\alpha(P)$ functional often emerges in various operational problems~\cite{Breiman--Friedman--Stone--Olshen1984classification}. It also has connections to the R\'enyi entropy \cite{Renyi1961measures} $H_\alpha(P)$ via the formula $H_\alpha(P) = \frac{\ln F_\alpha(P)}{1-\alpha}$. 

Consider estimating the Shannon entropy $H(P)$ based on $n$ i.i.d. samples following unknown discrete distribution $P$ with \emph{unknown} alphabet size $S$. This problem has a rich history with extensive study in various fields ranging from information theory, statistics, neuroscience, physics, psychology, medicine, etc. We refer the reader to~\cite{Jiao--Venkat--Han--Weissman2015minimax} for a review. One of the most widely used estimators for this purpose is the Maximum Likelihood Estimator (MLE), which is simply the empirical entropy. The empirical entropy is an instantiation of the plug-in principle in functional estimation, where a point estimate of the parameter (distribution $P$ in this case) is used to construct an estimator for a {\em functional} of the parameter via the plug-in approach. The idea of using the MLE for estimating information measures of interest (in this case entropy), is not only intuitive, but has sound justification: {\em asymptotic efficiency}.

The beautiful theory of H\'ajek and Le Cam \cite{Hajek1970characterization,Hajek1972local,LeCam1986asymptotic} shows that, as the number of observed samples grows without bound while the finite parameter dimension (e.g., alphabet size) remains fixed, the MLE performs optimally in estimating any differentiable functional when the statistical model complies with the benign LAN (Local Asymptotic Normality) condition~\cite{LeCam1986asymptotic}. Thus, for finite dimensional problems, the problems of parameter and functional estimation are well understood in an asymptotic sense, and the MLE appears to be not only natural but also theoretically justified. But does it make sense to employ the MLE to estimate the entropy in most practical applications?

As it turns out, while asymptotically optimal in entropy estimation, the MLE is by no means sacrosanct in many real applications, especially in regimes where the alphabet size is comparable to, or even larger than the number of observations. It was shown that the MLE for entropy is strictly sub-optimal in the large alphabet regime~\cite{Paninski2003,Paninski2004}. Therefore, classical asymptotic theory does not satisfactorily address high dimensional settings, which are becoming increasingly important in the modern era of high dimensional statistics.

There has been a wave of recent research activities focusing on analyzing existing approaches of functional estimation, as well as proposing new estimators that are provably near optimal in the large alphabet regime. Paninski~\cite{Paninski2003} showed that the MLE needs $n \gg S$ samples to consistently estimate the Shannon entropy, and Paninski~\cite{Paninski2004} established the existence of a (non-explicit) estimator that only required $n \ll S$ samples. It implies that the MLE is strictly sub-optimal in terms of sample complexity. It was Valiant and Valiant~\cite{Valiant--Valiant2011} who first explicitly constructed a linear programming based estimator (later modified in~\cite{Valiant--Valiant2013estimating}) that achieves consistency in entropy estimation with $n \gg S/\ln S$ samples, which they also proved to be necessary. Valiant and Valiant~\cite{Valiant--Valiant2011power} constructed another approximation based estimator that achieved better theoretical properties than the linear programming ones, which was not yet shown to be minimax rate-optimal for all ranges of $S$ and $n$. The authors~\cite{Jiao--Venkat--Han--Weissman2015minimax} constructed the first minimax rate-optimal estimators for $H(P)$ and $F_\alpha(P),\alpha>0$ based on best polynomial approximation, which are agnostic to the alphabet size $S$. Utilizing the released MATLAB and Python packages of the estimators in~\cite{Jiao--Venkat--Han--Weissman2015minimax}, \cite{Jiao--Venkat--Han--Weissman2014beyond,jiao2016beyond} demonstrated that these minimax rate-optimal estimators can lead to significant performance boosts in various machine learning tasks. Wu and Yang~\cite{Wu--Yang2014minimax} independently applied the best polynomial approximation idea to entropy estimation and obtained the minimax rates. However, their estimator requires the knowledge of the alphabet size $S$. The approximation ideas proved to be very fruitful in Acharya \emph{et al.}~\cite{Acharya--Orlitsky--Suresh--Tyagi2014complexity}, Wu and Yang~\cite{wu2015chebyshev}, Han, Jiao, and Weissman~\cite{Han--Jiao--Weissman2016minimaxdivergence}, Jiao, Han, and Weissman~\cite{jiao2016minimaxl1distance}, Bu \emph{et al.}~\cite{bu2016estimation}, Orlitsky, Suresh, and Wu~\cite{orlitsky2016optimal}, Wu and Yang~\cite{wu2016sample}.  

The main contribution of this paper is an explicit characterization of the worst case squared error risk of estimating $H(P)$ and $F_\alpha(P)$ using the MLE up to a universal multiplicative constant for each specific functional, for all ranges of $S$ and $n$ in which the risk may vanish. Understanding the benefits and limitations of the MLE in a nonasymptotic setting serves two key purposes. First, the approach is a natural benchmark for comparing other more nuanced procedures for estimation of functionals. Second, performance analysis for the MLE reveals regimes where the problem is difficult, and motivates the development of improvements, which have been validated in~\cite{Paninski2003,Paninski2004,Valiant--Valiant2011,Valiant--Valiant2013estimating,Valiant--Valiant2011power,Jiao--Venkat--Han--Weissman2015minimax,Wu--Yang2014minimax,Acharya--Orlitsky--Suresh--Tyagi2014complexity}. As a byproduct of the analysis, we explicitly point out an equivalence between bias analysis of functional estimators using plug-in rules and approximation theory using positive linear operators. We believe these powerful tools introduced from approximation theory may have far reaching impacts in various applications in the information theory community. 

We mention that there exist numerous other approaches proposed in various disciplines to estimate entropy, many among which are difficult to analyze theoretically. Among them we mention the Miller--Madow bias-corrected estimator and its variants \cite{Miller1955,Carlton1969bias,Grassberger1988finite}, the jackknife estimator \cite{Zahl1977jackknifing}, the shrinkage estimator \cite{Hausser--Strimmer2009entropy}, the coverage adjusted estimator \cite{Chao--Shen2003nonparametric}, the Best Upper Bound (BUB) estimator \cite{Paninski2003}, the B-Splines estimator \cite{Daub--Steuer--Selbig--Kloska2004}, and~\cite{Grassberger2008entropy,Vinck--Battaglia--Balakirsky--Vinck--Pennartz2012estimation} etc. For a Bayesian statistician, a natural approach is to first impose a prior on the unknown discrete distribution before considering estimating entropy. The Dirichlet prior, being the conjugate prior to the multinomial distribution, appears to be particularly popular in the Bayesian approach to entropy estimation. Dirichlet smoothing may have two connotations in the context of entropy estimation: 
\begin{itemize}
\item \cite{Schurmann--Grassberger1996entropy,Schober2013some} One first obtains a Bayes estimate for the discrete distribution $P$, which we denote by $\hat{P}_B$, and then plugs it in the entropy functional to obtain the entropy estimate $H(\hat{P}_B)$.
\item \cite{Wolpert--Wolf1995}\cite{Holste--Grosse--Herzel1998bayes} One calculates the Bayes estimate for entropy $H(P)$ under Dirichlet prior for squared error. The estimator is the conditional expectation $\bE[H(P)|\mathbf{X}]$, where $\mathbf{X}$ represents the samples. 
\end{itemize}

Nemenman, Shafee, and Bialek~\cite{Nemenman--Shafee--Bialek2002entropy} argued in an intuitive way why Dirichlet prior is bad for entropy estimation and proposed to use mixtures of Dirichlet priors. Archer, Park, and Pillow~\cite{Archer--Park--Pillow2012bayesian} have come up with priors that perform better than the Dirichlet prior. Also see~\cite{Nemenman--Bialek--vanSteveninck2004entropy,Nemenman2011coincidences}. 

Another contribution of this paper is an explicit characterization of the worst case squared error risk of estimating $H(P)$ using the Dirichlet prior plug-in approach up to a universal multiplicative constant, for all ranges of $S$ and $n$ in which the risk may vanish. We show rigorously that neither of the two approaches utilizing the Dirichlet prior result in improvements over the MLE in the large alphabet regime. Specifically, these approaches require at least $n\gg S$ to be consistent, while the minimax rate-optimal estimators such as the ones in~\cite{Jiao--Venkat--Han--Weissman2015minimax}\cite{Wu--Yang2014minimax} only need $n \gg \frac{S}{\ln S}$ to achieve consistency.

The rest of the paper is organized as follows. We present the main results in Section~\ref{sec.mainresults}, discuss the fundamental ideas behind the proofs in Section~\ref{sec.ideas}, and detail the proofs in Section~\ref{sec.upperboundproof} and~\ref{sec.lowerboundproof}. Proofs of auxiliary lemmas are deferred to the appendices. 

\section{Preliminaries}\label{sec.preliminaries}

The Dirichlet distribution with order $S\geq 2$ with parameters $\alpha_1,\ldots,\alpha_S>0$ has a probability density function with respect to Lebesgue measure on the Euclidean space $\mathbb{R}^{S-1}$ given by
\begin{equation}
f \left(x_1,\cdots, x_{S}; \alpha_1,\cdots, \alpha_S \right) = \frac{1}{\mathrm{B}(\boldsymbol\alpha)} \prod_{i=1}^S x_i^{\alpha_i - 1}
\end{equation}
on the open $S-1$-dimensional simplex defined by:
\begin{align} 
&x_1, \cdots, x_{S-1} > 0 \\
&x_1 + \cdots + x_{S-1} < 1 \\ 
&x_S = 1 - x_1 - \cdots - x_{S-1} 
\end{align}
and zero elsewhere. The normalizing constant is the multinomial Beta function, which can be expressed in terms of the Gamma function:
\begin{equation}
\mathrm{B}(\boldsymbol\alpha) = \frac{\prod_{i=1}^S \Gamma(\alpha_i)}{\Gamma\left(\sum_{i=1}^S \alpha_i\right)},\qquad\boldsymbol{\alpha}=(\alpha_1,\cdots,\alpha_S).
\end{equation}

Assuming the unknown discrete distribution $P$ follows prior distribution $P \sim \text{Dir}(\boldsymbol\alpha)$, and we observe a vector $\mathbf{X} = (X_1,X_2,\ldots,X_S)$ with multinomial distribution $\mathsf{multi}(n;p_1,p_2,\ldots,p_S)$, then one can show that the posterior distribution $P_{P|\mathbf{X}}$ is also a Dirichlet distribution with parameters
\begin{equation}
{\boldsymbol\alpha} + \mathbf{X} = \left( \alpha_1 + X_1,\alpha_2 + X_2,\ldots,\alpha_S + X_S \right).
\end{equation}

Furthermore, the posterior mean (conditional expectation) of $p_i$ given $\mathbf{X}$ is given by~\cite[Example 5.4.4]{Lehmann--Casella1998theory}
\begin{equation}
\delta_i(\mathbf{X}) \triangleq \bE[p_i|\mathbf{X}] = \frac{\alpha_i + X_i}{n + \sum_{i = 1}^S \alpha_i}.
\end{equation}

The estimator $\delta_i(\mathbf{X})$ is widely used in practice for various choices of $\mathbf{\alpha}$. For example, if $\alpha_i = \frac{\sqrt{n}}{S}$, then the corresponding $\left( \delta_1(\mathbf{X}),\delta_2(\mathbf{X}),\ldots,\delta_S(\mathbf{X})\right)$ is the minimax estimator for $P$ under squared loss~\cite[Example 5.4.5]{Lehmann--Casella1998theory}. However, it is no longer minimax under other loss functions such as $\ell_1$ loss, which was investigated in~\cite{Han--Jiao--Weissman2015minimaxdistribution}. 

Note that the estimator $\delta_i(\mathbf{X})$ subsumes the MLE $\hat{p}_i = \frac{X_i}{n}$ as a special case, since we can take the limit $\boldsymbol\alpha\to 0$ for $\delta_i(\mathbf{X})$ to obtain MLE. We denote the empirical distribution by $P_n = (\hat{p}_1,\hat{p}_2,\ldots, \hat{p}_S)$. The Dirichlet prior smoothed distribution estimate is denoted as $\hat{P}_B$, where
\begin{equation}
\hat{P}_B = \frac{n}{n + \sum_{i = 1}^S \alpha_i} P_n + \frac{\sum_{i = 1}^S \alpha_i}{n+\sum_{i = 1}^S \alpha_i}  \frac{\boldsymbol\alpha}{\sum_{i = 1}^S \alpha_i}.    
\end{equation}
Note that the \emph{smoothed} distribution $\hat{P}_B$ can be viewed as a convex combination of the empirical distribution $P_n$ and the \emph{prior} distribution $\frac{\boldsymbol\alpha}{\sum_{i = 1}^S \alpha_i}$. We call the estimator $H(\hat{P}_B)$ the \emph{Dirichlet prior smoothed plug-in estimator}. 

Another way to apply Dirichlet prior in entropy estimation is to compute the Bayes estimator for $H(P)$ under squared error, given that $P$ follows Dirichlet prior. It is well known that the Bayes estimator under squared error is the conditional expectation. It was shown in Wolpert and Wolf~\cite{Wolpert--Wolf1995} that
\begin{align}
\hat{H}^{\mathsf{Bayes}} & \triangleq \bE[H(P)|\mathbf{X}] \nonumber \\
& = \psi\left( 1 + \sum_{i = 1}^S (\alpha_i + X_i)\right) \nonumber \\
& \quad  - \sum_{i = 1}^S \left( \frac{\alpha_i + X_i}{\sum_{i = 1}^S (\alpha_i + X_i)} \right) \psi(\alpha_i + X_i + 1),\label{eqn.bayesestimator}
\end{align}
where $\psi(z) \triangleq \frac{\Gamma'(z)}{\Gamma(z)}$ is the digamma function. We call the estimator $\hat{H}^{\mathsf{Bayes}}$ the \emph{Bayes estimator under Dirichlet prior}.

Throughout this paper, we observe $n$ i.i.d. samples from an unknown discrete distribution $P = (p_1, p_2, \ldots, p_S)$. We denote the $n$ samples as $n$ i.i.d. random variables $\{Z_i\}_{1\leq i\leq n}$ taking values in $\mathcal{Z} = \{1,2,\ldots,S\}$ with probability $(p_1,p_2,\ldots,p_S)$. Defining
\begin{equation}
X_i \triangleq \sum_{j = 1}^n \mathbbm{1}(Z_j = i),\quad 1\leq i \leq S,
\end{equation}
we know that $(X_1,X_2,\ldots,X_S)$ follows a multinomial distribution with parameter $(n;p_1,p_2,\ldots,p_S)$. Denote $h_j \triangleq \sum_{i = 1}^S \mathbbm{1}(X_i = j),\quad 0\leq j \leq n$. The Maximum Likelihood Estimator (MLE) for $H(P)$ and $F_\alpha(P)$ are defined, respectively, as $H(P_n)$ and $F_\alpha(P_n)$, with $P_n$ being the empirical distribution. We assume the functional $F(P)$ takes the form
\begin{align}\label{eqn.generalf}
F(P) = \sum_{i = 1}^S f(p_i).
\end{align} 
Then it is evident that the MLE $F(P_n)$ for estimating functional $F(P)$ in (\ref{eqn.generalf}) can be alternatively represented as the following linear function of $(h_0,h_1,\ldots,h_n)$:
\begin{equation}
F(P_n) = \sum_{j = 0}^n f\left( \frac{j}{n} \right) h_j.
\end{equation}

Recall that the risk function under squared error for any estimator $\hat{F}$ in estimating functional $F(P)$ may be decomposed as
\begin{align}
\mathbb{E}_{P} (F(P) - \hat{F})^2 & = (\mathbb{E}_P \hat{F} - F(P))^2 + \mathbb{E}_P \left( \hat{F} - \mathbb{E}_P \hat{F}\right)^2,
\end{align}
where $(\mathbb{E}_P \hat{F} - F(P))^2$ represents the squared bias, and $\mathbb{E}_P \left( \hat{F} - \mathbb{E}_P \hat{F}\right)^2$ represents the variance. The subscript $P$ means that the expectation is taken with respect to the distribution $P$ that generates the i.i.d.\ observations. We omit the subscript for the expectation operator $\mathbb{E}$ if the meaning of the expectation is clear from the context.

{\em Notation:} $a \wedge b$ denotes $\min\{a,b\}$, $a \vee b$ denotes $\max\{a,b\}$. For two non-negative series $\{a_n\},\{b_n\}$, notation $a_n \lesssim b_n$ means that there exists a positive universal constant $C<\infty$ such that $\frac{a_n}{b_n}\leq C$, for all $n$. The notation $a_n \asymp b_n$ is equivalent to $a_n \lesssim b_n$ and $b_n \lesssim a_n$. Notation $a_n \gg b_n$ means that $\liminf_{n\to \infty} \frac{a_n}{b_n} =\infty$. Throughout this paper, the notations $\lesssim, \gtrsim, \ll, \gg$ involve absolute constants that may only depend on $\alpha$ but not $S$ or $n$. We denote by $\mathcal{M}_S$ the space of discrete distributions with alphabet size $S$.

\section{Main results}\label{sec.mainresults}

\subsection{Estimating $F_\alpha(P)$}

We split the upper bounds and the lower bounds into two theorems, and present their succinct summaries in Corollary~\ref{cor.largerthanone} and~\ref{cor.l2ratesfalphalessone}.  

\begin{theorem}[Upper bounds]\label{thm.main}
We have the following upper bounds on the worst case squared error risk of MLE in estimating $F_\alpha(P)$:
\begin{enumerate}
\item $\alpha\geq 2$:
\begin{align}
\sup_{P \in \cM_S} \bE_P \left( F_\alpha(P_n) - F_\alpha(P) \right)^2  \leq \left( \frac{\alpha(\alpha-1)}{2n}\right)^2 + \frac{\alpha^2}{4n}.
\end{align}
\item $1<\alpha<2$:
\begin{align}
& \sup_{P \in \cM_S} \bE_P \left( F_\alpha(P_n) - F_\alpha(P) \right)^2 \nonumber \\
& \quad \leq \left ( \frac{4}{n^{\alpha-1}} \wedge  \frac{3S^{1-\alpha/2}}{n^{\alpha/2}} \wedge C_{\alpha,n} \frac{5S}{2n} \right)^2  + \frac{\alpha^2}{4n},
\end{align}
where $C_{\alpha,n} \triangleq n \omega_\varphi^2(x^\alpha,n^{-1/2})>0$ satisfies $\limsup_{n\to \infty}C_{\alpha,n}<\infty$ for $1<\alpha<2$, and $\omega_\varphi^2$ is the second-order Ditzian--Totik modulus of smoothness introduced in Section~\ref{sec.biasidea}. 
\item $1/2 \leq \alpha<1$:
\begin{align}
& \sup_{P \in \cM_S} \bE_P \left( F_\alpha(P_n) - F_\alpha(P) \right)^2 \nonumber \\
&\quad \leq \left(\frac{3S^{1-\alpha/2}}{2n^{\alpha/2}} \wedge  \frac{5S}{2n^{\alpha}} \right)^2 \nonumber \\
& \qquad + \left( \frac{10S^{2-2\alpha}}{n}+ \frac{120}{\alpha^2}\left(\frac{S}{n^{2\alpha}}\wedge \frac{1}{n^{2\alpha-1}}\right) \right).
\end{align}
\item $0<\alpha<1/2$:
\begin{align}
& \sup_{P \in \cM_S} \bE_P \left( F_\alpha(P_n) - F_\alpha(P) \right)^2 \nonumber \\
& \quad \leq \left(\frac{3S^{1-\alpha/2}}{2n^{\alpha/2}} \wedge  \frac{5S}{2n^{\alpha}} \right)^2 \nonumber \\
& \qquad + \left( \frac{10S}{n^{2\alpha}}+ \frac{120}{\alpha^2}\left(\frac{S}{n^{2\alpha}}\wedge \frac{1}{n^{2\alpha-1}}\right) \right).
\end{align}
\end{enumerate}
Moreover, in all the bounds presented above, the first term bounds the square of the bias, and the second term bounds the variance.
\end{theorem}

\begin{theorem}[Lower bounds]\label{thm.mainlower}
We have the following lower bounds on the worst case squared error risk of MLE in estimating $F_\alpha(P)$:
\begin{enumerate}
\item $\alpha\geq 3/2$: there exists a constant $C_\alpha>0$ such that for all $n$,
\begin{align}
\sup_{P \in \cM_S} \bE_P \left( F_\alpha(P_n) - F_\alpha(P) \right)^2  \geq \frac{C_\alpha}{n}.
\end{align}
\item $1<\alpha<3/2$: if $S = cn$, for any $c>0$, then
\begin{align}
\liminf_{n\to \infty} n^{2(\alpha-1)} \cdot \sup_{P \in \mathcal{M}_S} \bE_P(F_\alpha(P_n) - F_\alpha(P))^2>0.
\end{align}
\item $1/2 \leq \alpha<1$: if $n\geq S$, then
\begin{align}
& \sup_{P \in \mathcal{M}_S} \bE_P \left( F_\alpha(P_n) - F_\alpha(P) \right)^2  \nonumber \\
& \quad \geq \frac{\alpha^2(1-\alpha)^2}{72 n^{2\alpha}} (S-1)^2 \left(1-\frac{1}{n}\right)^2 \nonumber \\
& \quad \quad + \Bigg(\frac{\alpha^2}{64en}\Bigg[(2(S-1))^{1-\alpha}-2^{-\alpha} \nonumber \\
& \qquad -\frac{1-\alpha}{4n}\left((2(S-1))^{1-\alpha}+2^{-\alpha}\right)\Bigg]^2 \nonumber \\
& \qquad - \frac{1}{2} e^{-n/4}S^{2(1-\alpha)} \Bigg),
\end{align}
\item $0<\alpha<1/2$: if $n\geq S$, then
\begin{align}
& \sup_{P \in \mathcal{M}_S} \bE_P \left( F_\alpha(P_n) - F_\alpha(P) \right)^2 \nonumber \\
& \quad \geq \frac{\alpha^2(1-\alpha)^2}{36 n^{2\alpha}} (S-1)^2 \left(1-\frac{1}{n}\right)^2.
\end{align}
\end{enumerate}
\end{theorem}

There are several interesting implications of this result, highlighted in the following corollaries.
\begin{corollary}\label{cor.largerthanone}
For any fixed $\alpha>1$, there exist universal convergence rates for $F_\alpha(P)$:
\begin{align}
& \sup_{S \in \mathbb{N}_+} \sup_{P \in \cM_S} \bE_P \left( F_\alpha(P_n) - F_\alpha(P) \right)^2 \nonumber \\
& \quad \asymp \begin{cases} n^{-2(\alpha-1)} & 1< \alpha<3/2 \\ n^{-1} & \alpha\geq 3/2 \end{cases}
\end{align}
\end{corollary}

Corollary~\ref{cor.largerthanone} implies that, when $\alpha\geq 3/2$, estimation of $F_\alpha(P)$ is extremely simple in terms of convergence rate: plug-in estimation achieves the best possible rate $n^{-1}$ (as shown in the theory of regular statistical experiments of classical asymptotic theory, see~\cite[Chap. 1.7.]{Ibragimov--Hasminskii1981}). Results of this form have appeared in the literature, for example, Antos and Kontoyiannis \cite{Antos--Kontoyiannis2001convergence} showed that it suffices to take $n \gg 1$ samples to consistently estimate $F_\alpha(P), \alpha\geq 2, \alpha \in \mathbb{Z}$. However, when $1<\alpha<3/2$, the rate $n^{-2(\alpha-1)}$ is considerably slower. Interestingly, there exist estimators that demonstrate better convergence rates for estimating $F_\alpha(P), 1<\alpha<3/2$. Jiao \emph{et al.}~\cite{Jiao--Venkat--Han--Weissman2015minimax} showed that the minimax rate in estimating $F_\alpha(P), 1<\alpha<3/2$, is $(n\ln n)^{-2(\alpha-1)}$ as long as $S\gtrsim n\ln n$, which is achieved using the general methodology developed therein for constructing minimax rate-optimal estimators for nonsmooth functionals. 

Let us now examine the case $0 < \alpha <  1$, another interesting regime that has not been characterized before. In this regime, we observe significant increase in the difficulty of the estimation problem. In particular, the relative scaling between the number of observations $n$ and the alphabet size $S$ for consistent estimation of $F_\alpha(P)$ exhibits a phase transition, encapsulated in the following.
\begin{corollary}\label{cor.l2ratesfalphalessone}
Fix $\alpha \in (0,1)$. The worst case squared error risk of the MLE $F_\alpha(P_n)$ in estimating $F_\alpha(P)$ is characterized as follows when $n\geq S$:
\begin{align}
& \sup_{P \in \mathcal{M}_S} \bE_P \left( F_\alpha(P_n) - F_\alpha(P) \right)^2 \nonumber \\
& \quad \asymp \begin{cases} \frac{S^2}{n^{2\alpha}} + \frac{S^{2-2\alpha}}{n} & 1/2< \alpha<1 \\ \frac{S^2}{n^{2\alpha}} & 0<\alpha \leq 1/2 \end{cases}
\end{align}
\end{corollary}

Corollary~\ref{cor.l2ratesfalphalessone} follows directly from Theorem~\ref{thm.main} and Theorem~\ref{thm.mainlower}. In particular, it implies that it is necessary and sufficient to take $n \gg S^{1/\alpha}$ samples to consistently estimate $F_\alpha(P), 0<\alpha<1$ using MLE. Thus, as one might expect, the scale of the number of measurements required for consistent estimation increases as $\alpha$ decreases. When $\alpha \to 0$, the number of samples required for the MLE grows super-polynomially in $S$, which is consistent with the intuition that $F_\alpha(P),\alpha \to 0$ is essentially equivalent to the alphabet size of a distribution, whose estimation is known to be very hard when there may exist symbols with very small probabilities~\cite{Efron--Thisted1976}.

We exhibit some of our findings by plotting the value required of $\ln n/\ln S$ for consistent estimation of $F_\alpha(P)$ using the MLE $F_\alpha(P_n)$, as a function of $\alpha$,  in Figure~\ref{fig.falphaphase}.

\begin{center}
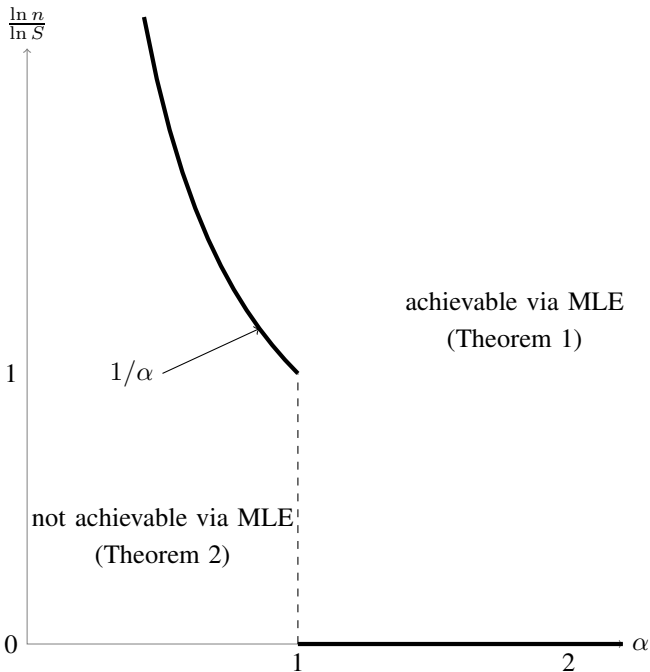

  \centering
\begin{tikzpicture}[xscale=3.6,yscale=3.6]
\draw [<->, help lines] (0.6,2.8) -- (0.6,0.6) -- (2.8,0.6);
\draw [ultra thick] (1.0316, 2.9171) -- (1.0789, 2.6879) -- (1.1263, 2.5000 ) -- (1.1737, 2.3431 )-- (1.2211, 2.2102) -- (1.2684,2.0961 ) -- (1.3158,1.9971 ) -- (1.3632,1.9103 ) -- (1.4105,1.8338 ) -- (1.4579,1.7656 ) -- (1.5053, 1.7047) -- (1.5526,1.6497 ) -- (1.6000, 1.6);
\draw [ultra thick](1.6, 0.6) -- (2.8, 0.6);
\draw [dashed] (1.6, 0.6) -- (1.6,1.6);
\node [above] at (1.1,1) {not achievable via MLE};
\node [below] at (1.1,1) {(Theorem~\ref{thm.mainlower})};
\node [left] at (0.6, 1.6) {1};
\node [left] at (0.6,0.6) {0};
\node [below] at (1.6, 0.6) {1};
\node [below] at (2.6,0.6) {2};
\node [right] at (2.8,0.6) {$\alpha$};
\node [above] at (0.6,2.8) {$\frac{\ln n}{\ln S}$};
\draw [->] (1.1,1.6) -- (1.4579,1.7656 );
\node [left] at (1.1, 1.6) {$1/\alpha$};
\node [above] at (2.4, 1.8) {achievable via MLE};
\node [below] at (2.4, 1.8) {(Theorem~\ref{thm.main})};
\end{tikzpicture}
\captionof{figure}{For any fixed point above the thick curve, consistent estimation of $F_\alpha(P)$ is achieved using MLE $F_\alpha(P_n)$ as shown in Theorem~\ref{thm.main}. For any fixed point below the thick curve in the regime $0<\alpha<1$, Theorem~\ref{thm.mainlower} shows that the MLE does not have vanishing maximum squared error risk. }
\label{fig.falphaphase}
\end{center}

It turns out that one can construct estimators that are better than the MLE in terms of required sample complexity for consistent estimation for the regime $0<\alpha<1$. Indeed, Jiao \emph{et al.}~\cite{Jiao--Venkat--Han--Weissman2015minimax} showed that the minimax rate-optimal estimator requires $n\gg \frac{S^{\frac{1}{\alpha}}}{\ln S}$ samples to achieve consistency, which attains a logarithmic improvement in the sample complexity over the MLE. 

\subsection{Estimating $H(P)$}

We not only consider $H(P_n)$, but also the so-called Miller--Madow bias-corrected estimator \cite{Miller1955} defined as
\begin{equation}
H^{\text{MM}}(P_n) = H(P_n) + \frac{S-1}{2n}. 
\end{equation}

\begin{theorem}\label{thm.entropyrisk}
The worst case squared error risk of $H(P_n)$ admits the following upper bound for all $S,n$: 
\begin{align}
& \sup_{P \in \cM_S} \bE_P \left( H(P_n) - H(P) \right)^2 \nonumber \\
& \quad \leq  \left( \ln \left( 1+ \frac{S-1}{n} \right) \right)^2 + \left(\frac{(\ln n)^2}{n}\wedge \frac{2(\ln S + 3)^2}{n}\right).
\end{align}
If $n \geq 15S$, then
\begin{align}
& \sup_{P \in \mathcal{M}_S} \bE_P \left( H(P_n) - H(P) \right)^2 \nonumber \\
& \quad \geq \frac{1}{2}\left(\frac{S-1}{2n} + \frac{S^2}{20n^2} - \frac{1}{12n^2}\right)^2 + c \frac{\ln^2 S}{n}.
\end{align}
Moreover, if $n \geq 15S$, the Miller--Madow bias-corrected estimator satisfies
\begin{align}
& \sup_{P \in \mathcal{M}_S} \bE_P \left( H^{\mathrm{MM}}(P_n) - H(P) \right)^2 \nonumber \\
& \quad \geq \frac{1}{2}\left(\frac{S^2}{20n^2} - \frac{1}{12n^2}\right)^2 + c \frac{\ln^2 S}{n},
\end{align}
where the positive constant $c>0$ in both expressions does not depend on $S$ or $n$.
\end{theorem}

Theorem~\ref{thm.entropyrisk} implies the following corollary.
\begin{corollary}\label{cor.entropyl2rate}
The worst case squared error risk of the MLE $H(P_n)$ in estimating $H(P)$ is characterized as follows when $n\geq 15S$:
\begin{equation}
\sup_{P \in \mathcal{M}_S} \bE_P \left( H(P_n) - H(P) \right)^2 \asymp \frac{S^2}{n^2} + \frac{\ln^2 S}{n}.
\end{equation}
Here the first term corresponds to the squared bias, and the second term corresponds to the variance. 
\end{corollary}

Paninski~\cite{Paninski2003} showed that if $n = cS$, where $c>0$ is a constant, the maximum squared error risk of $H(P_n)$, and the Miller--Madow bias-corrected estimator $H^{\mathrm{MM}}(P_n)$, would be bounded from zero. Paninski~\cite{Paninski2003} also showed that when $n \gg S, n \to \infty$, the MLE is consistent for estimating entropy. Corollary~\ref{cor.entropyl2rate} implies that it is necessary and sufficient to take $n\gg S$ samples for the MLE to be consistent for estimating entropy. Comparing the results for $H(P)$ with those for $F_\alpha(P)$, we see that the intuition that $H(P)$ being viewed close to $F_\alpha(P)$ when $\alpha \to 1^{-1}$ is indeed approximately correct as $H(P)$ coincides with $\alpha \to 1^{-}$ on the phase transition curve shown in Figure \ref{fig.falphaphase}.

Table~\ref{table.summary} summarizes the minimax squared error rates and the worst case squared error rates of the MLE in estimating $H(P)$ and $F_\alpha(P), \alpha>0$. It is clear that the MLE cannot achieve the minimax rates for estimation of $H(P)$, and $F_\alpha(P)$ when $0 < \alpha < 3/2$. In these cases, there exist strictly better estimators whose performance with $n$ samples is roughly the same as that of the MLE with $n \ln n$ samples. This phenomenon was termed \emph{effective sample size enlargement} in~\cite{Jiao--Venkat--Han--Weissman2015minimax}. 

\begin{table*}[t]
\begin{center}
    \begin{tabular}{| l | l | l |}
    \hline
    & Minimax squared error rates & Maximum squared error rates of MLE\\ \hline
$H(P)$    & $\frac{S^2}{(n \ln n)^2} + \frac{\ln^2 S}{n} \quad \left( n \gtrsim S/\ln S \right)$ (\cite{Jiao--Venkat--Han--Weissman2015minimax,Wu--Yang2014minimax,Valiant--Valiant2011power,Valiant--Valiant2011}) & $\frac{S^2}{n^2} + \frac{\ln^2 S}{n}\quad \left( n \gtrsim S \right)$ (Corollary~\ref{cor.entropyl2rate})  \\ \hline

$F_\alpha(P), 0<\alpha \leq \frac{1}{2}$ &  $\frac{S^2}{(n \ln n)^{2\alpha}} \quad \left( n \gtrsim S^{1/\alpha}/\ln S, \ln n \lesssim \ln S \right)$ (\cite{Jiao--Venkat--Han--Weissman2015minimax})  & $\frac{S^2}{n^{2\alpha}}\quad \left( n \gtrsim S^{1/\alpha} \right)$ (Corollary~\ref{cor.l2ratesfalphalessone})     \\
    \hline

$F_\alpha(P), \frac{1}{2}< \alpha<1$ &    $\frac{S^2}{(n \ln n)^{2\alpha}} + \frac{S^{2-2\alpha}}{n}\quad \left( n \gtrsim S^{1/\alpha}/\ln S \right)$ (\cite{Jiao--Venkat--Han--Weissman2015minimax}) & $\frac{S^2}{n^{2\alpha}} + \frac{S^{2-2\alpha}}{n}\quad \left( n \gtrsim S^{1/\alpha} \right)$ (Corollary~\ref{cor.l2ratesfalphalessone})  \\ \hline

$F_\alpha(P), 1< \alpha<\frac{3}{2}$ & $(n \ln n)^{-2(\alpha-1)}\quad \left(S \gtrsim n\ln n \right)$ (\cite{Jiao--Venkat--Han--Weissman2015minimax})  & $n^{-2(\alpha-1)}\quad \left(S \gtrsim n \right)$ (Corollary~\ref{cor.largerthanone})  \\ \hline

$F_\alpha(P), \alpha\geq \frac{3}{2}$ & $n^{-1}$ (Theorem~\ref{thm.main})  & $n^{-1}$  \\ \hline
    \end{tabular}

    \caption{Summary of results in this paper and the companion~\cite{Jiao--Venkat--Han--Weissman2015minimax}}
     \label{table.summary}
\end{center}
\end{table*}

\subsection{Dirichlet prior techniques applying to entropy estimation}

For symmetry, we restrict attention to the case where the parameter $\boldsymbol\alpha$ in the Dirichlet distribution takes the form $(a,a,\ldots,a)$. 

In comparison to MLE $H(P_n)$, where $P_n$ is the empirical distribution, the Dirichlet smoothing scheme $H(\hat{P}_B)$ has a disadvantage: it requires the knowledge of the alphabet size $S$ in general. We define
\begin{equation}
\hat{p}_{B,i} = \frac{n \hat{p}_i + a}{n + Sa},
\end{equation}
and
\begin{equation}
p_{B,i} = \bE[\hat{p}_{B,i}] = \frac{np_i + a}{n+Sa}.
\end{equation}
It is clear that 
\begin{align}
\hat{P}_B & = \frac{n}{n+Sa} P_n + \frac{Sa}{n+Sa} U_S
\label{eqn.Pbconstruction} \\
P_B & = \frac{n}{n+Sa} P + \frac{Sa}{n+Sa} U_S,
\end{align}
where $P_n$ stands for the empirical distribution, $P$ is the true distribution, and $U_S$ denotes the uniform distribution on the same alphabet with size $S$. 

\begin{theorem}\label{thm.upperbound}
If $n\geq \max\{Sa,2ea\}$, then the maximum squared error risk of $H(\hat{P}_B)$ in estimating $H(P)$ is upper bounded as
\begin{align}
& \sup_{P \in \mathcal{M}_S} \bE_P \left( H(\hat{P}_B) - H(P)\right)^2 \nonumber \\
& \quad \leq \left( \ln\left(1+\frac{S-1}{n+Sa}\right) \vee \frac{2Sa}{n+Sa} \ln \left( \frac{n+Sa}{2a}\right) \right)^2 \nonumber \\
& \qquad + \frac{2n}{(n+Sa)^2}\left[3+\ln\left(\frac{n+Sa}{a+1}\wedge S\right)\right]^2. 
\end{align}
Here the first term bounds the squared bias, and the second term bounds the variance. 
\end{theorem}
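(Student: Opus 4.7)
The plan is to use the standard bias--variance decomposition
\begin{equation}
\bE_P(H(\hat{P}_B)-H(P))^2 = (\mathsf{Bias}(H(\hat{P}_B)))^2 + \mathsf{Var}(H(\hat{P}_B))
\end{equation}
and bound each term to match, respectively, the two summands in the claim.

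For the bias I would split into an ``estimation bias'' and a ``smoothing bias'':
\begin{equation}
\bE H(\hat{P}_B) - H(P) = \underbrace{\bigl(\bE H(\hat{P}_B) - H(P_B)\bigr)}_{\text{estimation}} + \underbrace{\bigl(H(P_B) - H(P)\bigr)}_{\text{smoothing}},
\end{equation}
where $P_B=\bE\hat{P}_B$ is the deterministic smoothed distribution introduced earlier. For the estimation bias, note that $\hat{p}_{B,i}=(X_i+a)/(n+Sa)$ with $X_i\sim \mathsf{B}(n,p_i)$, so the coordinate-wise map $p_i\mapsto \bE g(\hat{p}_{B,i})$ with $g(x)=-x\ln x$ is a positive linear operator (a shifted/rescaled Bernstein operator) acting on $g$. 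I would then invoke the approximation-theoretic estimates developed in Section~\ref{sec.applyapproximationtheory}, combined with Jensen's inequality applied to the concave function $H$, to obtain a bound of the form $\ln\!\bigl(1+(S-1)/(n+Sa)\bigr)$; this matches what one expects since $n+Sa$ is the effective sample size. For the smoothing bias, since $P_B=(1-\lambda)P+\lambda U$ with $\lambda=Sa/(n+Sa)$ and $U$ uniform, I would apply concavity of $H$ together with the closed-form bound $\lambda H(U)\leq \lambda\ln S$ and a matching lower bound, re-expressing the result in the form $\frac{2Sa}{n+Sa}\ln\frac{n+Sa}{2a}$.

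For the variance I would use the Efron--Stein inequality. Replacing one sample alters exactly two counts $X_i,X_j$ by $\pm 1$, hence shifts $\hat{p}_{B,i},\hat{p}_{B,j}$ by $\pm 1/(n+Sa)$. Since $g'(x)=-1-\ln x$ and the Dirichlet smoothing guarantees $\hat{p}_{B,i}\geq a/(n+Sa)$, the change in $H(\hat{P}_B)$ caused by a single resampling is controlled by a quantity of order $(3+\ln((n+Sa)/(a+1)))/(n+Sa)$; the ceiling at $\ln S$ comes from the universal bound $H(\hat{P}_B)\leq \ln S$ combined with the elementary observation that $g$ is bounded by $\ln S$ times an appropriate indicator. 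Averaging the squared perturbation over the $n$ i.i.d.\ samples produces the advertised variance bound of order $\tfrac{n}{(n+Sa)^2}[3+\ln(\min\{(n+Sa)/(a+1),S\})]^2$.

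The hardest step will be the estimation-bias bound, i.e.\ controlling $|\bE H(\hat{P}_B)-H(P_B)|$ uniformly over the simplex $\mathcal{M}_S$, including the boundary where $p_i\to 0$ and the derivative of $-x\ln x$ is unbounded. Taylor-expansion methods break down there, as Harris's formula~(\ref{eqn.harris}) already illustrates for the pure MLE. The remedy is precisely the positive-linear-operator machinery referenced in the introduction: one needs modulus-of-smoothness estimates for $-x\ln x$ that are adapted to the Dirichlet-perturbed Bernstein operator and remain finite at the boundary. These are the new results that Section~\ref{sec.applyapproximationtheory} establishes, and I would feed them into the argument above to close the proof.
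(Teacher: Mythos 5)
Your bias--variance decomposition and your split of the bias into an estimation part $\bE H(\hat P_B)-H(P_B)$ and a smoothing part $H(P_B)-H(P)$ mirror what the paper does, and your Efron--Stein plan for the variance is broadly the right tool. However, two of your three ingredients would not deliver the \emph{exact} bound asserted in Theorem~\ref{thm.upperbound}.

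For the estimation bias you propose to invoke the approximation--theoretic machinery of Section~\ref{sec.applyapproximationtheory}. But the paper explicitly states (after Theorem~\ref{thm_2}) that this route yields only a constant-factor weaker bound, of the form $\tfrac{5nS\ln 2}{(n+Sa)^2}+\tfrac{2Sa}{n+Sa}\ln\!\bigl(\tfrac{n+Sa}{2a}\bigr)$, not the sharper $\ln\!\bigl(1+\tfrac{S-1}{n+Sa}\bigr)$ that appears in Theorem~\ref{thm.upperbound}. The proof of the sharp constant is an entropy-tailored argument in the appendix: one writes the exact identity
\begin{equation}
\bE H(\hat P_B)-H(P_B)=-\,\bE\, D(\hat P_B\,\|\,P_B),
\end{equation}
then uses Jensen twice together with Tsybakov's $\chi^2$-type bound $D(P\|Q)\le \ln\!\bigl(\sum_i p_i^2/q_i\bigr)$ and the fact that $\bE(\hat p_{B,i}-p_{B,i})^2=\tfrac{np_i(1-p_i)}{(n+Sa)^2}$ to land exactly on $\ln\!\bigl(1+\tfrac{S-1}{n+Sa}\bigr)$. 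Jensen applied to the concave functional $H$, which you also mention, only shows this quantity is nonpositive; it does not quantify it.

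For the smoothing bias you invoke concavity of $H$, but concavity gives a \emph{lower} bound $H(P_B)\ge(1-\lambda)H(P)+\lambda H(U)$, not the upper bound you need. The paper instead uses the $L_1$-continuity of entropy (Cover and Thomas, Thm.~17.3.3): $|H(P_B)-H(P)|\le -\|P_B-P\|_1\ln\!\tfrac{\|P_B-P\|_1}{S}$, combined with $\|P_B-P\|_1\le\tfrac{2Sa}{n+Sa}$, which is why the condition $n\ge Sa$ is needed. Your sketch does not close this step. Finally, for the variance, the appearance of the minimum $\tfrac{n+Sa}{a+1}\wedge S$ in the logarithm is not a consequence of the trivial bound $H(\hat P_B)\le\ln S$: the paper gets $(3+\ln\tfrac{n+Sa}{a+1})$ from a bounded-differences (worst-increment) argument and gets $(3+\ln S)$ (with the factor $2n$) from a separate Efron--Stein computation in which Jensen's inequality pushes the average of $p_i\bigl(3-\ln\tfrac{np_i+a}{n+Sa}\bigr)^2$ to its value at $p_i=1/S$; the theorem's variance bound is the minimum of these two. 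So while the high-level plan is sound, the concrete tools you name either produce the wrong constants or the wrong inequality direction, and you would need the KL-identity, the $L_1$-continuity lemma, and the two-pronged variance argument to actually prove Theorem~\ref{thm.upperbound} as stated.
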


\begin{theorem}\label{thm.lowerbound}
If $n\ge\max\{15S,Sa, 2ea\}$, then the maximum $L_2$ risk of $H(\hat{P}_B)$ in estimating $H(P)$ is lower bounded as
\begin{align}
& \sup_{P \in \mathcal{M}_S} \bE_P \left( H(\hat{P}_B) - H(P)\right)^2 \nonumber \\
& \quad \geq \frac{1}{2}\left[\frac{(S-1)a}{4(n+Sa)}\ln \left( \frac{n+Sa}{a}\right) + \frac{S-1}{8n} + \frac{S^2}{80n^2} - \frac{1}{48n^2}\right]^2 \nonumber \\
& \qquad + c \frac{\ln^2 S}{n},
\end{align}
where $c>0$ is a universal constant that does not depend on $a, S$, or $n$.  

If $n<Sa$, then we have
\begin{equation}
\sup_{P \in \mathcal{M}_S} \bE_P \left( H(\hat{P}_B) - H(P)\right)^2 \geq \left( \frac{S-1}{2S} \right)^2 \ln^2 S.
\end{equation}

If $n<2ea$, then we have
\begin{equation}
\sup_{P \in \mathcal{M}_S} \bE_P \left( H(\hat{P}_B) - H(P)\right)^2 \geq  \left( \frac{S-1}{S+2e} \right)^2 \ln^2 S.
\end{equation}

If $n<15S, n\geq 2ea$, then we have
\begin{align}
& \sup_{P \in \mathcal{M}_S} \bE_P \left( H(\hat{P}_B) - H(P)\right)^2 \nonumber \\
& \quad \geq \left[\left( \frac{(S-1)a}{4(n+Sa)}\ln  \left( \frac{n+Sa}{a}\right) + \frac{\lfloor n/15 \rfloor}{8n} - \frac{1}{16n} \right)_+ \right]^2,
\end{align}
  where $\lfloor x \rfloor$ is the largest integer that does not exceed $x$, and $(x)_+ = \max\{x,0\}$ represents the positive part of $x$. 
\end{theorem}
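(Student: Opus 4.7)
The plan is to lower bound $\sup_{P\in\mathcal{M}_S} \bE_P(H(\hat P_B)-H(P))^2$ by exhibiting specific distributions at which the bias of $H(\hat P_B)$ is demonstrably large, combined with an intrinsic Le Cam-type lower bound for entropy estimation, and to assemble the three pieces using that the supremum of the risk over $P$ dominates any convex combination of point evaluations. The key decomposition is
\[
\bE_P H(\hat P_B)-H(P) \;=\; \underbrace{[H(P_B)-H(P)]}_{\text{smoothing bias}\,\ge 0} \;+\; \underbrace{[\bE_P H(\hat P_B)-H(P_B)]}_{\text{Jensen gap}\,\le 0},
\]
where the smoothing bias is non-negative by concavity of $H$ applied to $P_B=\tfrac{n}{n+Sa}P+\tfrac{Sa}{n+Sa}U$ (with $U$ the uniform distribution), and the Jensen gap is non-positive because $\hat P_B$ is unbiased for $P_B$. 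Since these contributions carry opposite signs they cannot reinforce at a single $P$, so I will isolate each at a different distribution.

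At $P=e_1$ (point mass) the observations are deterministic, so the Jensen gap vanishes and the bias equals $H(P_B)=\frac{n+a}{n+Sa}\ln\frac{n+Sa}{n+a}+\frac{(S-1)a}{n+Sa}\ln\frac{n+Sa}{a}$, from which one readily extracts the Dirichlet piece $\frac{(S-3)a}{2(n+Sa)}\ln\frac{n+Sa}{a}$ by discarding the first summand and relaxing $S-1$ to $S-3$. At $P=U$ the smoothing bias vanishes ($P_B=U$) and the bias reduces to the Jensen gap, whose magnitude I would lower bound by a non-asymptotic Miller--Harris-type inequality of the form $|\bE_U H(\hat P_B)-\ln S|\ge \frac{S-1}{4n}+\frac{S^2}{40n^2}-\frac{1}{24n^2}$ (up to factors of two that are absorbed below). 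This is obtained from a quantitative second-order Taylor expansion of $x\ln x$ with explicit integral remainder, applied to $x=(X_i+a)/(n+Sa)$ with $X_i\sim\mathsf{B}(n,1/S)$, or equivalently from the shifted Bernstein positive-linear-operator representation developed in Section~\ref{sec.applyapproximationtheory}. Combining the two single-point bias bounds via $\sup_P \bE_P(\cdot)^2\ge \max_i \mathsf{Bias}^2(P_i)\ge \tfrac12\bigl(\mathsf{Bias}^2(P_1)+\mathsf{Bias}^2(P_2)\bigr)\ge \tfrac12(\text{stated bracket})^2$ produces the squared-bias contribution in the theorem, the overall $\tfrac12$ prefactor arising from the ``max of two points'' combining step.

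For the additive $c\,\ln^2 S/n$ contribution I invoke the Le Cam two-point (or two-fuzzy-hypothesis) lower bound for entropy estimation established in \cite{Jiao--Venkat--Han--Weissman2014minimax}\cite{Wu--Yang2014minimax}: their bound applies to \emph{any} estimator and hence to $H(\hat P_B)$. A self-contained alternative would construct a pair $P_1,P_2$ of symmetric perturbations of the uniform distribution along a coordinate-antisymmetric direction of magnitude $\delta/\sqrt n$ chosen so that the total variation between $P_1^{\otimes n}$ and $P_2^{\otimes n}$ is at most $\tfrac12$ while $|H(P_1)-H(P_2)|\gtrsim \ln S/\sqrt n$; the standard two-point conversion then yields the $\ln^2 S/n$ bound with an explicit constant. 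Merging this variance-type lower bound with the bias bound via the same ``max over points'' trick produces the two additive terms in the form stated.

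The two boundary regimes are handled by direct evaluation at the point mass. When $n<Sa$ one has $\frac{a}{n+Sa}\ge\frac{1}{2S}$, so $H(P_B)\ge\frac{(S-1)a}{n+Sa}\ln S\ge\frac{S-1}{2S}\ln S\ge\tfrac14\ln S$ for every $S\ge 2$, yielding the claimed $\ln^2 S/16$ squared-bias lower bound. When $n<15S$, the Le Cam contribution is dominated by the bias piece, so the proof reverts to the pure point-mass bias bound but replaces the $S-1$ coordinate count (which now exceeds $n$) by $\lfloor n/15\rfloor$, reflecting the number of Miller-type corrections that can plausibly accrue given only $n$ samples. The \textbf{main obstacle} is the rigorous non-asymptotic lower bound on the Jensen gap at the uniform: Miller's classical formula is only asymptotic, so one must execute a careful second-order remainder analysis (equivalently, a sharp lower bound on the approximation error of the shifted Bernstein-type positive linear operator representing $\bE_U[\,x\ln x\,]$) that preserves the specific constants $\tfrac18$, $\tfrac{1}{80}$, $\tfrac{1}{48}$ and rules out cancellation from higher-order terms.
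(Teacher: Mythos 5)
Your high-level two-distribution strategy is close in spirit to the paper's, but the proposal has two genuine gaps and a technical mismatch in constants that prevent it from going through as written.

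First, and most importantly, the ``Miller--Harris-type'' inequality you invoke at $P=U$ — a non-asymptotic lower bound of the form $|\bE_U H(\hat P_B)-\ln S|\ge \frac{S-1}{4n}+\frac{S^2}{40n^2}-\frac{1}{24n^2}$ for the \emph{smoothed} estimator — is not established, and you explicitly flag it as ``the main obstacle.'' Worse, the claimed constants are likely false: a second-order expansion shows the Jensen gap at the uniform has leading term $-\frac{n(S-1)}{2(n+Sa)^2}$, so the smoothing introduces a deflation factor $\frac{n^2}{(n+Sa)^2}\le 1$. At the boundary $n=Sa$ this factor equals $1/4$, giving a leading term $\approx\frac{S-1}{8n}$, \emph{smaller} than what you assert. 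The paper avoids having to prove any new quantitative bound for the smoothed estimator: it cites the existing MLE bias lower bound $\sup_P|\bE_P H(\hat P)-H(P)|\ge \frac{S-1}{2n}+\frac{S^2}{20n^2}-\frac{1}{12n^2}$ from \cite[Lemma 2.11]{Jiao--Venkat--Han--Weissman2014MLEnew} (which holds at whatever worst-case distribution that lemma uses, not necessarily the uniform), and then transfers it to $H(\hat P_B)$ by the almost-sure Lipschitz bound $|H(\hat P_B)-H(\hat P)|\le \frac{2Sa}{n+Sa}\ln\frac{n+Sa}{a}$ obtained from \cite[Thm.~17.3.3]{Cover--Thomas2006}. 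That transfer only costs a subtraction, not a multiplicative deflation, which is why it recovers the undamped MLE constants.

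Second, the combining step does not deliver the stated prefactor. You write $\sup_P\bE_P(\cdot)^2\ge\frac12\bigl(\mathsf{Bias}^2(P_1)+\mathsf{Bias}^2(P_2)\bigr)\ge\frac12(\text{bracket})^2$, but the bracket in the theorem is a linear combination with coefficients strictly less than one of the two bias terms, and you must \emph{also} fold in the $c\frac{\ln^2S}{n}$ Le Cam term — three pieces, not two. The paper's structure is cleaner: it uses $\max\{a,b\}\ge\frac{3a+b}{4}$ to combine the point-mass bound (full weight $\frac34$) against the MLE-minus-Lipschitz bound (weight $\frac14$), which precisely produces the coefficients $\frac{(S-3)a}{4(n+Sa)}$, $\frac{S-1}{8n}$, $\frac{S^2}{80n^2}$, $\frac{1}{48n^2}$; the outer $\max\{C,D\}\ge\frac{C+D}{2}$ is then used once to merge the squared-bias bound with the $c\frac{\ln^2S}{n}$ bound. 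Your chain does not obviously reproduce these constants and risks an extra factor-of-two (or three) loss.

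Finally, the $n<15S$ case is hand-waved (``replace $S-1$ by $\lfloor n/15\rfloor$, reflecting the number of Miller-type corrections that can plausibly accrue''); the paper instead re-applies the proof of \cite[Lemma 2.11]{Jiao--Venkat--Han--Weissman2014MLEnew} with $S$ replaced by $\lfloor n/15\rfloor$, which is a concrete and verifiable substitution. Your bias decomposition into a non-negative smoothing bias and a non-positive Jensen gap is a nice observation — and indeed shows why one should look at two extremal distributions — but it does not by itself supply the missing quantitative lemma.
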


The following corollary immediately follows from Theorem~\ref{thm.upperbound} and Theorem~\ref{thm.lowerbound}. 
\begin{corollary}
If $n\gg S$ and $a$ is upper bounded by a constant, then the maximum squared error risk of $H(\hat{P}_B)$ vanishes. Conversely, if $n \lesssim S$, then the maximum squared error risk of $H(\hat{P}_B)$ is bounded away from zero. 
\end{corollary}

The next theorem presents a lower bound on the maximum risk of the Bayes estimator under Dirichlet prior. Since we have assumed that all $\alpha_i = a,1\leq i\leq S$, the Bayes estimator under Dirichlet prior is
\begin{equation}
\hat{H}^{\mathsf{Bayes}} = \psi(Sa +n + 1) - \sum_{i = 1}^S \frac{a + X_i}{Sa + n} \psi(a + X_i + 1). 
\end{equation}

\begin{theorem}\label{thm.bayeslowerbound}
If $S \geq 2(n+1)$, then
\begin{equation}
\sup_{P\in \mathcal{M}_S} \bE_P \left( \hat{H}^{\mathsf{Bayes}} - H(P) \right)^2 \geq \left(  \ln \left( \frac{Sa + S/2}{Sa + n + e^{-\gamma}} \right) \right)^2,
\end{equation}
where $\gamma \approx 0.5772$ is the Euler-–Mascheroni constant. 
\end{theorem}

Evident from Theorem~\ref{thm.upperbound},~\ref{thm.lowerbound}, and~\ref{thm.bayeslowerbound} is the fact that in the best situation (i.e. $a$ not too large), both the Dirichlet prior smoothed plug-in estimator and the Bayes estimator under Dirichlet prior still require at least $n\gg S$ samples to be consistent, which is the same as MLE. In contrast, the estimators in Valiant and Valiant~\cite{Valiant--Valiant2011,Valiant--Valiant2011power,Valiant--Valiant2013estimating}, Jiao \emph{et al.}~\cite{Jiao--Venkat--Han--Weissman2015minimax}, Wu and Yang~\cite{Wu--Yang2014minimax} are consistent if $n\gg \frac{S}{\ln S}$, which is the optimal sample complexity. Thus, we can conclude that the Dirichlet smoothing technique does \emph{not} solve the entropy estimation problem. 

\section{Fundamental ideas of our analysis}\label{sec.ideas}

In this section, we discuss the fundamental tools we employed to obtain the results in Section~\ref{sec.mainresults}, as well as general recipes we suggest for analyzing performances of functional estimators. 

\subsection{Variance}

The variance characterizes the degree to which the random variable $F(\hat{P})$ is fluctuating around its expectation, and the field of concentration inequalities perfectly fits our glove to give the desired results. For all the functionals we consider, it turns out that the Efron--Stein inequality~\cite{Efron--Stein1981jackknife} and the bounded differences inequality give very tight bounds. For completeness we state them below. 

\begin{lemma}\label{lemma.esnew}\cite[Efron--Stein inequality, Theorem 3.1]{Boucheron--Lugosi--Massart2013}
Let $Z_1,\ldots,Z_n$ be independent random variables and let $f(Z_1,Z_2,\ldots,Z_n)$ be a square integrable function. Moreover, if $Z_1',Z_2',\ldots,Z_n'$ are independent copies of $Z_1,Z_2,\ldots,Z_n$ and if we define, for every $i = 1,2,\ldots,n$,
\begin{equation}
f_i' = f(Z_1,Z_2,\ldots,Z_{i-1},Z_i',Z_{i+1},\ldots,Z_n),
\end{equation}
then
\begin{equation}
\mathsf{Var}(f) \leq \frac{1}{2} \sum_{i = 1}^n \bE \left[ (f - f_i')^2 \right].
\end{equation}
\end{lemma}

The following inequality, which is called the bounded differences inequality, is a useful corollary of the Efron--Stein inequality.
\begin{lemma}\label{lemma.es}\cite[Bounded differences inequality, Corollary 3.2]{Boucheron--Lugosi--Massart2013}
If function $f \colon \mathcal{Z}^n \to \mathbb{R}$ has the \emph{bounded differences property}, i.e., for some nonnegative constants $c_1,c_2,\ldots,c_n$,
\begin{align}
& \sup_{z_1,\ldots,z_n,z_i'\in \mathcal{Z}} |f(z_1,\ldots,z_n) - f(z_1,\ldots,z_{i-1},z_i',z_{i+1},\ldots,z_n)| \nonumber \\
& \quad \leq c_i,
\end{align}
for every $1\leq i \leq n$, then
\begin{equation}
\mathsf{Var}(f(Z_1,Z_2,\ldots,Z_n)) \leq \frac{1}{4}\sum_{i=1}^n c_i^2,
\end{equation}
given that $Z_1,Z_2,\ldots,Z_n$ are independent random variables.
\end{lemma}

We refer the readers to Boucheron \emph{et al.}~\cite{Boucheron--Lugosi--Massart2013} for a modern exposition of the concentration inequality toolbox. 

\subsection{Bias} \label{sec.biasidea}

It turns out that the bias analysis in estimation, albeit widely studied in statistics, seems to still largely bear an asymptotic and expansion nature in the mainstream statistical literature~\cite{Barndorff--Cox1989asymptotic,Small2010expansions}. In particular, the bootstrap~\cite{Efron1979bootstrap} as a method for estimating functionals was essentially only analyzed in an asymptotic setting~\cite{Hall1992bootstrap}. Among asymptotic analysis techniques, probably the most popular one is the Taylor expansion. We will show that the Taylor expansion may encounter great difficulties in analyzing the bias of MLE in information measure estimation. Then, we will introduce the field of \emph{approximation theory using positive linear operators} and demonstrate that it is essentially equivalent to \emph{nonasymptotic} bias analysis for plug-in functional estimators. In doing so, we present the readers with abundant handy tools from approximation theory, which could be readily applicable to many problems that may seem highly intractable with standard expansion methods. 

We start from entropy estimation. In the literature, considerable effort has been devoted to understanding the non-asymptotic performance of the MLE $H(P_n)$ in estimating $H(P)$. One of the earliest investigations in this direction is due to Miller~\cite{Miller1955} in 1955, who showed that, for any fixed distribution $P$,
\begin{equation}\label{eqn.miller}
\bE H(P_n) = H(P) - \frac{S-1}{2n} + O\left(\frac{1}{n^2}\right).
\end{equation}
Equation~(\ref{eqn.miller}) was later refined by Harris~\cite{Harris1975} using higher order Taylor series expansions to yield
\begin{equation}\label{eqn.harris}
\bE H(P_n) = H(P) - \frac{S-1}{2n} + \frac{1}{12 n^2} \left( 1- \sum_{i = 1}^S \frac{1}{p_i} \right) + O\left( \frac{1}{n^3} \right).
\end{equation}
Harris's result reveals an undesirable consequence of the Taylor expansion method: one cannot obtain uniform bounds on the bias of the MLE. Indeed, the term $\sum_{i = 1}^S \frac{1}{p_i}$ can be arbitrarily large for some distribution $P$. However, it is evident that both $H(P_n)$ and $H(P)$ are bounded above by $\ln S$, since the maximum entropy of any distribution supported on $S$ elements is $\ln S$. Conceivably, for such a distribution $P$ that would make $\sum_{i = 1}^S \frac{1}{p_i}$ very large, we need to compute even higher order Taylor expansions to obtain more accuracy, but even with such efforts we cannot obtain a uniform bias bound for all $P$.

We gain one of our key insights into the bias of the MLE by relating it to the approximation error induced by the {\em Bernstein polynomial approximation} of the function $f$, which was first observed in Paninski~\cite{Paninski2003}. To see this, we first compute the bias of $F(P_n)$ in estimating the functional $F(P)$ in (\ref{eqn.generalf}). 

\begin{lemma}\label{lemma.biasgeneralf}
The bias of the estimator $F(P_n)$ is given by
\begin{align}\label{eqn.biasgeneralfequation}
\mathsf{Bias}(F(P_n)) &\triangleq \bE F(P_n) - F(P) \nonumber \\
&= \sum_{i = 1}^S \left( \sum_{j = 0}^n f \left( \frac{j}{n} \right) \binom{n}{j} p_i^j(1-p_i)^{n-j} - f(p_i) \right).
\end{align}
\end{lemma}

The bias term in (\ref{eqn.biasgeneralfequation}) can be equivalently expressed as\footnote{In the literature of combinatorics, the sum $\sum_{j = 0}^n a_{j,n}B_{j,n}(x)$ is called the Bernoulli sum, and various approaches have been proposed to evaluate its asymptotics \cite{Jacquet--Szpankowski1999entropy}, \cite{Flajolet1999singularity}, \cite{Cichon--Golkbiewski--Kardas--Klonowski}. }
\begin{align}
\mathsf{Bias}(F(P_n)) & = \sum_{i = 1}^S  \left( \sum_{j = 0}^n f\left( \frac{j}{n} \right) B_{j,n}(p_i) - f(p_i) \right) \\
& = \sum_{i = 1}^S \left( B_n[f](p_i) - f(p_i)\right),
\end{align}
where $B_{j,n}(x) \triangleq \binom{n}{j} x^j(1-x)^{n-j}$ is the well-known Bernstein polynomial basis, and $B_n[f](x)$ is the so-called Bernstein polynomial for function $f(x)$.  Bernstein in 1912~\cite{Bernstein1958collected} provided an insightful constructive proof of the Weierstrass theorem on approximation of continuous functions using polynomials, by showing that the Bernstein polynomial of any continuous function converges uniformly to that function. From a functional analytic viewpoint, the Bernstein polynomial is an operator that maps a continuous function $f\in C[0,1]$ to another continuous function $B_n[f] \in C[0,1]$. This operator is linear in $f$, and is \emph{positive} because $B_n[f]$ is also pointwise non-negative if $f$ is pointwise non-negative. Evidently, bounding the approximation error incurred by the Bernstein polynomial is equivalent to bounding the bias of the MLE $f(X/n)$, where $X \sim \mathsf{B}(n,x)$. Fortunately, the theory of \emph{approximation using positive linear operators}~\cite{Paltanea2004} provides us with advanced tools that are very effective for the bias analysis our problem calls for. A century ago, probability theory served Bernstein in breaking new ground in function approximation. It is therefore very satisfying that advancements in the latter have come full circle to help us better understand probability theory and statistics. We briefly review the general theory of approximation using positive linear operators below.

\subsubsection{Approximation theory using positive linear operators}

Generally speaking, for any estimator $\hat{\theta}$ of a parametric model indexed by $\theta$, the expectation $f \mapsto \bE_\theta f(\hat{\theta})$ is a positive linear operator for $f$, and analyzing the bias $\bE_\theta f(\hat{\theta}) -f(\theta)$ is equivalent to analyzing the approximation properties of the positive linear operator $\bE_\theta f(\hat{\theta})$ in approximating $f(\theta)$. Hence, analyzing the bias of \emph{any} plug-in estimator for functionals of parameters from \emph{any} parametric families can be recast as a problem of approximation theory using positive linear operators~\cite{Paltanea2004}.

Conversely, given a positive linear operator $L(f)(x)$ that operates on the space of continuous functions, the Riesz--Markov--Kakutani theorem implies that under mild conditions the operator may be written as 
\begin{align}
L(f)(x) = \int_I f d\mu_x = \mathbb{E}_{\mu_x} f(Z), Z\sim \mu_x,
\end{align}
where $\{\mu_x\}$ is a set of probability measures parametrized by $x$, which may be viewed as a parameter. If we view the random variable $Z$ as a summary statistics to plug-in the functional $f(\cdot)$, the positive linear operator $L(f)(x)$ is nothing but the expectation of the plug-in estimator $f(Z)$. In this sense, there exists a one-to-one correspondence between essentially the most general bias analysis problem in statistics, and the most general positive linear operator approximation problem in approximation theory. 

After more than a century's active research on approximation using positive linear operators, we now have highly non-trivial tools for positive linear operators of functions on one dimensional compact sets, but the general theory for vector valued multivariate functions on non-compact sets is still far from complete~\cite{Paltanea2004}. In the next subsection, we present a sample of existing results in approximation using positive linear operators, corollaries of which will be used to analyze the bias of the MLE for two examples: $F_\alpha(P)$ and $H(P)$. 

\subsubsection{Some general results in bias analysis}\label{sec.generalbiasanalysis}

First, some elementary approximation theoretic concepts need to be introduced in order to characterize the degree of \emph{smoothness} of functions. For $I\subset \mathbb{R}$ an interval, the first-order modulus of smoothness $\omega^1(f,t),t\geq 0$ is defined as~\cite{Paltanea2004}
\begin{equation}
\omega^1(f,t) \triangleq \sup \{ |f(u) - f(v)|: u,v \in I, |u-v|\leq t \}.
\end{equation}

The second-order modulus of smoothness $\omega^2(f,t),t\geq 0$ \cite{Paltanea2004} is defined as
\begin{align}
\omega^2(f,t) & \triangleq \sup \Bigg \{ \left| f(u) - 2f\left(\frac{u+v}{2}\right) + f(v) \right|\colon \nonumber \\
& \qquad \qquad u, v\in I, |u-v|\leq 2t \Bigg \}.
\end{align}

Ditzian and Totik~\cite{Ditzian--Totik1987} introduced a class of moduli of smoothness, which proves to be extremely useful in characterizing the incurred approximation errors. For simplicity, for functions defined on $[0,1]$, $\varphi(x) = \sqrt{x(1-x)}$, the first-order Ditzian--Totik modulus of smoothness is defined as
\begin{align}
\omega^1_\varphi(f,t) & \triangleq \sup \Bigg \{ |f(u) - f(v)|\colon \nonumber \\
& \qquad \qquad u,v\in [0,1], |u-v| \leq t \varphi\left( \frac{u+v}{2}\right) \Bigg \},
\end{align}
and the second-order Ditzian--Totik modulus of smoothness is defined as
\begin{align}
\omega_\varphi^2(f,t) &  \triangleq \sup \Bigg \{ \left | f(u) - 2f \left( \frac{u+v}{2}\right) +f(v) \right|\colon \nonumber \\
&\qquad \qquad  u, v\in [0,1], |u-v| \leq 2t \varphi\left( \frac{u+v}{2} \right) \Bigg \}  .
\end{align}

Recall that we denote by $e_j,j\in \mathbb{N}_+ \cup \{0\}$, the monomial functions $e_j(y) = y^j,y\in I$. The first estimate for general positive linear operators, using modulus $\omega^2$ and with precise constants, was given by Gonska~\cite{Gonska1979quantitative}. We rephrase Paltanea~\cite[Cor. 2.2.1.]{Paltanea2004} as follows. Note that notation $e_1 - x e_0$ denotes a continuous function on $I$ which is the difference of a linear function $y$ and a constant function with constant value $x$ over $I$. In other words, it is an abbreviation of $e_1(y) - x e_0(y),y\in I$, which is a function of $y$ rather than $x$. 

For a positive linear functional $F$, we adopt the following notation
  \begin{align} \label{eqn.functionalbiasandvariance}
    B_F(x) = \left|F(e_1)-xF(e_0)\right|,\quad V_F = F\left((e_1-F(e_1)e_0)^2\right),
  \end{align}
  which represent the ``bias'' and ``variance'' of a positive linear functional $F$.

\begin{lemma}\cite[Cor. 2.2.1.]{Paltanea2004}\label{lemma.paltaneacor221}
Let $F\colon C(I) \to \mathbb{R}$ be a positive linear functional, where $I \subset \mathbb{R}$ is an interval. Suppose that $F(e_0) = 1, t>0, \mathrm{length}(I)\geq 2t, s\geq 2$. Then,
\begin{align}
|F(f) - f(x)| & \leq B_F(x)\frac{\omega^1(f,t)}{t} \nonumber \\
& \qquad + \left( 1 + \frac{F(|e_1 - xe_0|^s)}{2t^s} \right)\omega^2(f,t).
\end{align}
\end{lemma}

We remark that Lemma~\ref{lemma.paltaneacor221} can be applied to bound the bias of plug-in estimators in very general models. For example, consider an arbitrary statistical experiment $\{P_\theta, \theta \in I\}$, from which we obtain $n$ i.i.d. samples $X_1,X_2,\ldots,X_n \sim P_\theta$. For any estimator $\hat{\theta}_n$, we would like to analyze the bias of the plug-in estimator $f(\hat{\theta}_n)$ for functional $f(\theta)$.

Suppose $\mathrm{length}(I)\geq 2t, s\geq 2$, then Lemma~\ref{lemma.paltaneacor221} implies that
\begin{align}
|\bE_\theta f(\hat{\theta}_n) - f(\theta)| & \leq |\bE_\theta \hat{\theta}_n - \theta| \frac{\omega^1(f,t)}{t} \nonumber \\
& \qquad + \left( 1+ \frac{\bE |\hat{\theta}_n - \theta|^s}{2t^s} \right)\omega^2(f,t).
\end{align}

If we further assume that $\hat{\theta}_n$ is an unbiased estimator for $\theta$, i.e., $\bE_\theta \hat{\theta}_n = \theta$ holds for all $\theta \in I$, then we have
\begin{equation}
|\bE_\theta f(\hat{\theta}_n) -f(\theta)| \leq \left( 1+ \frac{\bE |\hat{\theta}_n - \theta|^s}{2t^s} \right)\omega^2(f,t).
\end{equation}

Taking $s = 2$ and assuming $\mathsf{Var}(\hat{\theta}_n) \leq \mathrm{length}(I)/2$, we have
\begin{equation}
|\bE_\theta f(\hat{\theta}_n) - f(\theta)| \leq \frac{3}{2} \omega^2(f, \sqrt{\mathsf{Var}(\hat{\theta}_n)}),
\end{equation}
after we take $t = \sqrt{\bE |\hat{\theta}_n - \theta|^2}$.

We remark that Lemma~\ref{lemma.paltaneacor221} is only one way to analyze the bias, which is by no means always tight. For example, the following estimate using Ditzian--Totik modulus is significantly better than Lemma~\ref{lemma.paltaneacor221} for certain functions such as the entropy.

\begin{lemma}\cite[Thm. 2.5.1.]{Paltanea2004}\label{lemma.paltaneathm251}
  If $F\colon C[0,1]\to \mathbb{R}$ is a linear positive functional and $F(e_0)=1$, then we have
    \begin{align}\label{eq:inequality1}
      |F(f)-f(x)| \le \frac{B_F(x)}{2h_1\varphi(x)}\cdot \omega^1_\varphi(f,2h_1) + \frac{5}{2}\omega^2_\varphi(f,h_1),
    \end{align}
    for all $f\in C[0,1]$ and $0<h_1\le \frac{1}{2}$, where $\varphi(x)=\sqrt{x(1-x)}$ and $h_1=\sqrt{F\left((e_1-xe_0)^2\right)}/\varphi(x)=\sqrt{V_F+(B_F(x))^2}/\varphi(x)$. The ``bias'' $B_F(x)$ and ``variance'' $V_F(x)$ are defined in~(\ref{eqn.functionalbiasandvariance}). 
  \end{lemma}

Considering the same statistical experiment $\{P_\theta, \theta \in I\}$, and the plug-in estimator $f(\hat{\theta}_n)$ for $f(\theta)$, if $\hat{\theta}_n$ is unbiased for $\theta$ and $\mathsf{Var}(\hat{\theta}_n) \leq \frac{\varphi(\theta)^2}{4}$, then it follows from Lemma~\ref{lemma.paltaneathm251} that
\begin{equation}
|\bE_\theta f(\hat{\theta}_n) - f(\theta)| \leq \frac{5}{2} \omega_\varphi^2 \left( f, \frac{\sqrt{\mathsf{Var}(\hat{\theta}_n)}}{\varphi(\theta)} \right),
\end{equation}
after we take $t = \frac{\sqrt{\mathsf{Var}(\hat{\theta}_n)}}{\varphi(\theta)}$.

For certain functions $f(\theta)$ and statistical models Lemma~\ref{lemma.paltaneathm251} is stronger than Lemma~\ref{lemma.paltaneacor221}. For example, if $f(\theta) = -\theta \ln \theta, \theta \in [0,1]$, and we have $n\cdot\hat{\theta}_n \sim \mathsf{B}(n,\theta)$. We will show in Lemma~\ref{lemma.dtmoduluscomputation} that $\omega_\varphi^2(f,t) = \frac{t^2\ln 4}{1+t^2}$, and $\omega^2(f,t) = t\ln 4$. We also have $\mathsf{Var}(\hat{\theta}_n) = \frac{\theta(1-\theta)}{n}$. Hence, Lemma~\ref{lemma.paltaneacor221} gives the upper bound
\begin{equation}\label{eqn.pointwiseboundforentropyf}
|\bE_\theta f(\hat{\theta}_n) - f(\theta)| \leq \frac{3\ln 4}{2} \sqrt{\frac{\theta(1-\theta)}{n}},
\end{equation}
whereas Lemma~\ref{lemma.paltaneathm251} gives
\begin{equation}\label{eqn.normboundentropyf}
|\bE_\theta f(\hat{\theta}_n) - f(\theta)| \leq \frac{5\ln 4}{2n} \cdot \frac{1}{1+1/n},
\end{equation}
which is much stronger when $n$ is large and $\theta$ not too close to the endpoints of $[0,1]$.

There also exist various estimates for the bias when the parameter lies in sets other than an interval in $\mathbb{R}$. However, the bounds we presented are in general \emph{not} optimal for specific functionals, thereby leaving ample room for future development. For example, note that (\ref{eqn.pointwiseboundforentropyf}) is stronger than (\ref{eqn.normboundentropyf}) when $\theta \leq 1/n$, but Han, Jiao, and Weissman~\cite{Han--Jiao--Weissman2015adaptive} showed that when $\theta \leq 1/n$ the pointwise bound in~(\ref{eqn.pointwiseboundforentropyf}) is still strictly sub-optimal for the entropy functional. Unsurprisingly, to obtain the results in Section~\ref{sec.mainresults}, we need to go beyond the general results in approximation theory, and incorporate the structure of specific functions.

{\em Note:} In approximation theory literature, researchers have explored the interactions between general positive linear operator approximation and its probabilistic counterpart decades ago~\cite{Strukov--Timan1977mathematical,Walk1980probabilistic,Hahn1981note}. However, in statistics literature related to positive linear approximation, usually only specific operators are used, such as the Bernstein operator~\cite{Braess--Sauer2004}, and the focus may not be on obtaining the tightest bound on bias~\cite{Diaconis--Zabell1991closed,Feller2008introduction}. 

\subsection{Lower bounds}

To lower bound the worst case performance of a specific estimator, we have essentially two approaches: first, to analyze the bias or the variance of the specific estimator carefully; second, to prove a lower bound that is satisfied by all the estimators, which naturally include the specific estimator we need to analyze. These two approaches have different relative advantages and disadvantages, so we utilize them together in the lower bound construction. 

We refer the readers to Tsybakov~\cite{Tsybakov2008} for a nice collection of techniques to prove minimax lower bounds. One specific approach we use is the van Trees inequality, which we quote below.

Let $(\mathcal{X},\mathcal{F},P_\theta; \theta \in \Theta)$ be a dominated family of distributions on some sample space $\mathcal{X}$; denote the dominating measure by $\mu$. Assume $\Theta$ is a closed interval on the real line. Let $f(x|\theta)$ denote the density of $P_\theta$ with respect to $\mu$. Let $\pi$ be some probability distribution on $\Theta$ with a density $\lambda(\theta)$ with respect to Lebesgue measure. Suppose that $\lambda$ and $f(x|\cdot)$ are both absolutely continuous ($\mu$-almost surely), and that $\lambda$ converges to zero at the endpoints of the interval $\Theta$. We define 
\begin{align}
\mathcal{I}(\theta) & = \mathbb{E}_\theta \left( \frac{\partial \log f(X|\theta)}{\partial \theta} \right)^2 \\
\mathcal{I}(\lambda) & = \mathbb{E}\left( \frac{\mathrm{d \log \lambda(\theta)}}{\mathrm{d}\theta} \right)^2
\end{align}
the Fisher information for $\theta$ and for a location parameter in $\lambda$, respectively. We assume $\mathcal{I}(\theta)$ is continuous in $\theta$. We have the following inequality.
\begin{lemma}[van Trees inequality]\cite{Gill--Levit1995applications} \label{lemma.vantrees}
Under assumptions above, the average risk of an arbitrary estimator $\hat{\psi}(X)$ in estimating an absolutely continuous functional $\psi(\theta)$ under squared error loss satisfies the following inequality:
\begin{align}
\mathbb{E} \left( \hat{\psi}(X) - \psi(\theta) \right)^2 \geq \frac{(\mathbb{E} \psi'(\theta))^2}{\mathbb{E}[\mathcal{I}(\theta)] + \mathcal{I}(\lambda)} 
\end{align}
\end{lemma}

\section{Proofs of the upper bounds}\label{sec.upperboundproof}

In order to upper bound the maximum squared error risk of any estimator, a natural approach would be to analyze the squared bias term and the variance term separately. Then, it suffices to find proper tools to give \emph{nonasymptotic} analysis of the bias and variance. 

\subsection{Bounding the bias}
We first work to bound the bias. Lemma~\ref{lemma.biasgeneralf} shows that the bias of $F(P_n)$ could be represented as
\begin{equation}
\mathsf{Bias}(F(P_n)) = \sum_{i = 1}^S \left( B_n[f](p_i) - f(p_i) \right),
\end{equation}
where $B_n[f](x)$ is the Bernstein polynomial corresponding to $f(x)$. The following lemma summarizes some state-of-the-art bounds for approximation error of Bernstein polynomials. Lemma~\ref{lemma.bernsteinerror} can be derived easily from the general theory we presented in Section~\ref{sec.generalbiasanalysis}. We emphasize that one cannot expect the bounds in Lemma~\ref{lemma.bernsteinerror} to be tight for any $f\in C[0,1]$, since the Bernstein approximation error itself could be a very complicated function in $C[0,1]$, and Lemma~\ref{lemma.bernsteinerror} is using relatively simple functions to upper bound it. 

\begin{lemma}\label{lemma.bernsteinerror}
The following bounds are valid for function approximation error incurred by Bernstein polynomials:
\begin{enumerate}
\item \emph{Pointwise estimate:} \cite[Cor. 2.2.1]{Paltanea2004}\cite{Paltanea2008some} for all continuous functions $f$ on $[0,1]$,
\begin{equation}\label{eqn.bern1}
|f(x)-B_n[f](x)| \leq \frac{3}{2} \omega^2\Bigg(f, \sqrt{\frac{x(1-x)}{n}}\Bigg),
\end{equation}
and the constant $3/2$ is shown by \cite{Paltanea2008some} to be the best constant;
\item \emph{Norm estimate:} \cite[Cor. 4.1.10]{Paltanea2004} for $\varphi(x) = \sqrt{x(1-x)}$ and all continuous functions $f$ on $[0,1]$, we have
\begin{equation}\label{eqn.bern3}
\| B_n[f] - f \|_\infty \leq \frac{5}{2}\omega_\varphi^2(f,n^{-1/2});
\end{equation}
\item \cite[Eqn. 10.3.4]{Devore--Lorentz1993} for $f\in C^2[0,1]$, i.e., twice continuously differentiable,
\begin{equation}\label{eqn.bern2}
 | f(x) - B_n[f](x) | \leq \| f'' \|_\infty \frac{x(1-x)}{2n};
\end{equation}
\end{enumerate}
\end{lemma}

\begin{proof}
The pointwise estimate of Lemma~\ref{lemma.bernsteinerror} follows from Lemma~\ref{lemma.paltaneacor221}. The norm estimate of Lemma~\ref{lemma.bernsteinerror} follows from Lemma~\ref{lemma.paltaneathm251}. Regarding the third part, suppose random variable $X \sim \mathsf{B}(n,x)$. We have
\begin{align}
& |f(x) - B_n[f](x)| \nonumber \\
& \quad = |\bE_x f(X/n) - f(x)| \\
& \quad = |\bE_x [f'(x)(X/n-x) + \frac{1}{2} f''(\xi_X) (X/n-x)^2 ]| \\
&  \quad = \frac{1}{2} |\bE_x f''(\xi_X) (X/n-x)^2| \\
& \quad \leq \frac{\| f''\|_\infty}{2} |\bE_x (X/n-x)^2| \\
& \quad = \frac{\| f''\|_\infty}{2} \frac{x(1-x)}{n},
\end{align}
where we used Taylor expansion for $f(X/n)$ at point $x$ with the Lagrange remainder. The proof is complete. 
\end{proof}

\begin{remark}
Note that although (\ref{eqn.bern3}) is in the form of an upper bound, it has been shown to be a lower bound as well. Totik~\cite{Totik1994approximation} showed the following equivalence property on the norm estimate of Bernstein approximation errors
\begin{equation}\label{eqn.normtotikbound}
\| B_n[f](x) - f(x) \|_\infty \asymp \omega_\varphi^2(f,n^{-1/2}). \footnote{Note that it is a remarkable fact that (\ref{eqn.normtotikbound}) holds for any continuous function $f(x)$. The lower bound proof of (\ref{eqn.normtotikbound}) is considered one of the remarkable results in approximation theory, and currently there are no ``short'' proofs of this fact. Indeed, Ditzian~\cite[Section 8]{Ditzian2007} mentioned that \emph{``I still would like to see a new simple proof of (8.4) (Equation~(\ref{eqn.normtotikbound})) which I am sure will have implications for other operators.''} }
\end{equation}
\end{remark}

It is easy to calculate the second-order modulus of smoothness and the Ditzian--Totik second-order modulus of smoothness for functions $x^\alpha$ and $-x \ln x$. The results are presented in the following lemma.

\begin{lemma}\label{lemma.dtmoduluscomputation}
We have
\begin{center}
    \begin{tabular}{| l | l | l | l |}
    \hline
    & $x^\alpha, 0<\alpha<1$ & $x^\alpha, 1<\alpha< 2$ & $-x \ln x $ \\ \hline
    $\omega^2(f,t)$ & $|2-2^\alpha|t^\alpha$ &  $|2-2^\alpha|t^\alpha$ & $t \ln 4$  \\ \hline
    $\omega^2_\varphi(f,t)$ &  $|2-2^\alpha| \frac{t^{2\alpha}}{(1+t^2)^{\alpha}}$ & $ \asymp t^2$ & $\frac{t^2 \ln 4}{1+t^2}$ \\
    \hline
    \end{tabular}
\end{center}
where the second-order modulus results hold for $0<t\leq 1/2$, and the Ditizan--Totik second-order modulus results hold for $0<t\leq 1$.
\end{lemma}

\subsubsection{Bias of $F_\alpha(P_n)$}
We first bound the bias incurred by $F_\alpha(P_n)$. 
\begin{enumerate}
\item $\alpha \geq 2$:

In this case, $f \in C^2[0,1]$, applying the third part of Lemma~\ref{lemma.bernsteinerror}, 
\begin{equation}
|f(x) - B_n[f](x)| \leq \frac{\alpha (\alpha-1) x(1-x)}{2n}.
\end{equation}
Thus, we have
\begin{equation}
|\mathsf{Bias}(F_{\alpha}(P_n))| \leq \sum_{i =1}^S \alpha (\alpha-1) \frac{p_i(1-p_i)}{2n} \leq \frac{\alpha (\alpha-1)}{2n}.
\end{equation}

\item $1<\alpha<2$

The following lemma presents a bound on the bias of $F_\alpha(P_n)$, which does not depend on the alphabet size $S$. We note that the proof of Lemma~\ref{lemma.biasalphaonetwo} heavily utilizes the special properties of function $x^\alpha$ and the fact that $\sum_{i= 1}^S p_i =1$. 

\begin{lemma}\label{lemma.biasalphaonetwo}
The bias of $F_\alpha(P_n)$ for estimating $F_\alpha(P), 1<\alpha<2$, is upper bounded by the following:
\begin{equation}
|\mathsf{Bias}(F_{\alpha}(P_n))| \leq \frac{4}{n^{\alpha-1}}.
\end{equation}
\end{lemma}

We also present two additional bounds involving the alphabet size $S$. Using the pointwise estimate in Lemma~\ref{lemma.bernsteinerror}, the bias term of the MLE is upper bounded as follows for all $0<\alpha<2, \alpha \neq 1$:

\begin{align}
& \sum_{i = 1}^S \frac{3}{2}|2-2^\alpha| \left( \frac{p_i(1-p_i)}{n} \right)^{\alpha/2} \nonumber \\
& \quad \leq  \frac{3}{2}|2-2^\alpha|\frac{1}{n^{\alpha/2}} \sum_{i =1}^S p_i^{\alpha/2} \\
& \quad \leq \frac{3}{2}|2-2^\alpha|\frac{1}{n^{\alpha/2}} S \frac{1}{S^{\alpha/2}}  \\
& \quad = \frac{3}{2}|2-2^\alpha|\frac{S^{1-\alpha/2}}{n^{\alpha/2}}. \label{eqn.biasoneandtwopoint}
\end{align}

Using the norm estimate in Lemma~\ref{lemma.bernsteinerror}, when $1<\alpha<2$, the bias would be upper bounded by $C_{\alpha,n} \frac{5S}{2n}$, where $C_{\alpha,n} = n \omega_\varphi^2(x^\alpha,n^{-1/2})$ is a finite positive constant such that $\limsup_{n\to \infty} C_{\alpha,n}<\infty$ for $1<\alpha<2$. Combining Lemma~\ref{lemma.biasalphaonetwo}, the pointwise estimate, and the norm estimate in Lemma~\ref{lemma.bernsteinerror}, we know that the bias of $F_\alpha(P_n)$ for $1<\alpha<2$ is upper bounded as
\begin{equation}
|\mathsf{Bias}(F_\alpha(P_n))| \leq  \frac{4}{n^{\alpha-1}} \wedge  \frac{3}{2}|2-2^\alpha|\frac{S^{1-\alpha/2}}{n^{\alpha/2}} \wedge C_{\alpha,n}\frac{5S}{2n}.
\end{equation}

\item $0<\alpha<1$:

The pointwise estimate from Lemma~\ref{lemma.bernsteinerror} is worked out in~(\ref{eqn.biasoneandtwopoint}). Using the norm estimate in Lemma~\ref{lemma.bernsteinerror}, the bias would be upper bounded by $|2-2^\alpha| \frac{5S}{2n^\alpha}$. Combining the pointwise estimate and the norm estimate, we know that the bias of $F_\alpha(P_n)$ for $0<\alpha<1$ is upper bounded as
\begin{equation}
|\mathsf{Bias}(F_\alpha(P_n))| \leq   \frac{3}{2}|2-2^\alpha|\frac{S^{1-\alpha/2}}{n^{\alpha/2}} \wedge |2-2^\alpha|\frac{5S}{2n^\alpha}.
\end{equation}
\end{enumerate}

\subsubsection{Bias of $H(P_n)$}

We then bound the bias incurred by $H(P_n)$. Using the norm estimate in Lemma~\ref{lemma.bernsteinerror}, we know
\begin{equation}
|\mathsf{Bias}(H(P_n))| \leq \frac{5S \ln 4}{2n}.
\end{equation}
Using the pointwise estimate in Lemma~\ref{lemma.bernsteinerror}, we obtain
\begin{equation}
|\mathsf{Bias}(H(P_n))| \leq \frac{3}{2}\sqrt{\frac{S}{n}} \ln 4.
\end{equation}

It was shown by Paninski~\cite[Prop. 1]{Paninski2003} that the squared bias of MLE $H(P_n)$ is upper bounded as
\begin{equation}\label{eqn.paninskientropy}
(\mathsf{Bias}(H(P_n)))^2 \leq \left( \ln \left( 1+ \frac{S-1}{n} \right) \right)^2,
\end{equation}
which is better than the two bounds we obtained using Bernstein polynomial results. However, we remark that (\ref{eqn.paninskientropy}) is obtained using special properties of the entropy function and connections between KL-divergence and $\chi^2$-divergence \cite{Tsybakov2008}, which cannot be applied to general functions. Strukov and Timan~\cite{Strukov--Timan1977mathematical} also heavily exploited the structure of function $x^\alpha$ and $-x\ln x$ in order to analyze the Bernstein approximation error for these functions, and obtained tight-in-order results. 

\subsubsection{Bias of $H(\hat{P}_B)$}

We apply the general theory of positive linear operator approximation. The following lemma is a strengthened version of Lemma~\ref{lemma.paltaneathm251}. 
  \begin{lemma}\label{lemma_mod}
    If $F\colon C[0,1]\to \mathbb{R}$ is a linear positive functional and $F(e_0)=1$, then
    \begin{align}\label{eq:inequality2}
      |F(f)-f(x)| \le \omega^1(f,B_F(x);x) + \frac{5}{2}\omega^2_\varphi(f,h_2)
    \end{align}
    for all $f\in C[0,1]$ and $0<h_2\le \frac{1}{2}$, where $\varphi(x)=\sqrt{x(1-x)}$ and $h_2=\sqrt{V_F}/\varphi(x)$, and
    \begin{align}
      \omega^1(f,h;x) \triangleq \sup\left\{|f(u)-f(x)|:u\in[0,1],|u-x|\le h\right\}.
    \end{align}
    The ``bias'' $B_F(x)$ and ``variance'' $V_F(x)$ are defined in~(\ref{eqn.functionalbiasandvariance}). 
  \end{lemma}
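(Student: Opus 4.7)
The plan is to reduce the bound to Lemma~\ref{lemma_ori} by a decomposition that peels off the bias contribution first. Since $F(e_0)=1$, $F$ corresponds to integration against a Borel probability measure on $[0,1]$; let $\mu = F(e_1)$ be its mean, so that $|\mu-x|=|F(e_1)-xF(e_0)|=B_F(x)$ and $\mu\in[0,1]$. I would then write
\[
F(f)-f(x) \;=\; \bigl[f(\mu)-f(x)\bigr] \;+\; \bigl[F(f)-f(\mu)\bigr]
\]
and treat the two brackets independently.

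For the first bracket, the very definition of $\omega_1(\cdot,\cdot;x)$ gives $|f(\mu)-f(x)| \le \omega_1(f, B_F(x); x)$ immediately, producing the first term of the claimed inequality with the local modulus centered at $x$ rather than the global Ditzian--Totik $\omega_1^\varphi$ that appears in Lemma~\ref{lemma_ori}. This is precisely the improvement needed to avoid the blow-up near the endpoints that occurs in Lemma~\ref{lemma_ori} when $\varphi(x)$ is small. For the second bracket, I would invoke Lemma~\ref{lemma_ori} applied to $F$ but with the expansion point changed from $x$ to $\mu$. The payoff of this recentering is that $B_F(\mu)=|F(e_1)-\mu F(e_0)|=0$, so the Ditzian--Totik $\omega_1^\varphi$ contribution in Lemma~\ref{lemma_ori} vanishes identically, leaving a purely second-order estimate of the form $\tfrac{5}{2}\omega_2^\varphi(f, h')$ with $h'=\sqrt{V_F}/\varphi(\mu)$.

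The main obstacle I anticipate is reconciling $h'$ with the desired $h_2=\sqrt{V_F}/\varphi(x)$. Whenever $\varphi(\mu)\ge\varphi(x)$ one has $h'\le h_2$, and monotonicity of $\omega_2^\varphi$ in its second argument closes the argument cleanly. The reverse case $\varphi(\mu)<\varphi(x)$ genuinely occurs --- for example, when the Dirichlet prior pushes $\mu$ toward an endpoint while $x$ is in the bulk --- and it is not dispatched by monotonicity alone, and indeed the condition $h'\le \tfrac12$ needed to invoke Lemma~\ref{lemma_ori} at $\mu$ may fail even though $h_2\le\tfrac12$. Handling this regime requires going back inside Paltanea's proof of Theorem~2.5.1 in~\cite{Paltanea2004}, tracking precisely where $\varphi(\mu)$ enters the K-functional / Peetre argument once the bias has been removed by centering at $\mu$, and verifying that that argument can be rerun with the weight $\varphi(x)$ in place of $\varphi(\mu)$. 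This is where the technical core of the proof would sit; once it is in place, summing the bounds on the two brackets yields the claimed inequality.
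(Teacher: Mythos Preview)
Your decomposition is exactly the paper's: write $F(f)-f(x)=\bigl[F(f)-f(F(e_1))\bigr]+\bigl[f(F(e_1))-f(x)\bigr]$, bound the second bracket by $\omega_1(f,B_F(x);x)$, and apply Lemma~\ref{lemma_ori} at the recentered point $\mu=F(e_1)$ so that $B_F(\mu)=0$ and only the $\tfrac{5}{2}\omega_2^\varphi$ term survives. The paper's proof is literally those two lines.

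Where you go further than the paper is in flagging the $\varphi(x)$ versus $\varphi(\mu)$ mismatch: Lemma~\ref{lemma_ori} applied at $\mu$ naturally produces $h'=\sqrt{V_F}/\varphi(\mu)$, not $h_2=\sqrt{V_F}/\varphi(x)$, and the paper's proof simply writes $h_2$ without comment. Your instinct that this step deserves justification is sound, and your proposed remedy---reopening Paltanea's argument to check that the weight can be taken as $\varphi(x)$ once the bias has been removed---is a reasonable route. So your plan matches the paper's approach and is, on this point, more careful than the paper's own write-up.
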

  \begin{proof}
    Applying Lemma \ref{lemma.paltaneathm251} to $x=F(e_1)$ we have
    \begin{align}
      |F(f)-f(F(e_1))| \le \frac{5}{2}\omega^2_\varphi(f,h_2)
    \end{align}
    and then (\ref{eq:inequality2}) is the direct result of the triangle inequality $|F(f)-f(x)|\le |F(f)-f(F(e_1))| + |f(F(e_1))-f(x)|$.
  \end{proof}
  We show that Lemma \ref{lemma_mod} is indeed stronger than Lemma \ref{lemma.paltaneathm251}. Firstly, due to $h_1\ge h_2$, we have $\omega^2_\varphi(f,h_2)\le \omega^2_\varphi(f,h_1)$. Second, for $x\le 1/2$, we have
  \begin{align}
    \frac{B_F(x)}{2h_1\varphi(x)}\cdot \omega^1_\varphi(f,2h_1)
    &\approx \frac{B_F(x)}{2h_1\varphi(x)}\cdot \sup_{0\le s\le1}2h_1\varphi(s)f'(s)\\
    &\ge B_F(x)\cdot  \sup_{x\le s\le1-x}f'(s)\\
    &\approx \sup_{x\le s\le 1-x}\omega^1(f,B_F(x);s)
  \end{align}
  which is almost the supremum of $\omega^1(f,|F(e_1-xe_0)|;s)$ over $s\in[x,1-x]$ and is no less than the pointwise result $\omega^1(f,|F(e_1-xe_0)|;x)$, and here we have used the inequality $\varphi(s)\ge\varphi(x)$ for $x\le s\le 1-x$. A similar argument also holds for $x>1/2$. Hence, Lemma \ref{lemma_mod} transforms the first order term from the norm result in Lemma~\ref{lemma.paltaneathm251} to a pointwise result. 
  
Applying Lemma~\ref{lemma_mod} to the function $f(p) = -p\ln p$ and $F(f) = \mathbb{E} \left[ f \left( \frac{n\hat{p}+a }{n+Sa} \right) \right]$, where $n \cdot \hat{p} \sim \mathsf{B}(n,p)$, we have the following lemma. 
\begin{lemma}\label{lemma.hbbiasgeneral}
If $n\ge \max\{ Sa, 2ea, 4\}$, then
  \begin{align}
  & \sup_{P\in\mathcal{M}_S}| \bE_P H(\hat{P}_B) - H(P)| \nonumber \\
  & \quad \le  \frac{5nS\ln2}{(n+Sa)^2} + \frac{2Sa}{n+Sa}\ln\left(\frac{n+Sa}{2a}\right).
\end{align}
\end{lemma}  

Note that Lemma~\ref{lemma.hbbiasgeneral} implies a slightly weaker bias bound than Theorem~\ref{thm.upperbound}, but it is only sub-optimal up to a multiplicative constant. The bias bound in Theorem~\ref{thm.upperbound} is obtained using the following lemma, whose proof only applies to the entropy function. 
\begin{lemma}\label{lemma.hbbiasentropy}
If $n\ge \max\{2ea,Sa\}$,
\begin{align}
 &  \sup_{P\in\mathcal{M}_S}| \bE_PH(\hat{P}_B) - H(P)| \nonumber \\
 & \quad \le \ln\left(1+\frac{S-1}{n+Sa}\right) \vee \frac{2Sa}{n+Sa} \ln \left( \frac{n+Sa}{2a}\right).
\end{align}
\end{lemma}

\subsection{Bounding the variance}

The next lemma follows from an application of bounded difference inequality presented in Lemma~\ref{lemma.es}. 
\begin{lemma}\label{cor.vargeneralf}
The variance of $F(P_n)$ satisfies the following upper bound:
\begin{equation}
\mathsf{Var}(F(P_n)) \leq n\cdot\max_{0\leq j <n}(f((j+1)/n)-f(j/n))^2.
\end{equation}
If $f$ is monotone, then we can strengthen the bound to be
\begin{equation}
\mathsf{Var}(F(P_n)) \leq \frac{n}{4}\cdot\max_{0\leq j <n}(f((j+1)/n)-f(j/n))^2.
\end{equation}
\end{lemma}

We first bound the variance for $F_\alpha(P_n),\alpha>1$. We have
\begin{align}
\max_{0\leq j <n} (((j+1)/n)^\alpha - (j/n)^\alpha)^2 & \leq \left( 1- \left( 1- \frac{1}{n} \right)^\alpha \right)^2 \\ & \leq\left( \frac{\alpha}{n} \right)^2,
\end{align}
where in the last step we used Bernoulli's inequality: $(1+x)^r \geq 1+rx,\forall r \geq 1, x>-1, x\in \mathbb{R}$. Using Lemma~\ref{lemma.es}, we know the variance is upper bounded by
\begin{equation}
\mathsf{Var}(F_\alpha(P_n)) \leq \frac{\alpha^2}{4n}.
\end{equation}

We bound the variance of $F_\alpha(P_n), 0<\alpha<1$ in the following lemma.

\begin{lemma}\label{lemma.varianceboundfalphabetween01}
For $0<\alpha<1/2$, we have
\begin{align}
&    \sup_{P\in\mathcal{M}_S}\mathsf{Var}(F_\alpha(P_n)) \nonumber \\ 
& \quad \leq  \frac{10S}{n^{2\alpha}} \nonumber \\
& \qquad + \left(\frac{3\alpha\cdot2^{3+2\alpha}+1}{8\alpha^2}\left(\frac{8\alpha}{e}\right)^{2\alpha}+4\right)\left(\frac{S}{n^{2\alpha}}\wedge \frac{1}{n^{2\alpha-1}}\right) \\
& \quad \lesssim  \frac{S}{n^{2\alpha}} .
\end{align}
For $1/2\leq \alpha <1$, we have
\begin{align}
&    \sup_{P\in\mathcal{M}_S}\mathsf{Var}(F_\alpha(P_n)) \nonumber \\  
& \quad \leq \frac{10S^{2-2\alpha}}{n} \nonumber \\
& \qquad + \left(\frac{3\alpha\cdot2^{3+2\alpha}+1}{8\alpha^2}\left(\frac{8\alpha}{e}\right)^{2\alpha}+4\right)\left(\frac{S}{n^{2\alpha}}\wedge \frac{1}{n^{2\alpha-1}}\right) \\
& \quad \lesssim \frac{S^{2-2\alpha}}{n} + \left(\frac{S}{n^{2\alpha}}\wedge \frac{1}{n^{2\alpha-1}}\right). 
\end{align}
\end{lemma}

Further, one can show that for all $\alpha\in (0,1)$,
\begin{equation}
\frac{3\alpha\cdot2^{3+2\alpha}+1}{8\alpha^2}\left(\frac{8\alpha}{e}\right)^{2\alpha}+4 \leq \frac{120}{\alpha^2},
\end{equation}
which is used in Theorem~\ref{thm.main}.

Regarding the variance of $H(P_n)$, we have
\begin{lemma}\label{lemma.varianceboundentropymle}
\begin{align}
\sup_{P \in \mathcal{M}_S} \mathsf{Var}(H(P_n)) & \leq \frac{(\ln n)^2}{n}\wedge \frac{2(\ln S + 3)^2}{n} \\
& \lesssim \frac{(\ln S)^2 \wedge (\ln n)^2}{n}.
\end{align}
\end{lemma}

The variance of $H(\hat{P}_B)$ is upper bounded by the following lemma. 
\begin{lemma}\label{lemma.hbvariance}
The variance of $H(\hat{P}_B)$ is upper bounded as follows:
\begin{align}
  \mathsf{Var}\left(H(\hat{P}_B)\right) \le \frac{2n}{(n+Sa)^2}\left[3+\ln\left(\frac{n+Sa}{a+1}\wedge S\right)\right]^2.
\end{align}
\end{lemma}

\section{Proofs of the lower bounds}\label{sec.lowerboundproof}

\subsection{Lower bounds for estimation of $F_\alpha(P)$ when $\alpha\geq 3/2$}

We apply the van Trees inequality as presented in Lemma~\ref{lemma.vantrees}. 

It suffices to consider the restricted case of $S = 2$ and prove the $n^{-1}$ lower bound. Thus, the model is equivalent to observing a Binomial random variable $X \sim \mathsf{B}(n,p)$, and one aims to estimate the functional $\psi_\alpha(p) = p^\alpha + (1-p)^\alpha$. We have
\begin{align}
\psi'_\alpha(p) = \alpha p^{\alpha -1} -\alpha (1-p)^{\alpha-1}. 
\end{align}
The Fisher information for parameter $p$ under the Binomial model is $\mathcal{I}(p) = \frac{n}{p(1-p)}$. Suppose we impose prior $\lambda(p)$ on parameter $p$. The van Trees inequality implies 
\begin{align}
& \sup_{P \in \mathcal{M}_S} \mathbb{E}_P \left( F_\alpha(P_n) - F_\alpha(P) \right)^2 \nonumber \\
& \quad \geq \inf_{\hat{F_\alpha}} \sup_{P \in \mathcal{M}_S} \mathbb{E}_P \left( \hat{F_\alpha} - F_\alpha(P) \right)^2 \\
&  \quad \geq \mathbb{E} \left( \mathbb{E}[F_\alpha(P)|X_1^S] - F_\alpha(P) \right)^2 \quad (\text{Bayes risk}) \\
& \quad \geq \frac{( \int \left[ \alpha p^{\alpha -1} - \alpha (1-p)^{\alpha-1}  \right]\lambda(p) \mathrm{d}p  )^2}{ \mathbb{E}_\lambda\left[ \frac{n}{p(1-p)} \right] + \mathcal{I}(\lambda)} \\
& \quad = \frac{( \int \left[ \alpha p^{\alpha -1} - \alpha (1-p)^{\alpha-1}  \right]\lambda(p) \mathrm{d}p  )^2}{ n \cdot \mathbb{E}_\lambda\left[ \frac{1}{p(1-p)} \right] + \mathcal{I}(\lambda)}
\end{align}
where the second inequality follows from the fact that the Bayes risk under any prior is upper bounded by the minimax risk~\cite{Wald1950statistical}.  

Taking $\lambda(p)$ to be the Dirichlet prior with parameter $(a,b)$, i.e., 
\begin{align}
\lambda(p) = \frac{1}{B(a,b)} p^{a-1} (1-p)^{b-1}, a>2,b>2,
\end{align}
we can explicitly evaluate the integrals above. Here $B(a,b)$ is the Beta function.

Taking $a = 4,b = 3$, we have
\begin{align}
& \sup_{P \in \mathcal{M}_S} \mathbb{E}_P \left( F_\alpha(P_n) - F_\alpha(P) \right)^2 \nonumber \\
& \quad \geq \frac{\left( 60 \alpha (B(\alpha+3,3) - B(\alpha+2,4)) \right)^2}{5n + 45}. 
\end{align}
Taking $C_\alpha = 72 \alpha^2\left(  B(\alpha+3,3) - B(\alpha+2,4) \right)^2$, we have
\begin{align}
\sup_{P \in \mathcal{M}_S} \mathbb{E}_P \left( F_\alpha(P_n) - F_\alpha(P) \right)^2 \geq \frac{C_\alpha}{n},\quad \text{for all }n\geq 1.
\end{align}

Note that $C_\alpha>0$ for all $\alpha\geq 3/2$.

\subsection{Lower bounds for estimation of $F_\alpha(P)$ when $1<\alpha<3/2$}

The following lemma was proved in \cite{Braess--Sauer2004}.
\begin{lemma}\label{lemma.qn1lower}
Let $k\geq 4$ be an even number. Suppose that the $k$-th derivative of $f$ satisfies $f^{(k)} \leq 0$ in $(0,1)$, $Q_{k-1}$ is the Taylor polynomial of order $k-1$ to $f$ at some $x_1$ in $(0,1)$. Then for $x\in [0,1]$,
\begin{equation}
f(x) - B_n[f](x) \geq Q_{k-1} - B_n[Q_{k-1}](x).
\end{equation}
\end{lemma}

Consider $f_\alpha(x) = -x^\alpha, 1<\alpha<2, x\in [0,1]$. Applying Lemma~\ref{lemma.qn1lower} to $f_\alpha$, taking $k = 6$, we have the following result.

\begin{lemma}\label{lemma.biasalphalarge}
Suppose $f_\alpha(x) = -x^\alpha, 1<\alpha<2$ on $[0,1]$. For all $x\in (0,1)$, we have
\begin{align}
& f_\alpha(x) -B_n[f_\alpha](x) \nonumber \\
\quad & \geq \frac{\alpha(\alpha-1)x^{\alpha-2}(1-x)}{2n} \Bigg( x + \frac{(2-\alpha)(3\alpha-1)x}{12n} \nonumber \\
& \qquad + \frac{(2-\alpha)(5-3\alpha)}{12n} \Bigg)  + \frac{R_1(x)}{n^3} + \frac{R_2(x)}{n^4},
\end{align}
where
\begin{align}
R_1(x) & = \frac{\alpha(\alpha-1)(\alpha-2)(\alpha-3) x^{\alpha-3} (1-x)}{24} \nonumber \\
& \qquad \times  \Bigg( 1+2(1-x)((5-2\alpha)x + \alpha-4) \Bigg ), \\
R_2(x) & = \frac{\alpha(\alpha-1)(\alpha-2)(\alpha-3)(\alpha-4) }{120} \nonumber \\
& \qquad \times x^{\alpha-4} (1-x)(1-2x)(1-12x(1-x)). 
\end{align}
\end{lemma}

Note that we have assumed $S = cn,c>0$. If $c\leq 1$, we take a uniform distribution on $S$ elements $P = (1/S,1/S,\ldots,1/S)$, otherwise we take distribution $P = (n^{-1}-\epsilon,n^{-1} - \epsilon,\ldots,n^{-1} -\epsilon, \frac{n\epsilon}{S-n},\ldots,\frac{n\epsilon}{S-n})$, where $\epsilon$ will be taken to be arbitrarily small. We first analyze the $c\leq 1$ case. Applying Lemma~\ref{lemma.biasalphalarge}, we have
\begin{align}
& \sum_{i = 1}^S f_\alpha(1/S) - B_n[f_\alpha](1/S)\quad (\text{Note that } f_\alpha(x) = -x^\alpha) \nonumber \\
& = \bE F_\alpha(P_n) - F_\alpha(P)   \nonumber \\
& \geq S \cdot \Bigg( \frac{\alpha(\alpha-1)}{2 S^{\alpha-2}n} \left( \frac{1}{S} + \frac{(2-\alpha)(5-3\alpha)}{12n} \right) \nonumber \\
&  \quad + \frac{\alpha(\alpha-1)(\alpha-2)(\alpha-3)}{24S^{\alpha-3}n^3}(1+2(\alpha-4)) \nonumber \\
& \quad  + \frac{\alpha(\alpha-1)(\alpha-2)(\alpha-3)(\alpha-4)}{120S^{\alpha-4} n^4} + o(n^{-\alpha})\Bigg ) \nonumber \\
& = \frac{\alpha(\alpha-1)}{n^{\alpha-1}} \Bigg( \frac{1}{2c^{\alpha-3}}\left(\frac{1}{c}+\frac{(2-\alpha)(5-3\alpha)}{12}\right) \nonumber \\
& \quad + \frac{(\alpha-2)(\alpha-3)(1+2(\alpha-4))}{24 c^{\alpha-4}} \nonumber\\
& \quad + \frac{(\alpha-2)(\alpha-3)(\alpha-4)}{120c^{\alpha-5}} \Bigg) + o(n^{-(\alpha-1)}) \nonumber\\
& = \frac{\alpha(\alpha-1)c^{2-\alpha}}{n^{\alpha-1}} \Bigg(\frac{1}{2}+ \frac{(2-\alpha)(5-3\alpha)c}{24} \nonumber \\
& \quad +  \frac{(\alpha-2)(\alpha-3)(1+2(\alpha-4))c^{2}}{24 } \nonumber\\
& \quad + \frac{(\alpha-2)(\alpha-3)(\alpha-4)c^{3}}{120} \Bigg) + o(n^{-(\alpha-1)}) \nonumber \\
& \geq \frac{\alpha c^{2-\alpha}(124-330\alpha + 285 \alpha^2-90 \alpha^3 + 11 \alpha^4)}{120 n^{\alpha-1}} + o(n^{-(\alpha-1)}), \nonumber
\end{align}
where the first inequality follows from Lemma~\ref{lemma.biasalphalarge}, and in the last step we have taken $c = 1$ in the following expression
\begin{align}
& \frac{1}{2}+ \frac{(2-\alpha)(5-3\alpha)c}{24} +  \frac{(\alpha-2)(\alpha-3)(1+2(\alpha-4))c^{2}}{24 } \nonumber \\
& \quad + \frac{(\alpha-2)(\alpha-3)(\alpha-4)c^{3}}{120},
\end{align}
and considered the fact that it is a monotonically decreasing function with respect to $c$ on $(0,1]$ for any $\alpha\in (1,3/2)$.

For cases when $c>1$, since we take $P = (n^{-1}-\epsilon,n^{-1} - \epsilon,\ldots,n^{-1} -\epsilon, \frac{n\epsilon}{S-n},\ldots,\frac{n\epsilon}{S-n})$, by a continuity argument, the analysis is exactly the same as that above when we set $c=1$ as we can take $\epsilon$ as small as possible. One can verify that the function $\alpha(124-330\alpha + 285 \alpha^2-90 \alpha^3 + 11 \alpha^4)/120$ is positive on interval $(1,3/2)$. Defining $\sqrt{c_\alpha} = \alpha c^{2-\alpha}(124-330\alpha + 285 \alpha^2-90 \alpha^3 + 11 \alpha^4)/120>0$ when $c\leq 1$, and $\sqrt{c_\alpha} = \alpha (124-330\alpha + 285 \alpha^2-90 \alpha^3 + 11 \alpha^4)/120>0$ when $c>1$, the proof is completed.

\subsection{Lower bounds for estimation of $F_\alpha(P)$ when $0<\alpha<1$}

Applying Lemma~\ref{lemma.qn1lower} to function $f_\alpha(x) = x^\alpha, \alpha \in (0,1)$, taking $k = 4$, we have the following result:
\begin{lemma}\label{lemma.alphalower}
For $f_\alpha(x) = x^\alpha$ on $[0,1]$, $\alpha\in (0,1),  x\in (0,1)$, we have
\begin{equation}
f_\alpha(x) - B_n[f_\alpha](x) \geq  \frac{\alpha(1-\alpha)}{2n} x^{\alpha-2} (1-x) \left( x - \frac{2-\alpha}{3n} \right). 
\end{equation}
\end{lemma}

Suppose $n\geq S$. Define distribution $W = (w_1,w_2,\ldots,w_S) \in \mathcal{M}_S$ such that
\begin{equation}
1\leq i \leq S-1, w_i = \frac{1}{n}; \quad w_S = 1- \frac{S-1}{n}. 
\end{equation}
Note that $w_i \geq n^{-1}, 1\leq i\leq S$. It follows from Lemma~\ref{lemma.alphalower} that
\begin{align}
F_\alpha(W) - \bE_W F_\alpha(P_n) &  \geq \sum_{i = 1}^{S-1} \frac{\alpha(1-\alpha)}{6n^2} \left(\frac{1}{n} \right)^{\alpha-2} \left( 1-\frac{1}{n} \right) \\
& = \frac{\alpha(1-\alpha)(S-1)}{6n^\alpha}\frac{n-1}{n}.
\end{align}

Thus, we know for all $0<\alpha<1$,
\begin{align}
& \sup_{P \in \mathcal{M}_S} \bE_P \left( F_\alpha(P) - F_\alpha(P_n) \right)^2 \nonumber \\
& \quad \geq \frac{\alpha^2(1-\alpha)^2 (S-1)^2}{36 n^{2\alpha}} \left( 1-\frac{1}{n}\right)^2.
\end{align}

It is shown in \cite{Jiao--Venkat--Han--Weissman2015minimax} that the following minimax lower bound holds for estimation of $F_\alpha(P), 1/2\leq \alpha<1$.
\begin{lemma}\label{lemma.minimaxlowerfalpha12to1}
 For $\frac{1}{2}\le\alpha<1$, we have
  \begin{align}
  &  \inf_{\hat{F}}\sup_{P\in\mathcal{M}_S} \bE_P\left(\hat{F}-F_\alpha(P)\right)^2 \nonumber \\
  & \quad\ge \frac{\alpha^2}{32en}\Bigg [(2(S-1))^{1-\alpha}-2^{-\alpha} \nonumber \\
  & \qquad -\frac{1-\alpha}{4n}\left((2(S-1))^{1-\alpha}+2^{-\alpha}\right)\Bigg]^2 \nonumber \\
    & \qquad - e^{-n/4}S^{2(1-\alpha)}\nonumber \\
    & \quad \gtrsim \frac{S^{2-2\alpha}}{n},
  \end{align}
  where the infimum is taken over all possible estimators.
  \end{lemma}
Since this lower bound holds for all possible estimators, it also holds for the MLE $F_\alpha(P_n)$. Since $\max\{a,b\}\geq \frac{1}{2}(a+b)$, we have the desired lower bound. 

\subsection{Lower bounds for estimation of $H(P)$}

Braess and Sauer \cite{Braess--Sauer2004} derived the following lower bound for the approximation error of Bernstein polynomials for the function $g(x) = -x \ln x$:
\begin{lemma}\label{lemma.glower}
Define $g(x) = -x\ln x$ on $[0,1]$. For $x \geq \frac{15}{n}, x\in [0,1]$, we have
\begin{equation}
g(x) - B_n[g](x) \geq \frac{1-x}{2n} + \frac{1}{20n^2 x} - \frac{x}{12 n^2}.
\end{equation}
\end{lemma}

Applying Lemma~\ref{lemma.glower} to the estimation of $H(P)$, we know that if $\forall 1\leq i\leq S, p_i \geq \frac{15}{n}$,
\begin{equation}
H(P) - \bE H(P_n) \geq \frac{S-1}{2n} + \frac{1}{20n^2} \left( \sum_{i = 1}^S \frac{1}{p_i} \right) - \frac{1}{12n^2}.
\end{equation}

Consider the uniform distribution $P$ with $n\geq 15S$, which guarantees $p_i \geq \frac{15}{n}$. Since
\begin{equation}
\sum_{i = 1}^S \frac{1}{p_i} \geq S^2,
\end{equation}
we have
\begin{equation}
\sup_{P\in \mathcal{M}_S} \left(  H(P) - \bE H(P_n) \right)  \geq \frac{S-1}{2n} + \frac{S^2}{20n^2} - \frac{1}{12n^2}.
\end{equation}
Thus, when $n\geq 15S$,
\begin{equation}
\sup_{P\in \mathcal{M}_S} \bE_P \left( H(P) - H(P_n) \right)^2 \geq \left( \frac{S-1}{2n} + \frac{S^2}{20n^2} - \frac{1}{12n^2} \right)^2.
\end{equation}

It was shown in \cite[Prop. 1]{Wu--Yang2014minimax} that the following minimax lower bound holds.
\begin{lemma}\label{lemma.minimaxlowerentropyvariance}
There exists a universal constant $c>0$ such that
\begin{equation}
\inf_{\hat{H}} \sup_{P \in \mathcal{M}_S} \bE_P \left( \hat{H} - H(P) \right)^2 \geq c \frac{\ln^2 S}{n},
\end{equation}
where the infimum is taken over all possible estimators $\hat{H}$.
\end{lemma}

Hence, we have
\begin{align}
& \sup_{P\in \mathcal{M}_S} \bE_P \left( H(P) - H(P_n) \right)^2 \nonumber \\
& \quad \geq \max\left\{ \left( \frac{S-1}{2n} + \frac{S^2}{20n^2} - \frac{1}{12n^2} \right)^2 , c \frac{\ln^2 S}{n} \right\}  \\
& \quad \geq \frac{1}{2}\left( \frac{S-1}{2n} + \frac{S^2}{20n^2} - \frac{1}{12n^2} \right)^2 + \frac{c}{2} \frac{\ln^2 S}{n}.
\end{align}

Similar arguments can be applied to the Miller--Madow estimator.

\subsection{Lower bounds for entropy estimation using $H(\hat{P}_B)$}

Since $H(\hat{P}_B)$ is a specific estimator for entropy, the following lemma is proved via considering several specific distributions. 

\begin{lemma}\label{lemma.hbbiaslowerbound}
If $n\ge\max\{15S,Sa, 2ea\}$,
  \begin{align}
    & \sup_{P\in\mathcal{M}_S}\left |\bE_P H(\hat{P}_B) - H(P) \right| \nonumber \\
    & \quad \ge \frac{(S-1)a}{4(n+Sa)}\ln \left( \frac{n+Sa}{a}\right) + \frac{S-1}{8n} + \frac{S^2}{80n^2} - \frac{1}{48n^2}
  \end{align}
  If $n<Sa$, then
  \begin{equation}
    \sup_{P\in\mathcal{M}_S}\left |\bE_P H(\hat{P}_B) - H(P) \right| \geq  \frac{S-1}{2S} \ln S.
  \end{equation}
    If $n<2ea$, then
  \begin{equation}
    \sup_{P\in\mathcal{M}_S}\left |\bE_P H(\hat{P}_B) - H(P) \right| \geq \frac{S-1}{2e+S} \ln S.
  \end{equation}
  
  If $n<15S,n\geq 2ea$, then
  \begin{align}
  &  \sup_{P\in\mathcal{M}_S}\left |\bE_P H(\hat{P}_B) - H(P) \right|  \nonumber \\
  & \quad \geq \frac{(S-1)a}{4(n+Sa)}\ln \left( \frac{n+Sa}{a}\right) + \frac{\lfloor n/15 \rfloor}{8n} - \frac{1}{16n}. 
  \end{align}
\end{lemma}

The corresponding results in Theorem~\ref{thm.lowerbound} follow from Lemma~\ref{lemma.hbbiaslowerbound}, Lemma~\ref{lemma.minimaxlowerentropyvariance}, and the inequality $\max\{a,b\}\ge \frac{a+b}{2}$.

\subsection{Lower bounds for entropy estimation using $\hat{H}^{\mathsf{Bayes}}$}

We prove Theorem~\ref{thm.bayeslowerbound} below. Applying Lemma~\ref{lemma.digammafunctionbound}, we have
\begin{align}
\hat{H}^{\mathsf{Bayes}} & \leq \psi(Sa + n + 1) - \sum_{i = 1}^S \frac{a+X_i}{Sa + n}\psi(a+1) \\
& = \psi(Sa + n+1) - \psi(a+1) \\
& \leq \ln \left( \frac{Sa + n + e^{-\gamma}}{a + \frac{1}{2}} \right). 
\end{align}

Since $\hat{H}^{\mathsf{Bayes}}$ is upper bounded by $\ln \left( \frac{Sa + n + e^{-\gamma}}{a + \frac{1}{2}} \right)$ for any empirical observations, the squared error it incurs in Shannon entropy estimation when the true distribution is the uniform distribution is at least
\begin{align}
\left(  \ln \left( \frac{Sa + S/2}{Sa + n + e^{-\gamma}} \right) \right)^2
\end{align}
if $S\geq 2(n +1)$. 

\section*{Acknowledgments}

We thank Dany Leviatan, Gancho Tachev, and Radu Paltanea for very helpful discussions regarding the literature on approximation theory using positive linear operators. We thank Jayadev Acharya, Alon Orlitsky, Ananda Theertha Suresh, and Himanshu Tyagi for communicating to us the independent discovery that it suffices to take $n\gg 1$ samples to consistently estimate $F_\alpha(P)$, when $\alpha>1$. We thank Maya Gupta for raising the question on the optimality of the Dirichlet prior smoothing techniques applying to entropy estimation. We thank the anonymous reviewers and the associate editor for very helpful comments that significantly improved the presentation of the paper.  

\appendices

\section{Auxiliary lemmas}
We begin with the definition of the negative association property, which allows us to upper bound the variance by treating each component of the empirical distribution $P_n(i)$ as ``independent'' random variables.
\begin{definition}
  \cite[Def. 2.1]{joag-dev1983} Random variables $X_1,X_2,\cdots,X_S$ are said to be negatively associated if for any pair of disjoint subsets $A_1,A_2$ of $\{1,2,\cdots,S\}$, and any component-wise increasing functions $f_1,f_2$,
  \begin{align}
   \mathsf{Cov}\left(f_1(X_i,i\in A_1), f_2(X_j,j\in A_2)\right) \le 0.
  \end{align}
\end{definition}

To verify whether random variables $X_1,X_2,\cdots,X_S$ are negatively associated or not, the following lemma presents a useful criterion.
\begin{lemma}\label{lem_NA}
  \cite[Thm. 2.9]{joag-dev1983} Let $X_1,X_2,\cdots,X_S$ be $S$ independent random variables with log-concave densities. Then the joint conditional distribution of $X_1,X_2,\cdots,X_S$ given $\sum_{i=1}^S X_i$ is negatively associated.
\end{lemma}

In light of the preceding lemma, we can obtain the following corollary.
\begin{corollary}\label{cor_NA_Multinomial}
  For any discrete probability distribution vector $P\in\mathcal{M}_S$, the random variables $\mathbf{X}=(X_1,X_2,\cdots,X_S)$ drawn from the multinomial distribution $\mathbf{X}\sim\mathsf{multi}(n;P)$ are negatively associated.
\end{corollary}
\begin{proof}
  Consider the Poissonized model $Y_i\sim \mathsf{Poi}(np_i), 1\le i\le S$ with all $Y_i$ independent, it is straightforward to verify that each $Y_i$ possesses a log-concave distribution. Then conditioning on $\sum_{i=1}^S Y_i=n$, we know that $(Y_1,Y_2,\cdots,Y_S)|(\sum_{i=1}^S Y_i=n) \sim \mathsf{multi}(n;P)$, hence Lemma \ref{lem_NA} yields the desired result.
\end{proof}

The next lemma gives bounds on the digamma functions $\psi(z) = \frac{\Gamma'(z)}{\Gamma(z)}$. 

\begin{lemma} \label{lemma.digammafunctionbound} \cite[Lemma 1.7]{batir2008inequalities}
The digamma function $\psi(z)$ is the only solution of the functional equation $F(x+1) = F(x) + \frac{1}{x}$ that is monotone, strictly concave on $\mathbb{R}_+$ and satisfies $F(1) = -\gamma$, where $\gamma \approx 0.5772$ is the Euler--Mascheroni constant.

Let $x$ be a positive real number. Then, 
\begin{align}
\ln(x + 1/2) < \psi(x+1) \leq \ln(x+ e^{-\gamma}).
\end{align}
If $x\geq 1$, then
\begin{align}
\ln(x+1/2) < \psi(x+1) \leq \ln(x + e^{1-\gamma} -1). 
\end{align}
\end{lemma}

The following lemma gives some tail bounds for Poisson or Binomial random variables.
\begin{lemma}\label{lem_chernoff}
\cite[Exercise 4.7]{mitzenmacher2005probability} If $X\sim \mathsf{Poi}(\lambda)$ or $X\sim \mathsf{B}(n,\frac{\lambda}{n})$, then for any $\delta>0$, we have
\begin{align}
\bP(X \geq (1+\delta) \lambda) & \leq \left( \frac{e^\delta}{(1+\delta)^{1+\delta}} \right)^\lambda, \\
\bP(X \leq (1-\delta) \lambda) & \leq  \left( \frac{e^{-\delta}}{(1-\delta)^{1-\delta}} \right)^\lambda\leq  e^{-\delta^2\lambda/2}.
\end{align}
\end{lemma}

To establish the upper bound of the variance obtained by the plug-in estimator $F_\alpha(P_n)$, we split $p$ into two different regimes $p\le 1/n$ or $p>1/n$, and the following lemmas give the corresponding variance bounds.
\begin{lemma}\label{lem_variance_p_small}
  For $nX\sim\mathsf{B}(n,p),p\le 1/n$, we have
  \begin{align}
    \mathsf{Var}(X^\alpha) \le  \frac{2}{n^{2\alpha}}\wedge \frac{2p}{n^{2\alpha-1}}  \quad 0<\alpha<1 .
  \end{align}
\end{lemma}
\begin{lemma}\label{lem_variance_p_big}
  For $nX\sim\mathsf{B}(n,p),p \geq  1/n,0<\alpha<1$, we have
  \begin{align}
\mathsf{Var}(X^\alpha) & \le  \frac{10p^{2\alpha-1}}{n} + \frac{3}{2\alpha}\left(\frac{16\alpha}{en}\right)^{2\alpha} + \frac{2}{n^{2\alpha}} + \frac{1}{8\alpha^2}\left(\frac{8\alpha}{en}\right)^{2\alpha}.  
\end{align}
\end{lemma}

\section{Proofs of main lemmas}
\subsection{Proof of Lemma~\ref{lemma.biasgeneralf}}
We compute the first moment of $F(P_n)$.
\begin{equation}
\bE F(P_n) = \sum_{j = 0}^n f \left( \frac{j}{n} \right) \bE h_j,
\end{equation}
and
\begin{align}
\bE h_j & = \bE \sum_{i = 1}^S \mathbbm{1}(X_i = j) \\
&  = \sum_{i = 1}^S \bP(X_i = j) \\
& = \sum_{i = 1}^S \binom{n}{j} p_i^j(1-p_i)^{n-j}.
\end{align}
Thus, we have
\begin{align}
\bE F(P_n) & = \sum_{j = 0}^n f \left( \frac{j}{n} \right)\sum_{i = 1}^S \binom{n}{j} p_i^j(1-p_i)^{n-j} \\
& = \sum_{j = 0}^n \sum_{i = 1}^S f \left( \frac{j}{n} \right) \binom{n}{j} p_i^j(1-p_i)^{n-j}.
\end{align}

The bias of $F(P_n)$ is
\begin{align}
& \mathsf{Bias}(F(P_n)) \nonumber \\
& \quad = \bE F(P_n) - F(P)\\
& \quad =  \sum_{i = 1}^S \left( \sum_{j = 0}^n f \left( \frac{j}{n} \right) \binom{n}{j} p_i^j(1-p_i)^{n-j} - f(p_i) \right).
\end{align}

\subsection{Proof of Lemma~\ref{lemma.dtmoduluscomputation}}

We first compute the second-order modulus. Fix $t, 0< t\leq 1/2$. Defining $M \triangleq \frac{u+v}{2}$, then the computation of second-order modulus is equivalent to maximization of $|f(M-t) - 2 f(M) + f(M+t)|$ over interval $M \in [t, 1-t]$, since all the functions we consider are strictly convex or concave over $[0,1]$.

For $f(x) = x^\alpha,0<\alpha<1$, $f(x)$ is strictly concave on $[0,1]$. It follows from Jensen's inequality that
\begin{equation}
g(M) = (M-t)^\alpha - 2M^\alpha + (M+t)^\alpha \leq 0,
\end{equation}
and it suffices to minimize this function of $M$ in order to obtain the modulus. Taking derivative of $g(M)$, we have
\begin{equation}
g'(M) = \alpha \left( (M-t)^{\alpha-1} - 2 M^{\alpha-1} + (M+t)^{\alpha-1} \right) \geq 0,
\end{equation}
since $x^{\alpha-1}$ is a convex function on $[t,1-t]$. It implies that the function $g(M)$ is non-decreasing, and the minimum of $g(M)$ over $M\in [t,1-t]$ is attained at $M = t$, and the minimum value is $g(t) = (2^\alpha-2) t^\alpha$. Hence, the corresponding second-order modulus is $|2-2^\alpha| t^\alpha$.

Analogous procedures computes the second-order modulus for $x^\alpha,1<\alpha<2$ and $-x\ln x$.

Now we consider the computation of Ditzian--Totik second-order modulus. Fix $t, 0<t\leq 1$. Again denote $M \triangleq \frac{u+v}{2} \in [0,1]$. Then the optimization is over the regime $|u-v| \leq 2t\varphi(M) = 2t \sqrt{M(1-M)}$. Equivalently, it is the interval $[M-t\sqrt{M(1-M)},M+t\sqrt{M(1-M)}] \cap [0,1]$.

Since the function $f(x) = -x\ln x$ is strictly convex on $[0,1]$, the maximum of $\left|f(u)-2f\left(\frac{u+v}{2}\right)+f(v)\right|$ is definitely attained when $u$ and $v$ take the boundary values of the feasible interval $[M-t\sqrt{M(1-M)},M+t\sqrt{M(1-M)}] \cap [0,1]$.

Define $\Delta \triangleq t\sqrt{\frac{1-M}{M}}$. The feasible interval can be equivalently written as $[M-\Delta M, M+\Delta M] \cap [0,1]$. We have
\begin{equation}
M - t\sqrt{M(1-M)} \geq 0 \Leftrightarrow M \geq \frac{t^2}{1+t^2},
\end{equation}
as well as
\begin{equation}
M + t\sqrt{M(1-M)} \leq 1 \Leftrightarrow M \leq \frac{1}{1+t^2}.
\end{equation}

Hence, it is equivalent to maximize over three regimes:
\begin{enumerate}
\item Regime A:

$u = 0, v = 2M, 0\leq M \leq \frac{t^2}{1+t^2}$.
\item Regime B:

$u = M-\Delta M, v = M + \Delta M, M \in \left[ \frac{t^2}{1+t^2}, \frac{1}{1+t^2} \right]$
\item Regime C:

$u = 2M-1, v = 1, 1 \geq M \geq \frac{1}{1+t^2}$.
\end{enumerate}

Over regime A, we have
\begin{equation}
\left|f(u)-2f\left(\frac{u+v}{2}\right)+f(v)\right| = 2M \ln 2.
\end{equation}
Maximizing over $0\leq M \leq \frac{t^2}{1+t^2}$, the maximum value is $\frac{t^2 \ln 4}{1+t^2}$, attained at $M = \frac{t^2}{1+t^2}$.

Over regime C, we have
\begin{equation}
\left|f(u)-2f\left(\frac{u+v}{2}\right)+f(v)\right| = \left | 2M \ln M - (2M-1) \ln (2M-1) \right|.
\end{equation}
Maximizing over $\frac{1}{1+t^2} \leq M \leq 1$, the maximum is attained at $M = \frac{1}{1+t^2}$, and the maximum value is no more than $\frac{t^2 \ln 4}{1+t^2}$.

Now we consider regime B. Since $M  \in \left[ \frac{t^2}{1+t^2}, \frac{1}{1+t^2} \right]$ in regime B, we know $\Delta = t\sqrt{\frac{1-M}{M}} \in [t^2,1]$. We have

\begin{align}
& \left|f(u)-2f\left(\frac{u+v}{2}\right)+f(v)\right| \nonumber \\
& \quad = M\left | (1-\Delta)\ln(1-\Delta) + (1+\Delta)\ln(1+\Delta) \right|.
\end{align}
Since $\Delta = t\sqrt{\frac{1-M}{M}}$ implies $M = \frac{t^2}{t^2 + \Delta^2}$, we can recast the corresponding optimization problem as maximizing
\begin{equation}
\frac{t^2}{\Delta^2 + t^2} \left | (1-\Delta)\ln(1-\Delta) + (1+\Delta)\ln(1+\Delta) \right|
\end{equation}
subject to constraint $\Delta \in [t^2, 1]$. One can show that the maximum is always attained at $\Delta = 1$, with the maximum value $\frac{t^2 \ln 4}{1+t^2}$.

To sum up, we conclude that when $0< t\leq 1$, the maximum of the optimization problem defining $\omega^2_\varphi(-x\ln x,t)$ is always attained at $u = 0, v = \frac{2t^2}{1+t^2}$, with the resulting modulus $\frac{t^2 \ln 4}{1+t^2}$.

Analogous computation can also be done for function $x^\alpha,0<\alpha<1$. For the function $x^\alpha,1<\alpha<2$, it is hard to compute the modulus exactly, but it is easy to show that it is of order $t^2$.

\subsection{Proof of Lemma~\ref{lemma.biasalphaonetwo}}
The bias of $F_\alpha(P_n), 1<\alpha<2$ can be expressed as follows:
\begin{align}
| \mathsf{Bias}(F_\alpha(P_n))| & = | \bE \sum_{i = 1}^S P_n^\alpha(i) - p_i^\alpha | \\
& \leq \left| \bE \sum_{i: p_i \leq \frac{1}{n}} P_n^\alpha(i) - p_i^\alpha \right| \nonumber \\
& \quad  + \left| \bE \sum_{i: p_i > \frac{1}{n}} P_n^\alpha(i) - p_i^\alpha \right| \\
& \triangleq B_1 + B_2.
\end{align}

Now we bound $B_1$ and $B_2$ separately. It follows from Jensen's inequality that for any $i$,
\begin{equation}
\bE P_n^\alpha(i) \geq p_i^\alpha, \quad 1\leq i \leq S.
\end{equation}

Hence, we have
\begin{align}
B_1 & = \bE \sum_{i: p_i \leq \frac{1}{n}} P_n^\alpha(i) - p_i^\alpha \\
& = \bE \sum_{i: p_i \leq \frac{1}{n}} P_n^\alpha(i) -  \frac{(np_i)^\alpha}{n^\alpha}  \\
& \leq \bE \sum_{i: p_i \leq \frac{1}{n}} P_n^\alpha(i) -  \frac{(np_i)^2}{n^\alpha}  \\
& = \bE \sum_{i: p_i \leq \frac{1}{n}} \frac{(n P_n(i))^\alpha}{n^\alpha}-  \frac{(np_i)^2}{n^\alpha}  \\
& \leq \bE \sum_{i: p_i \leq \frac{1}{n}} \frac{(n P_n(i))^2}{n^\alpha} -  \frac{(np_i)^2}{n^\alpha} \\
& = \sum_{i: p_i \leq \frac{1}{n}}  \frac{\bE(n P_n(i))^2}{n^\alpha}-  \frac{(np_i)^2}{n^\alpha}  \\
& = \sum_{i: p_i \leq \frac{1}{n}} \frac{(n p_i)^2 + np_i(1-p_i)}{n^\alpha}-  \frac{(np_i)^2}{n^\alpha} \\
& \leq \sum_{i: p_i \leq \frac{1}{n}} \frac{np_i}{n^\alpha}\\
& = \frac{1}{n^{\alpha-1}},
\end{align}
where we have used the fact that $nP_n(i) \geq 1$ for any $P_n(i)\neq 0$.

Regarding $B_2$, we have the following bounds:
\begin{align}
B_2 & = \left| \bE \sum_{i: p_i >\frac{1}{n}} P_n^\alpha(i) - p_i^\alpha \right| \\
& \leq \sum_{i: p_i > \frac{1}{n}} \bE | P_n^\alpha(i) - p_i^\alpha | \\
& \leq \sum_{i: p_i > \frac{1}{n}}  \frac{3}{2}|2-2^\alpha| \left( \frac{p_i(1-p_i)}{n} \right)^{\alpha/2} \\
& \leq \frac{3|2-2^\alpha|}{2n^{\alpha/2}} \sum_{i: p_i > \frac{1}{n}} p_i^{\alpha/2},
\end{align}
where the second inequality follows from the pointwise estimate in Lemma~\ref{lemma.bernsteinerror}.

Denoting $| \{ i: p_i > \frac{1}{n} \} | = K \leq n$, we know
\begin{equation}
\sum_{i: p_i > \frac{1}{n}} p_i^{\alpha/2} \leq K^{1-\alpha/2} \leq  n^{1-\alpha/2},
\end{equation}
which implies that
\begin{equation}
B_2 \leq \frac{3|2-2^\alpha|}{2n^{\alpha/2}}n^{1-\alpha/2} = \frac{3|2-2^\alpha|}{2n^{\alpha-1}}.
\end{equation}

Therefore, we have
\begin{align}
| \mathsf{Bias}(F_\alpha(P_n))| & \leq  B_1 + B_2 \\
& \leq \frac{1}{n^{\alpha-1}}+ \frac{3|2-2^\alpha|}{2n^{\alpha-1}} \\
& \leq \frac{3|2-2^\alpha| + 2}{2n^{\alpha-1}} \\
& \leq \frac{4}{n^{\alpha-1}}.
\end{align}

\subsection{Proof of Lemma~\ref{lemma.hbbiasgeneral}}

We apply Lemma~\ref{lemma_mod}. Note that $h_2 = \frac{\sqrt{n}}{n+Sa}$. In order to ensure that $h_2 \leq 1/2$, it suffices to take $n\geq 4$. Also, since $n\geq Sa$, for any $i, 1\leq i\leq S$,
  \begin{align}
    \frac{|1-p_iS|a}{n+Sa} \le \frac{Sa}{n+Sa} \le \frac{1}{2}.
  \end{align}  

Meanwhile, since the function $\sum_{i=1}^S \frac{|1-p_i S|a}{n+Sa}$ is a convex function of $P = (p_1,p_2,\ldots,p_S)$, it attains its maximum at one of the corner points of the simplex. Hence,
\begin{align}\label{eq:convexcornerpoints}
\sum_{i=1}^S \frac{|1-p_i S|a}{n+Sa} & \leq \frac{|1-S|a}{n+Sa} + (S-1) \cdot \frac{a}{n+Sa} \\
& = \frac{2(S-1)a}{n+Sa}. 
\end{align}

  In light of Lemma \ref{lemma_mod}, we have
  \begin{align}
 &  | \bE_P H(\hat{P}_B) - H(P)| \nonumber \\
 & \quad\le \sum_{i=1}^S \left(\omega^1\left(f,\frac{|1-p_iS|a}{n+Sa};p_i\right) + \frac{5n\ln 2}{(n+Sa)^2}\right)\\
    & \quad \stackrel{(a)}{\le}  -\left(\sum_{i=1}^S\frac{|1-p_iS|a}{n+Sa}\right)\ln\left(\frac{1}{S}\sum_{i=1}^S\frac{|1-p_iS|a}{n+Sa}\right) \nonumber \\
    & \qquad \quad + \frac{5nS\ln2}{(n+Sa)^2}\\
    & \quad \stackrel{(b)}{\le} \frac{2Sa}{n+Sa}\ln\left(\frac{n+Sa}{2a}\right) + \frac{5nS\ln2}{(n+Sa)^2},
  \end{align}
where $(a)$ follows from the fact that if $|x-y|\leq 1/2, x,y\in [0,1]$, then $|x\ln x - y\ln y|\leq -|x-y|\ln |x-y|$~\cite[Thm. 17.3.3]{Cover--Thomas2006} and Jensen's inequality. Step~$(b)$ follows from the fact that the function $-y\ln y$ is monotonically increasing on the interval $[0,e^{-1}]$, and 
\begin{align}
\frac{1}{S} \sum_{i = 1}^S \frac{|1-p_i S|a }{n+Sa} & \leq \frac{2a}{n+Sa} \\
& \leq \frac{2a}{n} \\
& \leq e^{-1},
\end{align}
where in the last step we used the assumption that $n \geq 2ea$.

\subsection{Proof of Lemma~\ref{lemma.hbbiasentropy}}

We have
\begin{align}
H(\hat{P}_B) & = \sum_{i = 1}^S-\hat{p}_{B,i} \ln \hat{p}_{B,i} \\
& = H(P_B) + \sum_{i =1}^S (p_{B,i} - \hat{p}_{B,i}) \ln p_{B,i} - \sum_{i = 1}^S \hat{p}_{B,i} \ln \frac{\hat{p}_{B,i}}{p_{B,i}}.
\end{align}

Taking expectations on both sides, we have
\begin{equation}
\bE H(\hat{P}_B) - H(P) = H(P_B) - H(P) - \bE D(\hat{P}_B \| P_B),
\end{equation}
where $D(P\|Q) = \sum_{i = 1}^S p_i \ln \frac{p_i}{q_i}$ is the KL divergence between distributions $P$ and $Q$. Since $H(P_B) = H \left( \frac{n}{n+Sa} P + \frac{Sa}{n+Sa} U_S \right)$, where $U_S$ denotes the uniform distribution with alphabet size $S$, it follows from Jensen's inequality and the concavity of the entropy function that
\begin{align}
H(P_B) & \geq \frac{n}{n+Sa}H(P) + \frac{Sa}{n+Sa} H(U_S) \\
& \geq H(P).
\end{align}
Hence,
\begin{align}
\left| \bE H(\hat{P}_B) - H(P) \right| & \leq \max\{H(P_B) - H(P), \bE D(\hat{P}_B \| P_B)\}. 
\end{align}

In order to analyze the bias, it suffices to analyze the two terms separately. We first analyze $\bE D(\hat{P}_B \| P_B)$.

It follows from Jensen's inequality that
\begin{equation}
D(P \| Q) = \sum_{i = 1}^S p_i \ln \frac{p_i}{q_i} \leq \ln \left( \sum_{i = 1}^S \frac{p_i^2}{q_i} \right),
\end{equation}
whose derivation here follows from Tsybakov~\cite[Lemma 2.7]{Tsybakov2008}.

By Jensen's inequality, we have
\begin{equation}
\bE D(\hat{P}_B \| P_B) \leq \bE \ln \left( \sum_{i = 1}^S \frac{\hat{p}_{B,i}^2}{p_{B,i}} \right) \leq \ln \left( \sum_{i = 1}^S \frac{\bE \hat{p}_{B,i}^2 }{p_{B,i}} \right).
\end{equation}

We also have
\begin{equation}
\sum_{i = 1}^S \frac{p_i^2}{q_i} = 1 + \sum_{i =1}^S \frac{(p_i - q_i)^2}{q_i}, 
\end{equation}
and that
\begin{equation}
\bE (\hat{p}_{B,i} - p_{B,i})^2 = \frac{n^2}{(n+Sa)^2} \bE (\hat{p}_i - p_i)^2 = \frac{n p_i(1-p_i)}{(n+Sa)^2}.
\end{equation}

Hence,
\begin{align}
\bE D(\hat{P}_B \| P_B) & \leq \ln \left( 1+ \sum_{i = 1}^S \frac{n p_i(1-p_i)}{(n+Sa)(np_i+a)} \right) \\
& = \ln \left( 1+ \sum_{i = 1}^S \frac{n p_i(1-p_i)}{(n+Sa)np_i} \frac{np_i}{np_i+a} \right) \\
& \leq \ln \left( 1+ \sum_{i = 1}^S \frac{1-p_i}{(n+Sa)}  \right),
\end{align}
which implies that
\begin{equation}
\bE D(\hat{P}_B \| P_B)  \leq \ln \left( 1+\frac{S-1}{n+Sa} \right).
\end{equation}

Now we consider the deterministic gap $H(P_B) - H(P)$. It follows from a refinement result of Cover and Thomas~\cite[Thm. 17.3.3]{Cover--Thomas2006} that when $| p_{B,i} - p_i | \leq 1/2$ for all $i$, we have
\begin{align}
|H(P_B) - H(P)| & \leq - \| P_B - P \|_1 \ln \frac{\| P_B - P \|_1}{S} \\
& = S \cdot f\left( \frac{\| P_B - P \|_1}{S} \right),
\end{align}
where $f(x) = -x\ln x, x\in [0,1]$. Note that the condition $n\ge Sa$ ensures that $| p_{B,i} - p_i | \leq 1/2$.

We have
\begin{align}
\frac{1}{S} \| P_B - P \|_1 & = \frac{1}{S} \sum_{i = 1}^S \frac{Sa}{n+Sa} |p_i - 1/S| \\
& = \frac{1}{S} \sum_{i = 1}^S \frac{|1-p_i S|a}{n+Sa} \\
& \leq \frac{2a}{n+Sa},
\end{align}
where the last step follows from~(\ref{eq:convexcornerpoints}). Since we have assumed $n \geq 2ea$, we have $\frac{2a}{n+Sa} \leq \frac{2a}{n} \leq e^{-1}$. Since the function $f(x) = -x\ln x$ is monotonically increasing on the interval $[0,e^{-1}]$, we know
\begin{align}
|H(P_B) - H(P)| & \leq \frac{2Sa}{n+Sa} \ln \frac{n+Sa}{2a}. 
\end{align}

\subsection{Proof of Lemma~\ref{cor.vargeneralf}}

In our case, apparently $F(P_n)$ is a function of $n$ independent random variables $\{Z_i\}_{1\leq i\leq n}$ taking values in $\mathcal{Z} = \{1,2,\ldots,S\}$. Changing one location of the sample would make some symbol with count $j$ to have count $j+1$, and another symbol with count $i$ to have count $i-1$. Then the total change in the functional estimator is
\begin{equation}
f\left(\frac{j+1}{n}\right) - f \left( \frac{j}{n}\right) - f\left(\frac{i}{n}\right) + f\left(\frac{i-1}{n}\right).
\end{equation}

If $f$ is monotone, then the total change would be upper bounded by $\max_{0\leq j<n} |f((j+1)/n) - f(j/n)|$. If $f$ is not monotone, the total change can be upper bounded by $2 \cdot\max_{0\leq j<n} |f((j+1)/n) - f(j/n)|$. Applying Lemma~\ref{lemma.es}, we have the desired bounds.

\subsection{Proof of Lemma~\ref{lemma.varianceboundfalphabetween01}}

In light of Lemma \ref{lem_variance_p_small} and \ref{lem_variance_p_big}, we have
\begin{align}
& \sum_{i=1}^S \mathsf{Var}(P_n(i)^\alpha) \nonumber \\
  &= \sum_{i: p_i\le 1/n} \mathsf{Var}(P_n(i)^\alpha) + \sum_{i: p_i> 1/n} \mathsf{Var}(P_n(i)^\alpha)\\
  &\le \sum_{i: p_i\le 1/n} \frac{2}{n^{2\alpha}}\wedge \frac{2p_i}{n^{2\alpha-1}} \nonumber \\
  & \quad \quad + \sum_{i: p_i>1/n}\Bigg(\frac{10p_i^{2\alpha-1}}{n} + \frac{3}{2\alpha}\left(\frac{16\alpha}{en}\right)^{2\alpha} \nonumber \\
  & \qquad \qquad + \frac{2}{n^{2\alpha}} + \frac{1}{8\alpha^2}\left(\frac{8\alpha}{en}\right)^{2\alpha}\Bigg).
\end{align}
We obtain the desired bounds after using the concavity of $x^{2\alpha-1}$ when $1/2\le \alpha<1$. 

Now we exploit the negative association property of all random variables $P_n(i),1\le i\le S$.  Corollary \ref{cor_NA_Multinomial} and the monotonically increasing property of $x^\alpha$ yield
\begin{align}
  \mathsf{Var}(F_\alpha(P_n)) & = \sum_{i=1}^S \mathsf{Var}(P_n(i)^\alpha) \nonumber \\
  & \quad + 2\sum_{1\le i<j\le S}\mathsf{Cov}(P_n(i)^\alpha,P_n(j)^\alpha) \\
  & \le \sum_{i=1}^S \mathsf{Var}(P_n(i)^\alpha),
\end{align}
which finishes the proof of Lemma~\ref{lemma.varianceboundfalphabetween01}.

\subsection{Proof of Lemma~\ref{lemma.varianceboundentropymle}}

The upper bound $(\ln n)^2/n$ follows from Lemma~\ref{cor.vargeneralf}. We apply the Efron--Stein inequality (Lemma~\ref{lemma.esnew}) to obtain the other bound. Denote the $n$ i.i.d. samples from distribution $P$ as $Z_1,Z_2,\ldots,Z_n \in \mathcal{Z}$. Denoting the MLE $H(P_n)$ as $\hat{H}(Z_1,Z_2,\ldots,Z_n)$, since it is invariant to any permutation of $\{Z_1,Z_2,\ldots,Z_n\}$, we know that the Efron--Stein inequality implies
\begin{equation}\label{eqn.entropyesnew}
\mathsf{Var}(H(P_n)) \leq \frac{n}{2} \bE \left( \hat{H}(Z_1,Z_2,\ldots,Z_n) - \hat{H}(Z_1',Z_2,\ldots,Z_n) \right)^2,
\end{equation}
where $Z_1'$ is an i.i.d. copy of $Z_1$.

Recall that
\begin{equation}
X_i = \sum_{j = 1}^n \mathbbm{1}(Z_j = i),\quad 1\leq i \leq S.
\end{equation}

For notional brevity, we denote the $S$-tuple $(X_1,X_2,\ldots,X_S)$ as $X_1^S$, and the $n$-tuple $(Z_1,Z_2,\ldots,Z_n)$ as $Z_1^n$. A specific realization of $(X_1,X_2,\ldots,X_S)$ is denoted by $x_1^S = (x_1,x_2,\ldots,x_S)$, and a specific realization of $(Z_1,Z_2,\ldots,Z_n)$ is denoted by $z_1^n = (z_1,z_2,\ldots,z_n)$.

In order to upper bound the right hand side of (\ref{eqn.entropyesnew}), we first condition on $\{X_1,X_2,\ldots,X_S\}$. In other words, we use
\begin{align}
& \bE \left( \hat{H}(Z_1,Z_2,\ldots,Z_n) - \hat{H}(Z_1',Z_2,\ldots,Z_n) \right)^2 \\
 & \quad =\mathbb{E}\left[ \mathbb{E}\left[ \left( \hat{H}(Z_1,Z_2,\ldots,Z_n) - \hat{H}(Z_1',Z_2,\ldots,Z_n) \right)^2 \Bigg | X_1^S \right] \right]. 
\end{align}

The following lemma calculates the conditional distribution of $Z_1$ conditioned on $(X_1,X_2,\ldots,X_S)$.
\begin{lemma}\label{lemma.conditionalpaninski}
The conditional distribution of $Z_1$ conditioned on $(X_1,X_2,\ldots,X_S)$ is given by the following discrete distribution:
\begin{equation}
(X_1/n,X_2/n,\ldots,X_S/n).
\end{equation}
\end{lemma}

\begin{proof}
By definition of conditional distribution, for any $k, 1\leq k \leq S$, we have
\begin{align}
& \bP(Z_1=k| X_1^S = x_1^S) \nonumber \\
& \quad = \frac{\bP(Z_1 = k, X_1^S = x_1^S)}{\bP(X_1^S = x_1^S)} \\
& \quad = \frac{\bP(Z_1 = k)\bP(X_1^S = x_1^S|Z_1 = k)}{\bP(X_1^S = x_1^S)} \\
& \quad = \frac{p_k \binom{n-1}{x_1,x_2,\ldots,x_k-1,\ldots,x_S} p_k^{x_k-1} \prod_{i \neq k} p_i^{x_i}}{\binom{n}{x_1,x_2,\ldots,x_S} \prod_{1\leq i\leq S} p_i^{x_i}} \\
& \quad= \frac{\binom{n-1}{x_1,x_2,\ldots,x_k-1,\ldots,x_S}}{\binom{n}{x_1,x_2,\ldots,x_S}} \\
& \quad = \frac{x_k}{n},
\end{align}
where the multinomial coefficient $\binom{n}{x_1,x_2,\ldots,x_S}$ is defined as
\begin{equation}
\binom{n}{x_1,x_2,\ldots,x_S} = \frac{n!}{\prod_{i = 1}^S x_i!}.
\end{equation}
\end{proof}

Denoting $r(p) = -p\ln p$, we have $r(j/n) \triangleq\frac{-j}{n}\ln \frac{j}{n}$. We rewrite
\begin{equation}
\hat{H}(Z_1',Z_2,\ldots,Z_n)-\hat{H}(Z_1,Z_2,\ldots,Z_n)  = D_- + D_+,
\end{equation}
where
\begin{align}
D_- & = r\left( \frac{X_{Z_1}-1}{n} \right) - r\left(\frac{X_{Z_1}}{n}\right)\\
D_+ & = \begin{cases} r\left(\frac{X_{Z_1'+1}}{n} \right) - r\left(\frac{X_{Z_1'}}{n}\right) & Z_1 \neq Z_1'\\ r\left(\frac{X_{Z_1'}}{n} \right) - r\left(\frac{X_{Z_1'-1}}{n}\right) & Z_1 = Z_1' \end{cases}
\end{align}

Here, $D_-$ is the change in $\hat{H}$ that occurs when $Z_1$ is removed according to the distribution specified in Lemma~\ref{lemma.conditionalpaninski}, and $D_+$ is the change in $\hat{H}$ that occurs when $Z_1'$ is added back according to the true distribution $P$.

Now we compute $\bE [D_-^2|X_1^S]$ and $\bE [D_+^2| X_1^S]$. We have
\begin{align}
\bE[D_-^2|X_1^S] & = \sum_{1\leq i\leq S} \frac{X_i}{n} \left( r\left(\frac{X_i-1}{n}\right) - r\left(\frac{X_i}{n}\right) \right)^2,
\end{align}
and
\begin{align}
& \bE[D_+^2|X_1^S] \nonumber \\
 & \quad = \sum_{1\leq i\leq S} p_i  \frac{X_i}{n} \left(  r\left(\frac{X_i}{n}\right) -r\left(\frac{X_i-1}{n}\right) \right)^2 \nonumber \\
 & \qquad +\sum_{1\leq i\leq S} p_i\left(1-\frac{X_i}{n}\right) \left(  r\left(\frac{X_i + 1}{n} \right)-r\left(\frac{X_i}{n}\right) \right)^2 .
\end{align}

Note that we interpret $\frac{X_i}{n} \left(  r\left(\frac{X_i}{n}\right) -r\left(\frac{X_i-1}{n}\right) \right)^2$ as $0$ when $X_i = 0$. Taking expectations of $\bE [D_-^2|X_1^S]$ and $\bE [D_+^2| X_1^S]$ with respect to $X_1^S$, we have
\begin{align}
\bE [D_-^2] & = \sum_{1\leq i\leq S} \sum_{1\leq j\leq n} \frac{j}{n} \left( r\left(\frac{j-1}{n} \right) - r\left(\frac{j}{n}\right)\right)^2 \nonumber \\
& \quad \quad \times  \bP(\mathsf{B}(n,p_i) = j)
\end{align}
and
\begin{align}
& \bE[D_+^2] \nonumber \\
 & \quad = \sum_{1\leq i\leq S} \sum_{0\leq j\leq n} \Bigg( \frac{j}{n} \left(r\left(\frac{j-1}{n} \right)- r\left(\frac{j}{n}\right)\right)^2 \nonumber \\
 & \quad \quad + \left(1-\frac{j}{n}\right) \left(r\left(\frac{j}{n}\right) - r\left(\frac{j+1}{n}\right)\right)^2 \Bigg) \nonumber \\
 & \quad \quad \times p_i \bP(\mathsf{B}(n,p_i) = j).
\end{align}

After some algebra, one can show that $\bE[D_+^2] = \bE[D_-^2]$. It then follows from (\ref{eqn.entropyesnew}) that
\begin{align}
& \mathsf{Var}(H(P_n)) \nonumber \\
& \quad \leq \frac{n}{2} \cdot \bE \left( \hat{H}(Z_1,Z_2,\ldots,Z_n) - \hat{H}(Z_1',Z_2,\ldots,Z_n) \right)^2 \\
& \quad= \frac{n}{2} \cdot \bE \left( D_- + D_+ \right)^2 \\
& \quad\leq n \cdot \bE (D_-^2 + D_+^2) \\
& \quad\leq 2n \bE D_-^2 \\
& \quad = 2n \cdot \sum_{1\leq i\leq S} \bE P_n(i) \left( r(P_n(i)) - r(P_n(i)-\frac{1}{n}) \right)^2 \label{eqn.espaninskifurther}
\end{align}

The proof above is an elaborate version of that in \cite[App. B.3]{Paninski2003}. Now we proceed to obtain non-asymptotic upper bounds of (\ref{eqn.espaninskifurther}). For $x\geq 1/n$, it follows from Taylor expansion with integral form residue that
\begin{align}
(x-\frac{1}{n}) \ln (x-\frac{1}{n}) & = x\ln x + (\ln x + 1) (-\frac{1}{n}) \nonumber \\
& \quad + \int_x^{x-\frac{1}{n}} (x-\frac{1}{n}-u) \frac{1}{u} du.
\end{align}
Then, we have
\begin{align}
& \left|r(P_n(i)) - r\left(P_n(i) -\frac{1}{n}\right)\right| \nonumber \\
& \quad\leq \frac{|\ln P_n(i) + 1|}{n} + \left| \int_{P_n(i)}^{P_n(i)-\frac{1}{n}} \frac{P_n(i)-\frac{1}{n}}{u}du \right| + \frac{1}{n} \\
& \quad \leq \frac{|\ln P_n(i) + 1|+2}{n}.
\end{align}

Hence, we have
\begin{equation}
\mathsf{Var}(H(P_n)) \leq 2n\cdot\sum_{1\leq i\leq S} \bE P_n(i) \left( \frac{|\ln P_n(i) +1|+2}{n}\right)^2.
\end{equation}
Noting that $\ln P_n(i) \leq 0$, hence $0\leq |\ln P_n(i) + 1| \leq 1 - \ln P_n(i)$. We have
\begin{align}
\mathsf{Var}(H(P_n)) & \leq \frac{2}{n}\cdot\sum_{1\leq i\leq S} \bE P_n(i) \left( \ln P_n(i)-3 \right)^2 \\
& \leq \frac{2}{n} \sum_{1\leq i\leq S} p_i(\ln p_i - 3)^2 \\
& \leq \frac{2}{n} S \cdot \frac{1}{S} (-\ln S -3)^2 \\
& = \frac{2(\ln S + 3)^2}{n},
\end{align}
where we have used the fact that $x(\ln x -3)^2$ is a concave function on $[0,1]$.

\subsection{Proof of Lemma~\ref{lemma.hbvariance}}
  We apply the bounded differences inequality (Lemma~\ref{lemma.es}). In our case, $F(\hat{P}_B)$ is a function of $n$ independent random variables $\{Z_i\}_{1\le i\le n}$ taking values in $\cZ=\{1,2,\cdots,S\}$. Changing one location of the sample would make some symbol with count $j$ to have count $j+1$, and another symbol with count $i$ to have count $i-1$. Then the absolute value of the total change in the functional estimator is
  \begin{align}
 &    \Bigg|f\left(\frac{j+1+a}{n+Sa}\right) - f\left(\frac{j+a}{n+Sa}\right) \nonumber \\
 & \quad - f\left(\frac{i+a}{n+Sa}\right) + f\left(\frac{i-1+a}{n+Sa}\right)\Bigg| \\
 & \quad   \le 2\max_{1\le k\le n} \left|f\left(\frac{k+a}{n+Sa}\right) - f\left(\frac{k-1+a}{n+Sa}\right)\right|.
  \end{align}

 In light of the Taylor expansion with integral form residue, we have that for $1\ge x\ge t>0$,
 \begin{align}
   (x-t)\ln(x-t) = x\ln x -t (\ln x+1) + \int_x^{x-t} \frac{x-t-u}{u}du
 \end{align}
 so
 \begin{align}
   |(x-t)\ln(x-t) - x\ln x| & \le t|\ln x+1| \nonumber \\
   & \quad + \left|\int_x^{x-t} \frac{x-t}{u}du\right| + t \\
   & \le t|\ln x+1| + 2t \\
   & \le t(3-\ln x).\label{eq:entropy_inequality}
 \end{align}
As a result,
  \begin{align}
    & \max_{1\le k\le n}\left|f\left(\frac{k+a}{n+Sa}\right) - f\left(\frac{k-1+a}{n+Sa}\right)\right| \nonumber \\
    &\quad\le \max_{1\le k\le n}\frac{1}{n+Sa}\left(3-\ln\left(\frac{k+a}{n+Sa}\right) \right) \\
    &\quad \le \frac{1}{n+Sa}\left(3+\ln \left(\frac{n+Sa}{a+1}\right)\right).
  \end{align}
  Hence, the bounded differences inequality shows that
  \begin{align}
   & \mathsf{Var}\left(H(\hat{P}_B)\right) \nonumber \\
   & \quad\le n\max_{2\le k\le n}\left(f\left(\frac{k+a}{n+Sa}\right) - f\left(\frac{k-1+a}{n+Sa}\right)\right)^2 \\
   & \quad \le \frac{n}{(n+Sa)^2}\left(3+\ln\left(\frac{n+Sa}{a+1}\right)\right)^2,
  \end{align}
  which completes the proof of the first part.

To prove the second part, we use the Efron--Stein inequality~(Lemma~\ref{lemma.esnew}). Since $H(\hat{P}_B)=\hat{H}_B(Z_1,\cdots,Z_n)$ is invariant to any permutation of $(Z_1,Z_2,\cdots,Z_n)$, we know that the Efron--Stein inequality implies
 \begin{align}
 &  \mathsf{Var}\left(H(\hat{P}_B)\right)  \nonumber \\
 & \quad \le \frac{n}{2}\bE\left(\hat{H}_B(Z_1',Z_2,\cdots,Z_n)-\hat{H}_B(Z_1,Z_2,\cdots,Z_n)\right)^2,
 \end{align}
 where $Z_1'$ is an i.i.d. copy of $Z_1$.

Recall that
 \begin{align}
   X_i = \sum_{j=1}^n \mathbbm{1}(Z_j=i),\qquad 1\le i\le S.
 \end{align}

 For brevity, we denote the $S$-tuple $(X_1,\cdots,X_S)$ as $X_1^S$, and the $n$-tuple $(Z_1,\cdots,Z_n)$ as $Z_1^n$. A specific realization of $(X_1,\cdots,X_S)$ is denoted by $x_1^S=(x_1,\cdots,x_S)$, and a specific realization of $(Z_1,\cdots,Z_n)$ is denoted by $z_1^n=(z_1,\cdots,z_n)$. Then we have
 \begin{align}
   &\bE\left(\hat{H}_B(Z_1',Z_2,\cdots,Z_n)-\hat{H}_B(Z_1,Z_2,\cdots,Z_n)\right)^2\\ &=
   \sum_{x_1^S} \bP(X_1^S=x_1^S) \nonumber \\
   & \quad \times \bE\left[\left(\hat{H}_B(Z_1',Z_2,\cdots,Z_n)-\hat{H}_B(Z_1,Z_2,\cdots,Z_n)\right)^2\left|X_1^S=x_1^S\right.\right].
 \end{align}

 In light of Lemma~\ref{lemma.conditionalpaninski}, we know that the conditional distribution of $Z_1$ conditioned on $(X_1,\cdots,X_S)$ is the discrete distribution $(X_1/n,X_2/n,\cdots,X_S/n)$. Denoting $r(p)=f(\frac{np+a}{n+Sa})$, we can rewrite
 \begin{align}
   \hat{H}_B(Z_1',Z_2,\cdots,Z_n)-\hat{H}_B(Z_1,Z_2,\cdots,Z_n) = D_- + D_+
 \end{align}
 where
 \begin{align}
   D_- &= r\left(\frac{X_{Z_1}-1}{n}\right) - r\left(\frac{X_{Z_1}}{n}\right)\\
   D_+ &= \begin{cases}
     r\left(\frac{X_{Z_1'}+1}{n}\right) - r\left(\frac{X_{Z_1'}}{n}\right)&Z_1\neq Z_1'\\
     r\left(\frac{X_{Z_1'}}{n}\right) - r\left(\frac{X_{Z_1'}-1}{n}\right)&Z_1= Z_1'
   \end{cases}.
 \end{align}

 Here, $D_-$ is the change in $\hat{H}_B$ that occurs when $Z_1$ is removed according to the distribution $(X_1/n,X_2/n,\cdots,X_S/n)$, and $D_+$ is the change in $\hat{H}$ that occurs when $Z_1'$ is added back according to the true distribution $P$. Now we have
 \begin{align}
   \bE[D_-^2|X_1^S] &= \sum_{i=1}^S \frac{X_i}{n}\left(r\left(\frac{X_{i}-1}{n}\right)-r\left(\frac{X_{i}}{n}\right)\right)^2\\
   \bE[D_+^2|X_1^S] &= \sum_{i=1}^S p_i\frac{X_i}{n}\left(r\left(\frac{X_{i}}{n}\right)-r\left(\frac{X_{i}-1}{n}\right)\right)^2 \nonumber \\
   & \quad   + \sum_{i=1}^S p_i\left(1-\frac{X_i}{n}\right)\left(r\left(\frac{X_{i}+1}{n}\right)-r\left(\frac{X_{i}}{n}\right)\right)^2
 \end{align}
 where we define $r(x)=0$ when $x\notin[0,1]$. Then, by the law of iterated expectation, we know that
 \begin{align}
   \bE[D_-^2] &= \sum_{i=1}^S \sum_{j=0}^n\frac{j}{n}\left(r\left(\frac{j-1}{n}\right)-r\left(\frac{j}{n}\right)\right)^2 \nonumber \\
   & \quad \times\bP(\mathsf{B}(n,p_i)=j)\\
   \bE[D_+^2] &= \sum_{i=1}^S \sum_{j=0}^n\Bigg(\frac{j}{n}\left(r\left(\frac{j-1}{n}\right)-r\left(\frac{j}{n}\right)\right)^2  \nonumber \\
   & \quad + \left(1-\frac{j}{n}\right) \left(r\left(\frac{j+1}{n}\right)-r\left(\frac{j}{n}\right)\right)^2 \Bigg) \nonumber \\
   & \quad \times p_i\bP(\mathsf{B}(n,p_i)=j). 
 \end{align}

 After some algebra we can show that $\bE[D_-^2]=\bE[D_+^2]$. Hence, we have
 \begin{align}
 &  \mathsf{Var}\left(H(\hat{P}_B)\right) \nonumber \\
  & \quad\le \frac{n}{2}\bE\left(D_-+D_+\right)^2\le n\bE\left(D_-^2+D_+^2\right) = 2n\bE D_-^2 \\
   &\quad= 2n\sum_{i=1}^S\bE P_n(i)\left(r(P_n(i)-\frac{1}{n})-r(P_n(i))\right)^2\\
   &\quad\le 2n\sum_{i=1}^S \bE P_n(i)\left(\frac{1}{n+Sa}\left[3-\ln \left(\frac{nP_n(i)+a}{n+Sa}\right)\right]\right)^2\\
   &\quad= \frac{2n}{(n+Sa)^2}\sum_{i=1}^S \bE P_n(i)\left(3-\ln \left(\frac{nP_n(i)+a}{n+Sa}\right)\right)^2\\
   &\quad\le \frac{2n}{(n+Sa)^2}\sum_{i=1}^S p_i\left(3-\ln \left(\frac{np_i+a}{n+Sa}\right)\right)^2\\
   &\quad\le \frac{2n}{(n+Sa)^2}S\cdot \frac{1}{S}\left(3-\ln \left(\frac{n/S+a}{n+Sa}\right)\right)^2\\
   &\quad= \frac{2n(3+\ln S)^2}{(n+Sa)^2}
 \end{align}
 where we have used the inequality (\ref{eq:entropy_inequality}) and Jensen's inequality due to
 \begin{align}
 &  \frac{d^2}{dx^2}\left[x\left(\ln \left(\frac{nx+a}{n+Sa}\right)-3\right)^2\right] \nonumber \\
 & \quad = \frac{n}{nx+a}\Bigg[3\left(\ln \left(\frac{nx+a}{n+Sa}\right)-3\right)\nonumber \\
 & \qquad +\frac{nx}{nx+a}\left(4-\ln \left(\frac{nx+a}{n+Sa}\right)\right)\Bigg] \\
 & \quad <0.
 \end{align}

\subsection{Proof of Lemma~\ref{lemma.biasalphalarge}}

It is well known (see, e.g. \cite[Cor. 10.4.2]{Devore--Lorentz1993}) that if $f$ is concave in $(0,1)$, then
\begin{equation}
f(x) - B_n[f](x) \geq 0,\quad 0\leq x\leq 1.
\end{equation}

Hence we focus on deriving the other bound. For concave function $f_\alpha(x) = -x^\alpha,\alpha\in (1,2) $, Taylor's polynomial of degree $5$ at $x = x_0$ takes the form
\begin{align*}
Q_5(x) & = -\frac{\alpha(\alpha-1)}{2} x_0^{\alpha-2} (x-x_0)^2 \nonumber \\
& \quad - \frac{\alpha(\alpha-1)(\alpha-2)}{6} x_0^{\alpha-3} (x-x_0)^3 \nonumber  \\
& \quad -\frac{\alpha(\alpha-1)(\alpha-2)(\alpha-3)}{24} x_0^{\alpha-4}(x-x_0)^4 \nonumber \\
& \quad - \frac{\alpha(\alpha-1)(\alpha-2)(\alpha-3)(\alpha-4)}{120} x_0^{\alpha-5} (x-x_0)^5 \nonumber \\
& \quad + \textrm{affine terms of }x
\end{align*}

We know that the Bernstein polynomial of any affine function on $[0,1]$ is the affine function itself, hence it suffices to consider the non-affine part of $Q_5(x)$. \cite[Prop. 4]{Braess--Sauer2004} showed the following results for Bernstein polynomials:
\begin{lemma}\label{lemma.brastaylor}
Let $0\leq x_0 \leq 1$. Then we have
\begin{align}
B_n[(x-x_0)^2](x_0) & =  \frac{x_0(1-x_0)}{n} \\
B_n[(x-x_0)^3](x_0) & = \frac{x_0(1-x_0)}{n^2}(1-2x_0) \\
B_n[(x-x_0)^4](x_0) & = 3 \frac{x_0^2(1-x_0)^2}{n^2} \nonumber \\
& \quad + \frac{x_0(1-x_0)}{n^3} [ 1-6x_0(1-x_0)] \\
B_n[(x-x_0)^5](x_0) & = \Bigg( 10 \frac{x_0^2(1-x_0)^2}{n^3} \nonumber \\
& \quad + \frac{x_0(1-x_0)}{n^4}[1-12 x_0(1-x_0)] \Bigg) \nonumber \\
& \qquad \times (1-2x_0)
\end{align}
\end{lemma}

Applying Lemma~\ref{lemma.qn1lower} and Lemma~\ref{lemma.brastaylor}, taking $x_0 =x$, we have the desired bound. 

\subsection{Proof of Lemma~\ref{lemma.alphalower}}

It is well known (see, e.g. \cite[Cor. 10.4.2]{Devore--Lorentz1993}) that if $f$ is concave in $(0,1)$, then
\begin{equation}
f(x) - B_n[f](x) \geq 0,\quad 0\leq x\leq 1.
\end{equation}

Hence we focus on deriving the other bound. For function $f_\alpha(x) = x^\alpha$, Taylor's polynomial of degree $3$ at $x = x_0$ takes the form
\begin{align}
Q_3(x) & = \frac{\alpha(\alpha-1)}{2} x_0^{\alpha-2} (x-x_0)^2 \nonumber \\
& \quad  + \frac{\alpha(\alpha-1)(\alpha-2)}{6} x_0^{\alpha-3} (x-x_0)^3 \nonumber \\
& \quad + \textrm{ affine terms of }x.
\end{align}

Applying Lemma~\ref{lemma.qn1lower} and Lemma~\ref{lemma.brastaylor}, taking $x_0 =x$, we have
\begin{align}
& Q_3(x) - B_n[Q_3](x) \nonumber \\
& \quad = - \frac{\alpha(\alpha-1)}{2}x^{\alpha-2} \frac{x(1-x)}{n} \nonumber \\
& \quad \quad - \frac{\alpha(\alpha-1)(\alpha-2)}{6}x^{\alpha-3} \frac{x(1-x)}{n^2}(1-2x) \\
& \quad= \frac{\alpha(1-\alpha)}{2n} x^{\alpha-2}(1-x) \left( x - \frac{2-\alpha}{3n} (1-2x) \right) \\
& \quad \geq \frac{\alpha(1-\alpha)}{2n}x^{\alpha-2} (1-x) \left( x - \frac{2-\alpha}{3n} \right).
\end{align}

Hence, we have
\begin{align}
f_\alpha(x) - B_n[f_\alpha](x) &  \geq Q_3(x) - B_n[Q_3](x) \\
&  \geq \frac{\alpha(1-\alpha)}{2n}x^{\alpha-2} (1-x) \left( x - \frac{2-\alpha}{3n} \right).
\end{align}

\subsection{Proof of Lemma~\ref{lemma.hbbiaslowerbound}}

  By setting $P=(1,0,0,\cdots,0)$, we have $H(P)=0$ and
  \begin{align}
    H(\hat{P}_B) & = -\frac{(S-1)a}{n+Sa}\ln\left(\frac{a}{n+Sa}\right) -\frac{n+a}{n+Sa}\ln\left(\frac{n+a}{n+Sa}\right) \\
    & \ge \frac{(S-1)a}{n+Sa}\ln\left(\frac{n+Sa}{a}\right),
  \end{align}
  hence we have obtained the first lower bound
  \begin{align}
  \sup_{P\in\mathcal{M}_S}\left|\bE_P H(\hat{P}_B) - H(P)\right| \ge \frac{(S-1)a}{n+Sa}\ln\left(\frac{n+Sa}{a}\right).
  \end{align}
  
If $n<Sa$, then
\begin{align}
  \sup_{P\in\mathcal{M}_S}\left|\bE_P H(\hat{P}_B) - H(P)\right|& \ge \frac{(S-1)a}{2Sa}\ln S \\
  & \geq  \frac{S-1}{2S} \ln S. 
\end{align}

If $n > 2ea$, then
\begin{align}
 \sup_{P\in\mathcal{M}_S}\left|\bE_P H(\hat{P}_B) - H(P)\right| & \ge \frac{(S-1)a}{Sa + 2ea}\ln S \\
 & = \frac{S-1}{S+2e} \ln S.  
\end{align}

From now on we assume $n\geq Sa, n\geq 2ea$. For $n\ge 15S$, it follows from applying Lemma~\ref{lemma.glower} that
  \begin{align} \label{eqn.mlepaperlowerbound}
    \sup_{P\in\mathcal{M}_S}\left|\bE_P H(\hat{P}) - H(P)\right| \ge \frac{S-1}{2n} + \frac{S^2}{20n^2} - \frac{1}{12n^2}.
  \end{align}
If $n<15S$, then it follows from applying Lemma~\ref{lemma.glower} that one can essentially take $S = \lfloor n/15 \rfloor$ in (\ref{eqn.mlepaperlowerbound}), and obtain
\begin{equation}
    \sup_{P\in\mathcal{M}_S}\left|\bE_P H(\hat{P}) - H(P)\right| \ge \frac{\lfloor n/15 \rfloor}{2n} - \frac{1}{4n}.
\end{equation}  

It follows from a refinement result of Cover and Thomas~\cite[Thm. 17.3.3]{Cover--Thomas2006} that when $| \hat{p}_{B,i} - \hat{p}_i | \leq 1/2$ for all $i$ (which is ensured by condition $n\geq Sa$), we have
\begin{align}
 |H(\hat{P}_B) - H(\hat{P})| \leq S f \left( \frac{\| \hat{P}_B - \hat{P} \|_1}{S} \right),
\end{align}
where $f(x) = -x\ln x, x \in [0,1]$. We have
\begin{align}
\frac{1}{S} \| \hat{P}_B - \hat{P} \|_1 & = \frac{1}{S} \sum_{i = 1}^S \frac{|S \hat{p}_i - 1|a}{n+Sa} \\
& \leq \frac{2(S-1)a}{S(n+Sa)}, 
\end{align}
where the last step follows from~(\ref{eq:convexcornerpoints}). Since we have assumed $n\geq 2ea$, we have $\frac{2a}{n+Sa} \leq \frac{2a}{n}\leq e^{-1}$. Since the function $f(x) = -x\ln x,x\in [0,1]$ is monotonically increasing when $x\in [0,e^{-1}]$, we have
\begin{align}
 |H(\hat{P}_B) - H(\hat{P})| \le \frac{2(S-1)a}{n+Sa}\ln\left(\frac{n+Sa}{a}\right).
\end{align}

A combination of these two inequalities yield the second lower bound
\begin{align}
&  \sup_{P\in\mathcal{M}_S}\left|\bE_P H(\hat{P}_B) - H(P)\right | \nonumber \\
& \quad \ge  \frac{S-1}{2n} + \frac{S^2}{20n^2} - \frac{1}{12n^2} - \frac{2(S-1)a}{n+Sa}\ln\left(\frac{n+Sa}{a}\right)
\end{align}
when $n\geq 15S$, and the second lower bound
\begin{align}
&  \sup_{P\in\mathcal{M}_S}\left|\bE_P H(\hat{P}_B) - H(P)\right | \nonumber  \\
& \quad \ge  \frac{\lfloor n/15 \rfloor}{2n} - \frac{1}{4n} - \frac{2(S-1)a}{n+Sa}\ln\left(\frac{n+Sa}{a}\right)
\end{align}
when $n<15S$. 

  Hence we are done by using these two lower bounds and the inequality $\max\{a,b\}\ge \frac{3a+b}{4}$.
\section{Proofs of auxiliary lemmas}

\subsection{Proof of Lemma \ref{lem_variance_p_small}}
Since $nX$ is an integer, we have $(nX)^2\ge (nX)^{2\alpha},0<\alpha<1$. Hence, for $p\leq 1/n$,
\begin{align}
  \mathsf{Var}(X^\alpha) & \le \bE X^{2\alpha} \\
  & = \frac{\bE (nX)^{2\alpha}}{n^{2\alpha}} \\
  &  \le  \frac{\bE (nX)^{2}}{n^{2\alpha}} \\
  & = \frac{(np)^2+np(1-p)}{n^{2\alpha}} \\
  &  \le \frac{2}{n^{2\alpha}}\wedge \frac{2p}{n^{2\alpha-1}},
\end{align}
which completes the proof.

\subsection{Proof of Lemma \ref{lem_variance_p_big}}
Denoting $f(p)=p^\alpha,0<\alpha<1$, we have
\begin{align}
& \mathsf{Var}(f(X)) \nonumber \\
& \quad = \bE f^2(X) - (\bE f(X))^2 \\
& \quad= \bE f^2(X) - f^2(p) + f^2(p) - (\bE f(X))^2 \\
& \quad \leq | \bE f^2(X) - f^2(p)|+ | f^2(p) - (\bE f(X)-f(p) + f(p))^2 |\\
& \quad = | \bE f^2(X) - f^2(p)|+ |(\bE f(X)-f(p))^2 \nonumber \\
& \qquad+ 2f(p) (\bE f(X)-f(p))| \\
& \quad \leq | \bE f^2(X) - f^2(p)| + |\bE f(X)-f(p)|^2 \nonumber \\
& \qquad + 2f(p)|\bE f(X)-f(p)|. \label{eqn.varianceestimatelarge}
\end{align}

Hence, it suffices to obtain bounds on $| \bE f^2(X) - f^2(p)|$ and $|\bE f(X) - f(p)|$. Denoting $r(x) = f^2(x)$, it follows from Taylor's formula and the integral representation of the remainder term that
\begin{align}
r(X) & = f^2(p) + r'(p)(X-p) + R_1(X;p) \\
R_1(X;p)& =\int_p^X (X-u)r''(u)du = \frac{1}{2}r''(\eta_X)(X-p)^2
\end{align}
where $\eta_X \in [\min\{X,p\},\max\{X,p\}]$. 

Similarly, we have
\begin{align}
f(X) & = f(p) + f'(p)(X-p) + R_2(X;p) \\
R_2(X;p) & = \int_p^X (X-u)f''(u)du = \frac{1}{2} f''(\nu_X)(X-p)^2,
\end{align}
where $\nu_X\in [\min\{X,p\},\max\{X,p\}]$. 

Taking expectation on both sides with respect to $X$, where $nX \sim \mathsf{B}(n,p)$, we have
\begin{equation}\label{eqn.u2biaslarge}
|\bE f^2(X) - f^2(p)| = |\bE R_1(X;p) |.
\end{equation}
Similarly, we have
\begin{equation}
|\bE f(X) - f(p)| =  | \bE R_2(X;p)|.
\end{equation}

It is straightforward to show that
\begin{align}
  |r''(x)|&=2\alpha(2\alpha-1)x^{2\alpha-2} \le 2x^{2\alpha-2},\\
  |f''(x)|&=\alpha(1-\alpha)x^{\alpha-2} \le \frac{1}{4}x^{\alpha-2}.
\end{align}

Now we are in the position to bound $|\bE R_1(X;p)|$ and $|\bE R_2(X;p)|$. For $|\bE R_1(X;p)|$, we have
  \begin{align}
& |\bE R_1(X;p)| \nonumber \\
& \quad\leq \bE |R_1(X;p)| \\
& \quad= \bE [|R_1(X;p)\mathbbm{1}(X\geq p/2)|] + \bE[R_1(X;p)\mathbbm{1}(X<p/2)] \\
& \quad\leq \bE \left[ 2(p/2)^{2\alpha-2} (X-p)^2\right] + \bE[R_1(X;p)\mathbbm{1}(X<p/2)] \\
& \quad\leq 8\frac{p^{2\alpha-1}}{n} + \sup_{x\leq p/2}|R_1(x;p)|\bP(nX<np/2) \\
& \quad \leq 8\frac{p^{2\alpha-1}}{n} + \sup_{x\leq p/2}|R_1(x;p)| e^{-np/8},
\end{align}
where in the last step we have used Lemma \ref{lem_chernoff}. Regarding $\sup_{x\leq p/2}|R_1(x;p)|$, for any $x\leq p/2$, we have
\begin{align}
R_1(x;p) & = \int_x^p (u-x)r''(u)du \leq \int_x^p (u-x) 2u^{2\alpha-2}du \\
 &\leq 2 \int_x^p u^{2\alpha-1}du \\
 &\leq 2 \int_0^p u^{2\alpha-1}du\\
 &= \frac{p^{2\alpha}}{\alpha}.
\end{align}

Hence, we have
\begin{equation}
|\bE R_1(X;p)| \leq \frac{8p^{2\alpha-1}}{n} + \frac{1}{\alpha}p^{2\alpha}e^{-np/8}.
\end{equation}

Analogously, we obtain the following bound for $|\bE R_2(X;p)|$:
\begin{equation}
|\bE R_2(X;p)|\leq \frac{p^{\alpha-1}}{n} + \frac{1}{4\alpha}p^\alpha e^{-np/8}.
\end{equation}

Plugging these estimates of $|\bE R_1(X;p)|$ and $|\bE R_2(X;p)|$ into (\ref{eqn.varianceestimatelarge}), we have for $p\geq 1/n$,
\begin{align}
\mathsf{Var}(X^\alpha)  &\leq \frac{8p^{2\alpha-1}}{n} + \frac{1}{\alpha}p^{2\alpha}e^{-np/8} \nonumber \\
& \quad +  \left(\frac{p^{\alpha-1}}{n} + \frac{1}{4\alpha} p^\alpha e^{-np/8} \right)^2 \nonumber \\
& \quad +2f(p)\left(\frac{p^{\alpha-1}}{n} + \frac{1}{4\alpha}p^\alpha e^{-np/8} \right)\\
&\le \frac{8p^{2\alpha-1}}{n} + \frac{1}{\alpha}p^{2\alpha}e^{-np/8} + \frac{2p^{2(\alpha-1)}}{n^2} \nonumber \\
& \quad  + \frac{1}{8\alpha^2}p^{2\alpha} e^{-np/4}+2p^\alpha\left(\frac{p^{\alpha-1}}{n} + \frac{1}{4\alpha}p^\alpha e^{-np/8} \right)\\
&\le \frac{10p^{2\alpha-1}}{n} + \frac{3}{2\alpha}p^{2\alpha}e^{-np/8} + \frac{2}{n^{2\alpha}} + \frac{1}{8\alpha^2}p^{2\alpha} e^{-np/4}\\
&\le \frac{10p^{2\alpha-1}}{n} + \frac{3}{2\alpha}\left(\frac{16\alpha}{en}\right)^{2\alpha} + \frac{2}{n^{2\alpha}} + \frac{1}{8\alpha^2}\left(\frac{8\alpha}{en}\right)^{2\alpha}.
\end{align}
where we have used the following inequality in the last step: for $x\in(0,1)$ and any $c>0$,
\begin{align}
  x^{2\alpha}e^{-cnx} \le \left(\frac{2\alpha}{cen}\right)^{2\alpha}.
\end{align}

Note that if $0<\alpha<1/2$, we can upper bound $\frac{10p^{2\alpha-1}}{n}$ by $\frac{10}{n^{2\alpha}}$, since we have constrained $p\geq \frac{1}{n}$.

\bibliographystyle{IEEEtran}
\bibliography{di}

\newcommand{\noopsort}[1]{}
\begin{thebibliography}{10}
\providecommand{\url}[1]{#1}
\csname url@samestyle\endcsname
\providecommand{\newblock}{\relax}
\providecommand{\bibinfo}[2]{#2}
\providecommand{\BIBentrySTDinterwordspacing}{\spaceskip=0pt\relax}
\providecommand{\BIBentryALTinterwordstretchfactor}{4}
\providecommand{\BIBentryALTinterwordspacing}{\spaceskip=\fontdimen2\font plus
\BIBentryALTinterwordstretchfactor\fontdimen3\font minus
  \fontdimen4\font\relax}
\providecommand{\BIBforeignlanguage}[2]{{%
\expandafter\ifx\csname l@#1\endcsname\relax
\typeout{** WARNING: IEEEtran.bst: No hyphenation pattern has been}%
\typeout{** loaded for the language `#1'. Using the pattern for}%
\typeout{** the default language instead.}%
\else
\language=\csname l@#1\endcsname
\fi
#2}}
\providecommand{\BIBdecl}{\relax}
\BIBdecl

\bibitem{Olsen--Meyer--Bontempi2009impact}
C.~Olsen, P.~E. Meyer, and G.~Bontempi, ``On the impact of entropy estimation
  on transcriptional regulatory network inference based on mutual
  information,'' \emph{EURASIP Journal on Bioinformatics and Systems Biology},
  vol. 2009, no.~1, p. 308959, 2009.

\bibitem{Pluim--Maintz--Viergever2003mutual}
J.~P. Pluim, J.~A. Maintz, and M.~A. Viergever, ``Mutual-information-based
  registration of medical images: a survey,'' \emph{Medical Imaging, IEEE
  Transactions on}, vol.~22, no.~8, pp. 986--1004, 2003.

\bibitem{Viola--Wells1997alignment}
P.~Viola and W.~M. Wells~III, ``Alignment by maximization of mutual
  information,'' \emph{International journal of computer vision}, vol.~24,
  no.~2, pp. 137--154, 1997.

\bibitem{Batina--Gierlichs--Prouff--Rivain--Standaert--Veyrat2011mutual}
L.~Batina, B.~Gierlichs, E.~Prouff, M.~Rivain, F.-X. Standaert, and
  N.~Veyrat-Charvillon, ``Mutual information analysis: a comprehensive study,''
  \emph{Journal of Cryptology}, vol.~24, no.~2, pp. 269--291, 2011.

\bibitem{Hill1973diversity}
M.~O. Hill, ``Diversity and evenness: a unifying notation and its
  consequences,'' \emph{Ecology}, vol.~54, no.~2, pp. 427--432, 1973.

\bibitem{Franchini--Its--Korepin2008Renyi}
F.~Franchini, A.~Its, and V.~Korepin, ``R\'enyi entropy of the {XY} spin
  chain,'' \emph{Journal of Physics A: Mathematical and Theoretical}, vol.~41,
  no.~2, p. 025302, 2008.

\bibitem{Shannon1948}
C.~E. Shannon, ``A mathematical theory of communication,'' \emph{The Bell
  System Technical Journal}, vol.~27, pp. 379--423, 623--656, 1948.

\bibitem{Breiman--Friedman--Stone--Olshen1984classification}
L.~Breiman, J.~Friedman, C.~J. Stone, and R.~A. Olshen, \emph{Classification
  and regression trees}.\hskip 1em plus 0.5em minus 0.4em\relax CRC press,
  1984.

\bibitem{Renyi1961measures}
A.~R\'enyi, ``On measures of entropy and information,'' in \emph{Fourth
  Berkeley Symposium on Mathematical Statistics and Probability}, 1961, pp.
  547--561.

\bibitem{Jiao--Venkat--Han--Weissman2015minimax}
J.~Jiao, K.~Venkat, Y.~Han, and T.~Weissman, ``Minimax estimation of
  functionals of discrete distributions,'' \emph{Information Theory, IEEE
  Transactions on}, vol.~61, no.~5, pp. 2835--2885, 2015.

\bibitem{Hajek1970characterization}
J.~H{\'a}jek, ``A characterization of limiting distributions of regular
  estimates,'' \emph{Zeitschrift f{\"u}r Wahrscheinlichkeitstheorie und
  verwandte Gebiete}, vol.~14, no.~4, pp. 323--330, 1970.

\bibitem{Hajek1972local}
------, ``Local asymptotic minimax and admissibility in estimation,'' in
  \emph{Proceedings of the sixth Berkeley symposium on mathematical statistics
  and probability}, vol.~1, 1972, pp. 175--194.

\bibitem{LeCam1986asymptotic}
L.~Le~Cam, \emph{Asymptotic methods in statistical decision theory}.\hskip 1em
  plus 0.5em minus 0.4em\relax Springer, 1986.

\bibitem{Paninski2003}
L.~Paninski, ``Estimation of entropy and mutual information,'' \emph{Neural
  Computation}, vol.~15, no.~6, pp. 1191--1253, 2003.

\bibitem{Paninski2004}
------, ``Estimating entropy on $m$ bins given fewer than $m$ samples,''
  \emph{Information Theory, IEEE Transactions on}, vol.~50, no.~9, pp.
  2200--2203, 2004.

\bibitem{Valiant--Valiant2011}
G.~Valiant and P.~Valiant, ``Estimating the unseen: an $n/\log n$-sample
  estimator for entropy and support size, shown optimal via new {CLT}s,'' in
  \emph{Proceedings of the 43rd annual ACM symposium on Theory of
  computing}.\hskip 1em plus 0.5em minus 0.4em\relax ACM, 2011, pp. 685--694.

\bibitem{Valiant--Valiant2013estimating}
P.~Valiant and G.~Valiant, ``Estimating the unseen: improved estimators for
  entropy and other properties,'' in \emph{Advances in Neural Information
  Processing Systems}, 2013, pp. 2157--2165.

\bibitem{Valiant--Valiant2011power}
G.~Valiant and P.~Valiant, ``The power of linear estimators,'' in
  \emph{Foundations of Computer Science (FOCS), 2011 IEEE 52nd Annual Symposium
  on}.\hskip 1em plus 0.5em minus 0.4em\relax IEEE, 2011, pp. 403--412.

\bibitem{Jiao--Venkat--Han--Weissman2014beyond}
J.~Jiao, K.~Venkat, Y.~Han, and T.~Weissman, ``Beyond maximum likelihood: from
  theory to practice,'' \emph{arXiv preprint arXiv:1409.7458}, 2014.

\bibitem{jiao2016beyond}
J.~Jiao, Y.~Han, and T.~Weissman, ``Beyond maximum likelihood: Boosting the
  {C}how-{L}iu algorithm for large alphabets,'' in \emph{Signals, Systems and
  Computers, 2016 50th Asilomar Conference on}.\hskip 1em plus 0.5em minus
  0.4em\relax IEEE, 2016, pp. 321--325.

\bibitem{Wu--Yang2014minimax}
Y.~Wu and P.~Yang, ``Minimax rates of entropy estimation on large alphabets via
  best polynomial approximation,'' \emph{IEEE Transactions on Information
  Theory}, vol.~62, no.~6, pp. 3702--3720, 2016.

\bibitem{Acharya--Orlitsky--Suresh--Tyagi2014complexity}
J.~Acharya, A.~Orlitsky, A.~T. Suresh, and H.~Tyagi, ``Estimating {R}\'enyi
  entropy of discrete distributions,'' \emph{IEEE Transactions on Information
  Theory}, vol.~63, no.~1, pp. 38--56, 2017.

\bibitem{wu2015chebyshev}
Y.~Wu and P.~Yang, ``Chebyshev polynomials, moment matching, and optimal
  estimation of the unseen,'' \emph{arXiv preprint arXiv:1504.01227}, 2015.

\bibitem{Han--Jiao--Weissman2016minimaxdivergence}
Y.~Han, J.~Jiao, and T.~Weissman, ``Minimax rate-optimal estimation of
  divergences between discrete distributions,'' \emph{arXiv preprint
  arXiv:1605.09124}, 2016.

\bibitem{jiao2016minimaxl1distance}
J.~Jiao, Y.~Han, and T.~Weissman, ``Minimax estimation of the ${L}_1$
  distance,'' in \emph{Information Theory (ISIT), 2016 IEEE International
  Symposium on}.\hskip 1em plus 0.5em minus 0.4em\relax IEEE, 2016, pp.
  750--754.

\bibitem{bu2016estimation}
Y.~Bu, S.~Zou, Y.~Liang, and V.~V. Veeravalli, ``Estimation of {KL} divergence:
  Optimal minimax rate,'' \emph{arXiv preprint arXiv:1607.02653}, 2016.

\bibitem{orlitsky2016optimal}
A.~Orlitsky, A.~T. Suresh, and Y.~Wu, ``Optimal prediction of the number of
  unseen species,'' \emph{Proceedings of the National Academy of Sciences}, p.
  201607774, 2016.

\bibitem{wu2016sample}
Y.~Wu and P.~Yang, ``Sample complexity of the distinct elements problem,''
  \emph{arXiv preprint arXiv:1612.03375}, 2016.

\bibitem{Miller1955}
G.~A. Miller, ``Note on the bias of information estimates,'' \emph{Information
  {T}heory in {P}sychology: {P}roblems and {M}ethods}, vol.~2, pp. 95--100,
  1955.

\bibitem{Carlton1969bias}
A.~Carlton, ``On the bias of information estimates.'' \emph{Psychological
  Bulletin}, vol.~71, no.~2, p. 108, 1969.

\bibitem{Grassberger1988finite}
P.~Grassberger, ``Finite sample corrections to entropy and dimension
  estimates,'' \emph{Physics Letters A}, vol. 128, no.~6, pp. 369--373, 1988.

\bibitem{Zahl1977jackknifing}
S.~Zahl, ``Jackknifing an index of diversity,'' \emph{Ecology}, vol.~58, no.~4,
  pp. 907--913, 1977.

\bibitem{Hausser--Strimmer2009entropy}
J.~Hausser and K.~Strimmer, ``Entropy inference and the {J}ames-{S}tein
  estimator, with application to nonlinear gene association networks,''
  \emph{The Journal of Machine Learning Research}, vol.~10, pp. 1469--1484,
  2009.

\bibitem{Chao--Shen2003nonparametric}
A.~Chao and T.-J. Shen, ``Nonparametric estimation of {S}hannon's index of
  diversity when there are unseen species in sample,'' \emph{Environmental and
  ecological statistics}, vol.~10, no.~4, pp. 429--443, 2003.

\bibitem{Daub--Steuer--Selbig--Kloska2004}
C.~O. Daub, R.~Steuer, J.~Selbig, and S.~Kloska, ``Estimating mutual
  information using {B}-spline functions--an improved similarity measure for
  analysing gene expression data,'' \emph{BMC bioinformatics}, vol.~5, no.~1,
  p. 118, 2004.

\bibitem{Grassberger2008entropy}
P.~Grassberger, ``Entropy estimates from insufficient samplings,'' \emph{arXiv
  preprint physics/0307138}, 2008.

\bibitem{Vinck--Battaglia--Balakirsky--Vinck--Pennartz2012estimation}
M.~Vinck, F.~P. Battaglia, V.~B. Balakirsky, A.~H. Vinck, and C.~M. Pennartz,
  ``Estimation of the entropy based on its polynomial representation,''
  \emph{Physical Review E}, vol.~85, no.~5, p. 051139, 2012.

\bibitem{Schurmann--Grassberger1996entropy}
T.~Sch{\"u}rmann and P.~Grassberger, ``Entropy estimation of symbol
  sequences,'' \emph{Chaos: An Interdisciplinary Journal of Nonlinear Science},
  vol.~6, no.~3, pp. 414--427, 1996.

\bibitem{Schober2013some}
S.~Schober, ``Some worst-case bounds for {B}ayesian estimators of discrete
  distributions,'' in \emph{Information Theory Proceedings (ISIT), 2013 IEEE
  International Symposium on}.\hskip 1em plus 0.5em minus 0.4em\relax IEEE,
  2013, pp. 2194--2198.

\bibitem{Wolpert--Wolf1995}
D.~H. Wolpert and D.~R. Wolf, ``Estimating functions of probability
  distributions from a finite set of samples,'' \emph{Physical Review E},
  vol.~52, no.~6, p. 6841, 1995.

\bibitem{Holste--Grosse--Herzel1998bayes}
D.~Holste, I.~Grosse, and H.~Herzel, ``Bayes' estimators of generalized
  entropies,'' \emph{Journal of Physics A: Mathematical and General}, vol.~31,
  no.~11, p. 2551, 1998.

\bibitem{Nemenman--Shafee--Bialek2002entropy}
I.~Nemenman, F.~Shafee, and W.~Bialek, ``Entropy and inference, revisited,''
  \emph{Advances in neural information processing systems}, vol.~1, pp.
  471--478, 2002.

\bibitem{Archer--Park--Pillow2012bayesian}
E.~Archer, I.~M. Park, and J.~W. Pillow, ``Bayesian estimation of discrete
  entropy with mixtures of stick-breaking priors,'' in \emph{Advances in Neural
  Information Processing Systems}, 2012, pp. 2015--2023.

\bibitem{Nemenman--Bialek--vanSteveninck2004entropy}
I.~Nemenman, W.~Bialek, and R.~d.~R. van Steveninck, ``Entropy and information
  in neural spike trains: Progress on the sampling problem,'' \emph{Physical
  Review E}, vol.~69, no.~5, p. 056111, 2004.

\bibitem{Nemenman2011coincidences}
I.~Nemenman, ``Coincidences and estimation of entropies of random variables
  with large cardinalities,'' \emph{Entropy}, vol.~13, no.~12, pp. 2013--2023,
  2011.

\bibitem{Lehmann--Casella1998theory}
E.~L. Lehmann and G.~Casella, \emph{Theory of point estimation}.\hskip 1em plus
  0.5em minus 0.4em\relax Springer, 1998, vol.~31.

\bibitem{Han--Jiao--Weissman2015minimaxdistribution}
Y.~Han, J.~Jiao, and T.~Weissman, ``Minimax estimation of discrete
  distributions under $\ell_1$ loss,'' \emph{IEEE Transactions on Information
  Theory}, vol.~61, no.~11, pp. 6343--6354, 2015.

\bibitem{Ibragimov--Hasminskii1981}
I.~A. Ibragimov and R.~Z. Has'~minskii, \emph{Statistical estimation:
  asymptotic theory}.\hskip 1em plus 0.5em minus 0.4em\relax Springer-Verlag
  New York, 1981, vol.~2.

\bibitem{Antos--Kontoyiannis2001convergence}
A.~Antos and I.~Kontoyiannis, ``Convergence properties of functional estimates
  for discrete distributions,'' \emph{Random Structures \& Algorithms},
  vol.~19, no. 3-4, pp. 163--193, 2001.

\bibitem{Efron--Thisted1976}
\BIBentryALTinterwordspacing
B.~Efron and R.~Thisted, ``\BIBforeignlanguage{English}{Estimating the number
  of unsen species: How many words did {S}hakespeare know?}''
  \emph{\BIBforeignlanguage{English}{Biometrika}}, vol.~63, no.~3, pp. pp.
  435--447, 1976. [Online]. Available:
  \url{http://www.jstor.org/stable/2335721}
\BIBentrySTDinterwordspacing

\bibitem{Efron--Stein1981jackknife}
B.~Efron and C.~Stein, ``The jackknife estimate of variance,'' \emph{The Annals
  of Statistics}, pp. 586--596, 1981.

\bibitem{Boucheron--Lugosi--Massart2013}
S.~Boucheron, G.~Lugosi, and P.~Massart, \emph{Concentration inequalities: A
  nonasymptotic theory of independence}.\hskip 1em plus 0.5em minus 0.4em\relax
  Oxford University Press, 2013.

\bibitem{Barndorff--Cox1989asymptotic}
O.~E. Barndorff-Nielsen and D.~R. Cox, \emph{Asymptotic techniques for use in
  statistics}.\hskip 1em plus 0.5em minus 0.4em\relax Chapman \& Hall, 1989.

\bibitem{Small2010expansions}
C.~G. Small, \emph{Expansions and asymptotics for statistics}.\hskip 1em plus
  0.5em minus 0.4em\relax CRC Press, 2010.

\bibitem{Efron1979bootstrap}
B.~Efron, ``Bootstrap methods: another look at the {J}ackknife,'' \emph{The
  Annals of Statistics}, pp. 1--26, 1979.

\bibitem{Hall1992bootstrap}
P.~Hall, \emph{The bootstrap and Edgeworth expansion}.\hskip 1em plus 0.5em
  minus 0.4em\relax Springer Science \& Business Media, 1992.

\bibitem{Harris1975}
B.~Harris, ``The statistical estimation of entropy in the non-parametric
  case,'' DTIC Document, Tech. Rep., 1975.

\bibitem{Jacquet--Szpankowski1999entropy}
P.~Jacquet and W.~Szpankowski, ``Entropy computations via analytic
  depoissonization,'' \emph{Information Theory, IEEE Transactions on}, vol.~45,
  no.~4, pp. 1072--1081, 1999.

\bibitem{Flajolet1999singularity}
P.~Flajolet, ``Singularity analysis and asymptotics of {B}ernoulli sums,''
  \emph{Theoretical Computer Science}, vol. 215, no.~1, pp. 371--381, 1999.

\bibitem{Cichon--Golkbiewski--Kardas--Klonowski}
J.~Cicho{\'n}, Z.~Golkebiewski, M.~Kardas, and M.~Klonowski, ``On delta-method
  of moments and probabilistic sums,'' 2013.

\bibitem{Bernstein1958collected}
S.~Bernstein, ``Collected works: Vol 1. constructive theory of functions
  (1905-1930), {E}nglish translation,'' \emph{Atomic Energy Commission,
  Springfield, Va}, 1958.

\bibitem{Paltanea2004}
R.~Paltanea, \emph{Approximation theory using positive linear operators}.\hskip
  1em plus 0.5em minus 0.4em\relax Springer, 2004.

\bibitem{Ditzian--Totik1987}
Z.~Ditzian and V.~Totik, \emph{Moduli of smoothness}.\hskip 1em plus 0.5em
  minus 0.4em\relax Springer, 1987.

\bibitem{Gonska1979quantitative}
H.~H. Gonska, \emph{Quantitative Aussagen zur Approximation durch positive
  lineare Operatoren}.\hskip 1em plus 0.5em minus 0.4em\relax Gesamthochschule
  Duisburg, 1979.

\bibitem{Han--Jiao--Weissman2015adaptive}
Y.~Han, J.~Jiao, and T.~Weissman, ``Adaptive estimation of {S}hannon entropy,''
  in \emph{Information Theory (ISIT), 2015 IEEE International Symposium
  on}.\hskip 1em plus 0.5em minus 0.4em\relax IEEE, 2015, pp. 1372--1376.

\bibitem{Strukov--Timan1977mathematical}
L.~Strukov and A.~Timan, ``Mathematical expectation of continuous functions of
  random variables. smoothness and variance,'' \emph{Siberian Mathematical
  Journal}, vol.~18, no.~3, pp. 469--474, 1977.

\bibitem{Walk1980probabilistic}
H.~Walk, ``Probabilistic methods in the approximation by linear positive
  operators,'' in \emph{Indagationes Mathematicae (Proceedings)}, vol.~83,
  no.~4.\hskip 1em plus 0.5em minus 0.4em\relax Elsevier, 1980, pp. 445--455.

\bibitem{Hahn1981note}
L.~Hahn \emph{et~al.}, ``A note on stochastic methods in connection with
  approximation theorems for positive linear operators,'' \emph{Pacific J.
  Math}, vol. 101, pp. 307--319, 1981.

\bibitem{Braess--Sauer2004}
D.~Braess and T.~Sauer, ``Bernstein polynomials and learning theory,''
  \emph{Journal of Approximation Theory}, vol. 128, no.~2, pp. 187--206, 2004.

\bibitem{Diaconis--Zabell1991closed}
P.~Diaconis and S.~Zabell, ``Closed form summation for classical distributions:
  variations on a theme of de moivre,'' \emph{Statistical Science}, pp.
  284--302, 1991.

\bibitem{Feller2008introduction}
W.~Feller, \emph{An introduction to probability theory and its
  applications}.\hskip 1em plus 0.5em minus 0.4em\relax John Wiley \& Sons,
  2008, vol.~2.

\bibitem{Tsybakov2008}
A.~Tsybakov, \emph{Introduction to Nonparametric Estimation}.\hskip 1em plus
  0.5em minus 0.4em\relax Springer-Verlag, 2008.

\bibitem{Gill--Levit1995applications}
R.~D. Gill and B.~Y. Levit, ``Applications of the van {T}rees inequality: a
  {B}ayesian {C}ram{\'e}r-{R}ao bound,'' \emph{Bernoulli}, pp. 59--79, 1995.

\bibitem{Paltanea2008some}
R.~Paltanea, ``On some constants in approximation by {B}ernstein operators,''
  \emph{General Mathematics}, vol.~16, no.~4, p. 137–148, 2008.

\bibitem{Devore--Lorentz1993}
R.~A. DeVore and G.~G. Lorentz, \emph{Constructive approximation}.\hskip 1em
  plus 0.5em minus 0.4em\relax Springer, 1993, vol. 303.

\bibitem{Totik1994approximation}
V.~Totik, ``Approximation by {B}ernstein polynomials,'' \emph{American Journal
  of Mathematics}, pp. 995--1018, 1994.

\bibitem{Ditzian2007}
Z.~Ditzian, ``Polynomial approximation and $\omega^r_\varphi(f,t)$ twenty years
  later,'' \emph{Surveys in Approximation Theory}, vol.~3, pp. 106--151, 2007.

\bibitem{Wald1950statistical}
A.~Wald, \emph{Statistical decision functions.}\hskip 1em plus 0.5em minus
  0.4em\relax Wiley, 1950.

\bibitem{joag-dev1983}
K.~Joag-Dev and F.~Proschan, ``Negative association of random variables with
  applications,'' \emph{The Annals of Statistics}, vol.~11, no.~1, pp.
  286--295, March 1983.

\bibitem{batir2008inequalities}
N.~Batir, ``Inequalities for the gamma function,'' \emph{Archiv der
  Mathematik}, vol.~91, no.~6, pp. 554--563, 2008.

\bibitem{mitzenmacher2005probability}
M.~Mitzenmacher and E.~Upfal, \emph{Probability and computing: Randomized
  algorithms and probabilistic analysis}.\hskip 1em plus 0.5em minus
  0.4em\relax Cambridge University Press, 2005.

\bibitem{Cover--Thomas2006}
T.~M. Cover and J.~A. Thomas, \emph{Elements of Information Theory},
  2nd~ed.\hskip 1em plus 0.5em minus 0.4em\relax New York: Wiley, 2006.

\end{thebibliography}

\begin{IEEEbiographynophoto}{Jiantao Jiao}
(S'13) received the B.Eng. degree with the highest honor in Electronic Engineering from Tsinghua University, Beijing, China in 2012, and a Master's degree in Electrical Engineering from Stanford University in 2014. He is currently working towards the Ph.D. degree in the Department of Electrical Engineering at Stanford University. He is a recipient of the Stanford Graduate Fellowship (SGF). His research interests include information theory and statistical signal processing, with applications in communication, control,
computation, networking, data compression, and learning. 
\end{IEEEbiographynophoto}

\begin{IEEEbiographynophoto}{Kartik Venkat}
(S'12) is currently a Research Associate in New York at PDT Partners, a quantitative investment manager.  Kartik's research interests include statistical inference, information theory, machine learning and their inter-connections. Kartik received his Ph.D. in Electrical Engineering from Stanford University in 2015. Kartik also received a Masterâ€™s Degree from the same university in 2012, and a Bachelorâ€™s degree from the Indian Institute of Technology Kanpur in 2010, both in Electrical Engineering. Kartik was the recipient of the Thomas M. Cover Dissertation Award in 2016 awarded by the IEEE. Kartik was named the 2015 Marconi Society Paul Baran Young Scholar. His other honors include the Jack Keil Wolf Student Best Paper Award at the 2012 International Symposium on Information Theory, a Stanford Graduate Fellowship for Engineering and Sciences, the Numerical Technologies Founders Graduate Prize, and the National Talent Scholarship awarded by the Government of India. 
\end{IEEEbiographynophoto}

\begin{IEEEbiographynophoto}{Yanjun Han}
(S'14) received his B.Eng. degree with the highest honor in
Electronic Engineering from Tsinghua University, Beijing, China in 2015, and a Master's degree in Electrical Engineering from Stanford University in 2017. He is currently working towards the Ph.D. degree in the Department of Electrical Engineering at Stanford University. His research interests include information theory and statistics, with applications in communications, data compression, and learning.
\end{IEEEbiographynophoto}

\begin{IEEEbiographynophoto}{Tsachy Weissman}
(S'99-M'02-SM'07-F'13) graduated summa cum laude with a
B.Sc. in electrical engineering from the Technion in 1997, and earned
his Ph.D. at the same place in 2001. He then worked at Hewlett-Packard
Laboratories with the information theory group until 2003, when he joined
Stanford University, where he is Professor of Electrical
Engineering and incumbent of the
STMicroelectronics chair in the School of Engineering.
He has spent leaves at the Technion, and at ETH Zurich.

Tsachy's research is focused on information theory, statistical signal
processing, the interplay between them, and their applications.

He is recipient of several best paper awards, and prizes for excellence in research.

He served on the editorial board of the \textsc{IEEE Transactions on Information Theory} from Sept. 2010 to Aug. 2013, and currently serves on the editorial board of Foundations and Trends in Communications and Information Theory.
\end{IEEEbiographynophoto}

\end{document}